\newcommand{\mytodo}[2]{\todo[size=\tiny, color=#1!50!white]{#2}\xspace}
\newcommand{\myinlinetodo}[2]{\todo[size=\small, color=#1!50!white, inline]{#2}\xspace}
\newcommand{\dkcom}[1]{\mytodo{purple}{#1}}
\newcommand{\fpcom}[1]{\mytodo{green}{#1}}
\newcommand{\osinline}[1]{\myinlinetodo{red}{#1}}
\title{Complexity of the Steiner Network Problem with Respect to the Number of Terminals}%A Sample Article for the LIPIcs series\footnote{This work was partially supported by someone.}\footnote{A full version of the paper is available at \cite{}, \url{XXX}}}
\author[1,2]{Eduard Eiben\thanks{%Eduard Eiben was s
		Supported by Pareto-Optimal Parameterized Algorithms (ERC Starting Grant 715744) and by the Austrian Science Fund (FWF, projects P26696 and W1255-N23).}}
\author[2,3]{Dušan Knop\thanks{Supported by grant NFR MULTIVAL and the project P202/12/G061 of GA ČR.}}
\author[2]{Fahad Panolan}
\author[4]{Ondřej~Suchý\thanks{Supported by grant 17-20065S of the Czech Science Foundation.}}
\affil[1]{Algorithms and Complexity Group, TU 
	Wien, Vienna, Austria%\\
  %\texttt{eiben@ac.tuwien.ac.at}
}
\affil[2]{Department of Informatics, University of Bergen, Bergen, Norway\newline
% 	\texttt{eduard.eiben@uib.no}}\\
	\texttt{\{eduard.eiben, dusan.knop, fahad.panolan\}@uib.no}}
\affil[3]{Department of Applied Mathematics, Faculty of Mathematics and Physics,\newline Charles University, Prague, Czech Republic}
\affil[4]{Department of Theoretical Computer Science, Faculty of Information Technology,\newline Czech Technical University in Prague, Prague, Czech Republic\newline
	\texttt{ondrej.suchy@fit.cvut.cz}}
\newcommand{\OhOp}[1]{O\mathopen{}\mathclose\bgroup\left( #1 \aftergroup\egroup\right)}
\DeclareMathOperator{\tw}{\operatorname{tw}}
\DeclareMathOperator{\inc}{\operatorname{inc}}
\DeclareMathOperator{\adj}{\operatorname{adj}}
\newcommand{\QQQ}{\mathcal{Q}}
\newcommand{\degGr}[1]{\deg_{#1}}
\newcommand{\indegGr}[1]{\degGr{#1}^-}
\newcommand{\outdegGr}[1]{\degGr{#1}^+}
\DeclareMathOperator{\sym}{sym}
\newcommand{\prob}[3]{
\begin{center}
\begin{tabularx}{\textwidth}{|llXr|}
	\hline
	\hspace{1pt}\rule{0pt}{6pt}&\multicolumn{2}{l}{#1}&\\
	&{\bf Input:\enspace}&{#2}&\hspace*{1pt}\\
	&{\bf Question:\enspace}&{#3\rule{0pt}{1pt}}& \\
	\hline
\end{tabularx}
\end{center}
}
\theoremstyle{plain}
\newtheorem{theorem}{Theorem}
\newtheorem{lemma}[theorem]{Lemma}
\newtheorem{corollary}[theorem]{Corollary}
\newtheorem{claim}{Claim}
\newtheorem{observation}[theorem]{Observation}
\newtheorem{proposition}[theorem]{Proposition}
\newenvironment{claimproof}[1]{\par\noindent\textbf{Proof:}\space#1}{\hfill $\vartriangleleft$}
\theoremstyle{definition}
\newtheorem{definition}[theorem]{Definition}
\newtheorem{constr}{Construction}
\renewcommand{\subparagraph}[1]{\smallskip\noindent\textbf{#1}}
\newcommand{\sv}[1]{}
\newcommand{\lv}[1]{#1}
\newcommand{\toappendix}[1]{%
%  \gappto{\appendixText}{
      %{
	  #1
	  %}
%  }
}
\newcommand{\appmark}{}%{$\left({\star}\right)$}
\begin{document}

\sloppy
\maketitle

\begin{abstract}
In the \textsc{Directed Steiner Network} problem we are 
given an arc-weighted digraph $G$, a set of terminals $T \subseteq V(G)$, and an (unweighted) directed request graph $R$ with $V(R)=T$. Our task is to output a subgraph $G' \subseteq G$ of the minimum cost such that there is a directed path from $s$ to $t$ in $G'$ for all $st \in A(R)$.

It is known that the problem can be solved in time $|V(G)|^{O(|A(R)|)}$ [Feldman\&Ruhl,  SIAM J. Comput. 2006] and cannot be solved in time $|V(G)|^{o(|A(R)|)}$ even if $G$ is planar, unless Exponential-Time Hypothesis (ETH) fails [Chitnis et al., SODA 2014]. However, as this reduction (and other reductions showing hardness of the problem) only shows that the problem cannot be solved in time $|V(G)|^{o(|T|)}$ unless ETH fails, there is a significant gap in the complexity with respect to $|T|$ in the exponent.

We show that \textsc{Directed Steiner Network} is solvable in time $f(R)\cdot |V(G)|^{O(c_g \cdot |T|)}$, where $c_g$ is a constant depending solely on the genus of $G$ and $f$ is a computable function.
%\textcolor{purple}{Further, we show that the problem is solvable in time $f(R)\cdot |V(G)|^{O(|T|^{3/2})}$ for some function $f$ if the input graph $G$ excludes a fixed graph as a minor.}
We complement this result by showing that there is no $f(R)\cdot |V(G)|^{o(|T|^2/ \log |T|)}$ algorithm for any function $f$ for the problem on general graphs, unless ETH fails.
\end{abstract}

%\listoftodos

\section{Introduction}
\textsc{Steiner Tree} is one of the most fundamental and well studied problems in combinatorial optimization.
The input of \textsc{Steiner Tree} is an edge-weighted undirected graph $G$ and a set $T \subseteq V(G)$ of terminals. Here the task is to find a least cost connected subgraph $G'$ of $G$  containing all the terminals.
The problem is known to be NP-complete, and in fact was one of Karp’s original list~\cite{Karp72} of 21 NP-complete problems.  The problem is known to be APX-complete, even when the input graph is a complete graph and all edge weights are $1$ or $2$~\cite{BernP89}. On the other hand, the problem admits a constant factor approximation algorithm and current best approximation ratio is less than $1.39$~\cite{ByrkaGRS13}.
For an overview of the results and applications of \textsc{Steiner Tree}, the reader is referred to monographs~\cite{Cieslik98,HwangRW92,PromelS02}.

% The Steiner problem is a classical problem of theoretical computer science and a fundamental problem of network design. Most generally it can be formulated as follows: Given a set of interesting objects in some environment or a network, determine the most efficient way to connect them. The study of \textsc{Steiner Tree} in graphs goes back to Hakimi~\cite{Hakimi71} (the problem was also independently formulated by Levin~\cite{Levin71}), who showed that \textsc{Clique} can be reduced to \textsc{Steiner Tree} (the theory of NP-hardness was not known yet).

%\todo[inline]{applications may be even before}
%Applications can be found in VLSI routing~\cite{KahngRobins95}, network routing in general~\cite{KortePromelSteger90}, phylogenetic tree reconstruction~\cite{HwangRichardsWinter92} and other areas. It was also the topic of the 11th DIMACS Implementation Challenge. We refer the reader to one of many books devoted to \textsc{Steiner Tree} for further applications~\cite{Cieslik98,HwangRW92,PromelS02}.
%
%Approximation results....?\\

\textsc{Steiner Tree} is well studied in parameterized complexity.
The most natural parameter for the problem is the number of terminals.
The first FPT-algorithm for the problem is the $\OhOp{3^{|T|}\cdot n +2^{|T|} \cdot n^2 + n(n \log n+m)}$-time algorithm of Dreyfus and Wagner~\cite{DW72} (independently found by Levin~\cite{Levin71}) in 1970s; here and on $n$ denotes $|V(G)|$ and $m$ denotes $|E(G)|$.
This algorithm, as well as its later improvements~\cite{EricksonMV87,FKMRRW07,BHKK07} subsequently approaching the $O(2^{|T|}\operatorname{poly}(n+m))$ running time,
%\footnote{The $O^*()$ notation suppresses factors polynomial in the input size.}
use exponential space.
The running time of $O(2^{|T|}\operatorname{poly}(n+m))$ is believed to be optimal assuming Set Cover Conjecture~\cite{CyganDLMNOPSW16}.
There has been many studies for designing algorithms with less space complexity.
A polynomial space single exponential time parameterized algorithm was designed recently by Fomin et al.~\cite{FominKLPS15}.
 
%Polynomial space FPT-algorithms appeared only recently. The one by Nederlof~\cite{Nederlof13} applies to the unweighted variant and the variant where the weights are bounded by a constant and achieves $O^*(2^{|T|})$ time. The algorithm by Fomin et al.~\cite{FominKLPS15} achieves running time $O(7.97^{|T|}\cdot n^4\cdot \log W)$ for edge weights in~$\{1, \ldots, W\}$.

%-----------------------------------------
{\sc Steiner Tree} can be generalized to digraphs.  
%In directed graphs 
There are many variants of Steiner-type problems on digraphs. The two most natural variants are \textsc{Directed Steiner Tree} (DST) and \textsc{Strongly Connected Steiner Subgraph} (SCSS). 
In DST, we are given an arc-weighted directed graph $G$, a set $T \subseteq V(G)$ of terminals, and root vertex $r\in V(G)$. Our task is to find a least cost subgraph $G'$ of $G$ such that for every $t\in T$, $t$ is reachable from $r$ in $G'$.
In SCSS, the input is an arc-weighted directed graph $G$ and a set $T \subseteq V(G)$ of terminals.
The task is to find a least cost subgraph $G'$ of $G$ such that  for every $s,t \in T$, there are directed paths from $s$ to $t$ and $t$ to $s$ in $G'$. That is, $G'$ is a least cost strongly connected subgraph containing all the terminals.
%In both case one is given an arc-weighted directed graph $G$ and a set $T \subseteq V(G)$ of terminals. In DST one is further given a distinguished root $r$ and the task is to find the least cost subgraph of $G$ containing a directed path from $r$ to $t$ for every $t \in T$. In SCSS the task is to the least cost subgraph of $G$ containing a directed path from $s$ to $t$ for every $s,t \in T$, that is, to find the least cost strongly connected subgraph containing all the terminals.
A common generalization of DST and SCSS is \textsc{Directed Steiner Network} (DSN).
In DSN, we are given an arc-weighted digraph $G$, a set $T \subseteq V(G)$ of terminals and a digraph $R$ on $T$. The task is to find a least cost subgraph $H$ of $G$ which realizes all paths prescribed by the arcs of $R$.
That is, for any arc $st\in A(R)$, there is a directed path from $s$ to $t$ in $H$.
Observe that in DSN, request graphs $R$ and $R'$ yield the same set of solutions if their transitive closures are the same.
DST is a special case of DSN where $R$ is single out-tree on $T\cup \{r\}$ with $r$ being the root and $T$ being the set of leaves. Similarly, SCSS is a special case of DSN where $R$ is a single directed cycle on the set of terminals $T$.

%A common generalization of DST and SCSS is the \textsc{Directed Steiner Network} problem (DSN), where given an arc-weighted directed graph $G$ and a set $T \subseteq V(G)$ of terminals and a directed graph $R$ on the set of terminals and the task is find a minimum-cost subgraph of $G$ which realizes all paths prescribed by the arcs of $R$, that is, in which there is a path from $s$ to $t$ for every $(s,t) \in A(R)$.
%This graph $R$ is called \emph{the request graph}.
%DST can be viewed as DSN restricted to request graphs formed by single out-tree. Similarly, SCSS can be viewed as DSN restricted to request graphs formed by single directed cycle on the terminals. For that it is handy to observe that request graphs $R$ and $R'$ yield the same set of solutions if their transitive closures are the same.

Existence of an $\alpha$-approximation algorithm for DST implies a $2\alpha$-approximation algorithm for SCSS because of the following simple observation. The union of an in-tree and an out-tree from one fixed terminal in $T$ yields a strongly connected subgraph containing $T$.   
The best known approximation ratio in polynomial time for DST and SCSS is $O(|T|^\varepsilon)$ for any $\varepsilon >0$~\cite{CCCD+99}.
The same paper also yields $O(\log^2 |T|)$-approximation algorithm in quasi-polynomial time. A result of Halperin and Krauthgamer~\cite{HalperinK03} implies that DST and SCSS have no $O(log^{2-\varepsilon} n)$-approximation for any $\varepsilon > 0$, unless NP has quasi-polynomial time Las Vegas algorithms.
The best known approximation algorithm for DSN is by Chekuri et al.~\cite{CEGS08} with an approximation factor of ~$O(|T|^{1/2+\epsilon})$ for any $\epsilon>0$.
On the other hand, DSN cannot be approximated to within a factor of~$O(2^{\log^{1-\epsilon}n})$ for any fixed~$\epsilon>0$, unless~NP $\subseteq$ TIME$(2^{\text{polylog}(n)})$~\cite{DK99}.

%It is known that 
%DSN cannot be approximated to within a factor 
%of~$O(2^{\log^{1-\epsilon}n})$ for any fixed~$\epsilon>0$, 
%unless~NP $\subseteq$ TIME$(2^{\text{polylog}(n)})$~\cite{DK99}.
%The best known approximation factor is~$O(|T|^{1/2+\epsilon})$ for any 
%fixed~$\epsilon>0$~\cite{CEGS08}.

% XXXX: taken almost directly from ChitnisHM14, do we need that at all?
%An $\alpha$-approximation for DST implies a $2\alpha$-approximation for
%SCSS as follows by simply taking (approximate) in-tree and out-tree from one fixed terminal.
%The best known
%approximation ratio in polynomial time for DST and SCSS is $k^\varepsilon$ for
%any $\varepsilon >0$~\cite{CCCD+99}. A result of Halperin and Krauthgamer~\cite{HalperinK03} 
%implies DST and SCSS have no $\Omega(log^{2-\varepsilon} n)$-approximation for any
%$\varepsilon > 0$, unless NP has quasi-polynomial Las Vegas algorithms.

Using essentially the same techniques as for \textsc{Steiner Tree}~\cite{DW72}, one can show that 
%DST is fixed parameter tractable with respect to the number of terminals $|T|$, that is, 
there is  an $\OhOp{3^{|T|}\cdot n +2^{|T|} \cdot n^2 + n(n \log n+m)}$ time algorithm for DST. On the other hand, Guo et al.~~\cite{GuoNS11} showed that SCSS parameterized by $\vert T \vert$ is W[1]-hard.
That is, there is no $f(|T|)\cdot n^{O(1)}$ time algorithm for SCSS for any function $f$, unless FPT=W[1]. 
Later a stronger lower bound has been shown by Chitnis et al.~\cite{ChitnisHM14}. They showed that 
in fact there is no $f(|T|)n^{o(|T|/\log |T|)}$ algorithm for SCSS for any function $f$, unless Exponential Time Hypothesis (ETH) of Impagliazzo and Paturi~\cite{ImpagliazzoP01} fails.
This stimulated the research on DSN for restricted classes of request graphs~\cite{Suchy16, FeldmannM16}.

\osinline{we take perpendicular approach by leaving the request graph unrestricted and restricting the host graph}

As DSN is a generalization of SCSS and SCSS is W[1]-hard, DSN is also W[1]-hard.
On the positive side, Feldman and Ruhl~\cite{FR06} showed that DSN can be solved in $n^{O(|A(R)|)}$ time.
An independent algorithm with similar running time also follows from the classification work of Feldmann and Marx~\cite{FeldmannM16}.
Complementing these results Chitnis et al.~\cite{ChitnisHM14} showed that DSN cannot be solved in $f(R)n^{o(|A(R)|)}$ time for any function $f$, even when restricting the host graph $G$ to be planar and all arc weights equal to one, unless ETH fails.
However, in this reduction (as well as in the reduction given for SCSS), the number of arcs of the request graph $|A(R)|$ is linear in the number of terminals $|T|=|V(R|)$.
Hence, viewed in terms of the number of terminals, this lower bound only implies that there is no $f(|T|)n^{o(|T|)}$ time algorithm for any function $f$, unless ETH fails.
But both the known algorithms have runtime $n^{\Theta(|T|^2)}$ in worst case, leaving a significant gap between the upper and the lower bound for~DSN.
In this work we try to fill this gap. We prove the following upper and lower bounds.% for~DSN.
%\begin{itemize}
%\item There is a $f(R)\cdot |V|^{O(g \cdot |T|)}$ time algorithm  for some function $f$, if the input graph $G$ is embeddable into a surface of genus $g$. \eecom{We do not have $O(g \cdot |T|)$, just some $O(f(g) \cdot |T|)$, where $f$ depends on the grid-minor theorem for graphs on the surface $S_g$}
%\item There is a $f(R)\cdot |V|^{O(|T|^{3/2})}$ time algorithm for some function $f$ if the input graph $G$ excludes a fixed graph as a minor.
%\item There is no $f(R)\cdot |V|^{o(|T|^2/ \log |T|)}$ algorithm for any function $f$ for the problem on general graphs, assuming  ETH.
%\end{itemize}

\begin{theorem}\label{thm:DSN_surfaces_for_terminals}
There is an algorithm for DSN which correctly solves any instance $(G,R)$ in $f(R)\cdot n^{O(c_g \cdot |T|)}$ time, where $c_g=20^{16g+12}g$, $f$ is some computable function, and $g$ is the genus of the graph $G$.
\end{theorem}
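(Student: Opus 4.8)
The plan is to establish a structural result on inclusion-wise minimal optimal solutions and then enumerate their ``skeleton''. Fix an inclusion-wise minimal optimal solution $H\subseteq G$ and an embedding of $G$, hence of $H$, on an orientable surface $\Sigma$ of genus $g$. Call a vertex of $H$ \emph{branching} if it is a terminal or has degree at least $3$ in the underlying undirected graph $\bar H$, and let $S$ be the set of branching vertices. Suppressing the degree-$2$ non-terminals of $\bar H$ gives an embedded multigraph $H'$ on $S$ whose edges are the maximal \emph{segments} of $H$ between consecutive branching vertices. The technical heart is a Structural Lemma asserting $|S|\le c_g\cdot|T|$. I expect to prove it in two stages: first, for the planar case, show that in an inclusion-wise minimal optimal solution the number of branching vertices is $O(|T|)$; then reduce the genus-$g$ case to the planar one by induction on $g$.

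For the algorithm, assume the Structural Lemma. Enumerate all candidate skeletons $S\subseteq V(G)$ with $|S|\le c_g|T|$; there are $n^{O(c_g|T|)}$ of them. For each $S$, enumerate all ``topologies'': an oriented multigraph $H'$ on the abstract vertex set $S$, and, for every arc $st\in A(R)$, a directed walk in $H'$ from $s$ to $t$; since $|S|\le c_g|T|$ and $|T|$ and $|A(R)|$ are determined by $R$, the number of topologies is bounded by a computable function of $R$ and $g$, which is absorbed into the final bound. For each choice, compute for every oriented edge $ab$ of $H'$ a shortest directed $a$-$b$ path $Q_{ab}$ in $G\setminus(S\setminus\{a,b\})$, form the subgraph $\bigcup_{ab}Q_{ab}$, test whether it realizes every request of $R$, and keep the cheapest such subgraph over all choices. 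Correctness is short: every output is feasible, hence costs at least the optimum; and running the procedure on the topology induced by the fixed solution $H$ (its branching set, its suppressed multigraph, the orientations and request-walks it realizes) produces, segment by segment, paths no longer than the corresponding segments of $H$ — because each segment is itself an $a$-$b$ path internally avoiding $S$ — so the resulting subgraph is feasible (each request-path of $H$, with its segments replaced, becomes an $s$-$t$ walk) and has cost at most that of $H$. The total running time is $f(R)\cdot n^{O(c_g|T|)}$.

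For the Structural Lemma, the planar case is where the real work lies. A naive count already fails: all non-terminal branching vertices have degree $\ge 3$ in the plane multigraph $H'$, and minimality rules out loops and bounds the multiplicity of $H'$, but Euler's formula then only bounds $|S|$ by $O(|T|+|A(R)|)$, not $O(|T|)$, because it does not control the number of faces (i.e.\ of ``necessary'' cycles) of $H'$. To improve this one must use that $H$ is optimal, not merely minimal: every segment of $H$ is a shortest $a$-$b$ path in $G\setminus(S\setminus\{a,b\})$, and on a planar graph such shortest segments interact in a limited way, so after re-choosing $H$ among optimal solutions one can force every bounded face of $\bar H$ to ``see'' a terminal, whence Euler's formula gives $|S\setminus T|=O(|T|)$. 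For $g\ge1$ one argues inductively: if $\bar H$ carries a non-contractible closed curve, choose one meeting $H$ in $O(1)$ vertices and segments, cut $\Sigma$ and $H$ along a perturbation of it, and promote the $O(1)$ newly exposed vertices to terminals; the result is an inclusion-wise minimal optimal solution of a genus-$(g-1)$ instance with a constant-factor larger terminal set, and $g$ such rounds produce the stated bound $c_g=20^{16g+12}g$ (if no such curve exists, $\bar H$ is planar and the base case applies).

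The main obstacle is precisely this Structural Lemma. In the undirected world the standard device for controlling how solution paths interleave is uncrossing, i.e.\ swapping the tails of two crossing paths; but for \emph{directed} requests this reverses one of the two resulting pieces and so cannot be used verbatim, and one instead needs a careful analysis of how the request-paths of an optimal $H$ overlap and of which configurations survive optimality. The other delicate point is the genus induction: the cutting curve must meet the solution in boundedly many places \emph{and} the introduction of terminal copies must faithfully preserve feasibility of the original requests — so that the cut-open solution really is an inclusion-wise minimal optimum of the smaller-genus instance — and it is the bookkeeping of these constants across $g$ rounds that produces the large function $c_g$. Once the lemma is in place, the enumeration, the shortest-path computations, and the observation that the number of topologies is a computable function of $R$ are all routine.
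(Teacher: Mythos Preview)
Your Structural Lemma is false, already in the planar case, and this is precisely the obstacle the paper's proof is built around. Take $G=G_{n,\emptyset}$, the ``ladder'' of length $n$ defined in Section~3.2 of the paper: it is planar, has four vertices $a_1,b_1,a_n,b_n$ of degree~$2$ and all other $2n-4$ vertices of degree~$3$, and by Lemma~\ref{lem:ladder_properties} it is an inclusion-minimal strongly connected graph for the terminal set $\{a_1,b_1,a_n,b_n\}$. With $G$ as the host graph and the corresponding four-terminal request, $H=G$ is the unique (hence optimal and inclusion-minimal) solution; yet it has $2n$ branching vertices for only $|T|=4$ terminals. No amount of re-choosing among optimal solutions helps, because $H$ is the only solution. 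Consequently your enumeration over skeletons $S$ with $|S|\le c_g|T|$ will simply miss the optimum.

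This is not an artifact of a single example: the paper shows (Lemma~\ref{lem:ladder_structure_between_important_points}) that such ladders are exactly what appears between consecutive ``important'' vertices on a terminal-to-terminal path in any inclusion-minimal solution. The paper therefore does \emph{not} attempt to bound the number of branching vertices. Instead it bounds the \emph{treewidth} of $H$ (ladders have treewidth~$2$, so they are harmless for this purpose), by producing an admissible companion graph $H'$ of diameter $O(|T|)$ and invoking Eppstein's bound $\tw=O(gD)$ on genus-$g$ graphs (Proposition~\ref{prop:genus_tw}); the algorithm is then the Feldmann--Marx treewidth-based DP (Proposition~\ref{prop:solution_tw_algorithm}), not a skeleton enumeration. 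Your outline of the planar argument --- forcing every bounded face to see a terminal via uncrossing --- is exactly what the ladder defeats: the ladder's $n-1$ bounded faces see no terminal, and, as you yourself note, directed uncrossing does not work. The genus induction you sketch is therefore built on a base case that does not hold.
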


\begin{theorem}\label{thm:gen_hard}
There is no $f(R)\cdot n^{o(|T|^2/ \log |T|)}$ time algorithm for DSN on general graphs for any function $f$, unless ETH fails.
\end{theorem}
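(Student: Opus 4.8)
The plan is to reduce from \textsc{Partitioned Subgraph Isomorphism}: given a pattern graph $H$ with vertex set $\{v_1,\dots,v_q\}$ and a host graph $G$ whose vertices are partitioned into colour classes $V_1,\dots,V_q$, decide whether there is a mapping $\phi$ with $\phi(v_i)\in V_i$ and $\phi(v_i)\phi(v_j)\in E(G)$ for every $v_iv_j\in E(H)$. By a result of Marx, this problem admits no $g(k)\cdot N^{o(k/\log k)}$-time algorithm, where $k=|E(H)|$ and $N$ is the instance size, unless ETH fails (this is also the engine behind the ETH-based lower bounds of~\cite{ChitnisHM14}); padding $H$ with a few isolated edges, and $G$ with matching edges in fresh colour classes, we may assume $k$ is a perfect square. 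The point of the reduction is to pack the $k$ edge-consistency checks of $H$ into the arcs of a \emph{dense} request graph on only $O(\sqrt k)$ terminals, which is exactly the regime where the Feldman--Ruhl running time $n^{O(|A(R)|)}$~\cite{FR06} becomes $n^{O(|T|^2)}$, so that a matching lower bound (up to the $\log$) is what one should hope for.

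Concretely, I would fix a bijection between $E(H)$ and the cells of a $\sqrt k\times\sqrt k$ grid, write $e_{r,c}$ for the edge in cell $(r,c)$, and take as terminals $T=\{s_1,\dots,s_{\sqrt k},t_1,\dots,t_{\sqrt k}\}$ with request graph $R$ the directed complete bipartite graph containing the arc $s_r\to t_c$ for all $r,c$; thus $|T|=2\sqrt k$ and $|A(R)|=k$. In the host digraph $G'$ I would place, for each pattern vertex $v_i$, a \emph{selection gadget} offering one costly ``representative'' arc per candidate $x\in V_i$, and for each cell $(r,c)$ with $e_{r,c}=v_iv_j$ a \emph{verification gadget} that is reachable from $s_r$ only through a private conduit and leaves only towards $t_c$; this gadget offers, for each host edge $xy$ with $x\in V_i,y\in V_j$, a cheap ``choice'' arc whose only traversal forces the $s_r\to t_c$ path through the representative arc of $x$ (in $v_i$'s gadget) and of $y$ (in $v_j$'s gadget) and through nothing else. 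All other connector arcs are free, and the budget $B$ is set to the cost of exactly one representative arc per selection gadget plus one choice arc per verification gadget.

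For the forward direction, a colourful copy $\phi$ of $H$ gives a solution of cost exactly $B$: buy the representative arc of $\phi(v_i)$ in each selection gadget and the choice arc of $\phi(v_i)\phi(v_j)$ in each verification gadget; the representative arcs are reused by all cells whose edge contains a common pattern vertex, so the budget is respected and every request is routed. For the backward direction, the budget forces each selection gadget to contain at most one representative arc and each verification gadget at most one choice arc; since every request $s_r\to t_c$ must be routed and is forced through a choice arc of cell $(r,c)$ together with the corresponding pair of representative arcs, each selection gadget in fact contains exactly one representative arc $a_{x_i}$, and the choice arc selected in cell $(r,c)=v_iv_j$ must be that of the host edge $x_ix_j\in E(G)$; setting $\phi(v_i):=x_i$ then recovers a colourful copy of $H$. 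I expect the main obstacle to be precisely this backward direction: one must engineer the arc weights and the conduit structure so that no solution within the budget can reuse representative arcs inconsistently across cells, mix the conduits of different cells, or bypass a verification gadget — the technical cost-accounting core typical of such reductions.

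Finally, $|V(G')|$ and $B$ are polynomial in $N$, and $R$ depends only on $k$, so $f(R)=g'(k)$ for some function $g'$; because $|T|^2=4k$ and $\log|T|=\Theta(\log k)$, an $f(R)\cdot n^{o(|T|^2/\log|T|)}$-time algorithm for DSN would yield a $g'(k)\cdot N^{o(k/\log k)}$-time algorithm for \textsc{Partitioned Subgraph Isomorphism}, contradicting ETH. (The host graph $G'$ produced is necessarily far from planar, in line with Theorem~\ref{thm:DSN_surfaces_for_terminals}.)
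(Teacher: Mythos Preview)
Your high-level strategy --- reduce from \textsc{Partitioned Subgraph Isomorphism} and compress the $\Theta(k)$ edge constraints of the pattern into a request graph on $\Theta(\sqrt k)$ terminals, so that an $n^{o(|T|^2/\log|T|)}$ DSN algorithm would contradict Marx's bound --- is exactly the paper's. The encoding differs substantially, though. You place $E(H)$ on a $\sqrt k\times\sqrt k$ grid, use a complete-bipartite request graph, and propose weighted selection/verification gadgets. The paper instead restricts to $3$-regular $H$ with $k=|V(H)|$ and builds three labellings $\alpha,\beta\colon V(H)\to X,Y$ and $\gamma\colon E(H)\to Z$ with $|X|,|Y|,|Z|=O(\sqrt k)$ (obtained by greedy colourings of bounded-degree auxiliary graphs) such that $(\alpha,\beta)$ is injective on vertices and $(\alpha(\cdot),\gamma(e))$ is injective on incidences $u\in e$. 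The terminals are $X\cup Y\cup Z$, all arcs carry unit weight, and $G'$ consists merely of arcs $\alpha(\psi(u))\to u\to\beta(\psi(u))$ for $u\in V(G)$ and $u\to w_{uv}\to\gamma(\{\psi(u),\psi(v)\})$ for $uv\in E(G)$. Every request path then has length exactly $2$ or $3$, and the backward direction is a one-paragraph tight arc count against the budget $2|V(H)|+3|E(H)|$ --- no gadgets at all.

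That is precisely where your proposal has the gap you yourself flag. A pattern vertex $v_i$ lies in several grid cells, so your selection gadget for $v_i$ must admit entries and exits for each of them; nothing in your sketch prevents an $s_r\to t_c$ path from entering $v_i$'s gadget through the entry belonging to some cell $(r,c')$ and leaving through the exit belonging to $(r',c)$, satisfying the request via the wrong verification gadget (your own ``mix the conduits of different cells''). Repairing this requires per-incidence routing tags or a similar device --- at which point you are essentially re-deriving the paper's injectivity condition on $(\alpha,\gamma)$. So your route can probably be completed, but the paper's labelling trick reaches the same conclusion with far less machinery and an immediate soundness argument.
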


\subparagraph{Our methods.}
Let $(G,R)$ be an instance of DSN and let $H$ be a least cost subgraph of $G$ which realizes all paths prescribed by the arcs of $R$ (call it \emph{an optimum solution}). 
From the classification work of DSN, by Feldmann and Marx~\cite{FeldmannM16}, it follows that if the treewidth of $H$ is $\omega$, then there is an algorithm $\mathbb{A}$ for solving DSN, running in time $f(R,\omega)n^{O(\omega)}$, where $f$ is a computable function.
Towards proving the existence of $f(R)\cdot |V(G)|^{O(c_g \cdot |T|)}$ time algorithm  where $G$ is embeddable into a surface of genus $g$, we show that if an instance $(G,R)$ of DSN has at least one solution, then there is a solution $H$ with treewidth bounded by ${O(c_g \cdot |T|)}$.
Then our result follows. As $H$ is subgraph of $G$, $H$ is also embeddable into a surface of genus $g$.
To prove that $H$ has treewidth ${O(c_g \cdot |T|)}$, we derive a graph $H'$ such that
%\begin{enumerate}[(i)]
\begin{itemize}
  \item $H'$ is also embeddable into a surface of genus $g$,
  \item $\tw(H) = \OhOp{\tw(H')}$, and
  \item the diameter of $H'$ is $O(\vert T \vert)$.
\end{itemize}
%\end{enumerate}
Eppstein~\cite{Eppstein00} proved that if a graph $H'$ have genus $g$ and diameter $D$, then $H'$ has treewidth $\OhOp{gD}$. Thus we conclude that $H$ has treewidth ${O(c_g \cdot |T|)}$ and our result follows using the algorithm $\mathbb{A}$ of Feldmann and Marx~\cite{FeldmannM16}.

\subparagraph{Negative Results.}
Towards our lower bound result, we give a reduction from \textsc{Partitioned Subgraph Isomorphism} (PSI).
In PSI, we are given two undirected graphs $G$ and $H$ along with a function $\psi \colon V(G) \rightarrow V(H)$.
Our task is to test whether $H$ is isomorphic to a subgraph $G'$ of $G$ respecting the map $\psi$.
That is, we want to test the existence of an injective map $\phi \colon V(H)\rightarrow V(G)$ such that $\{\phi(u),\phi(v)\}\in E(G)$ for all $\{u,v\}\in E(H)$ and $\phi(u)\in \psi^{-1}(u)$ for all $u\in V(H)$.
%It is known that PSI can not be solved in time 
%$f(H)n^{o(\vert V(H)\vert/\log (H))}$, unless ETH fails. 

\section{Preliminaries}
\sv{\toappendix{
\subsection{Additions to Preliminaries}
}}
For a positive integer $n$, we use $[n]$ to denote the set $\{1,\ldots,n\}$.  
We consider simple directed graphs and use mostly standard notation that can be found for example in the textbook by Diestel~\cite{Diestel17}.
Let $G$ be a directed graph with vertex set $V(G)$ and arc set $A(G)$.
For vertices $u,v \in V(G)$ the arc from $u$ to $v$ is denoted by $uv \in A$ or $(u,v) \in A$.
A \emph{walk} $P = (p_0, \ldots, p_\ell)$ of length $\ell$ in $G$ is a tuple of vertices, that is $p_i \in V(G)$ for all $1 \le i \le \ell$, such that $p_ip_{i+1} \in A(G)$ for all $0 \le i < \ell$.
A \emph{directed path} $P = (p_0, \ldots, p_\ell)$ in $G$ is a walk of length $\ell$ with all vertices distinct, that is $p_i \neq p_j$ for all $0 \le i < j \le \ell$.
We let $V(P)= \{p_0, \ldots, p_\ell\}$.
We say that the path $P$ is from $p_0$ to $p_\ell$, call these vertices the \emph{endpoints} of $P$ while the other vertices are called \emph{internal} (we denote the set of these vertices by $\mathring{P}$).
Path $P$ is between $u$ and $v$ if it is either from $u$ to $v$ or from $v$ to $u$. 
Let $W$ be a set of vertices, we say that a path $Q$ is a \emph{$W$-avoiding path} if $\mathring{Q} \cap W = \emptyset$; if $P$ is a path we say that $Q$ \emph{$P$-avoiding path} if it is $V(P)$-avoiding path.
Let $P$ be a walk from $u$ to $v$ and let $Q$ be a walk from $v$ to $w$ by $P \circ Q$ we denote the \emph{concatenation} of $P$ and $Q$ and define it to be the walk from $u$ to $w$ that follows $P$ from $u$ to $v$ and then follows $Q$ from $v$ to $w$.
Let $P = (p_0, \ldots, p_\ell)$ be a directed path and $u, v \in V(P)$. We write $u\le_P v$ if $u$ is before $v$ on $P$ or in other words $u=p_i$ and $v=p_j$ such that $i\le j$.
For a path $P = (p_0, \ldots, p_\ell)$ and two vertices $u,v \in V(P)$ with $u\le_P v$ the \emph{subpath of $P$ between $u$ and $v$}, denoted $u[P]v$, is the path $(p_i, \ldots, p_j)$, where $p_i = u$ and $p_j = v$.
For a vertex $v \in V(G)$ its \emph{in-degree} is defined as $\indegGr{G}(v) = \left| \left\{ u \in V \mid uv \in A(G) \right\} \right|$.
The \emph{out-degree} of $v$ is $\outdegGr{G}(v) = \left| \left\{ u \in V \mid vu \in A(G) \right\} \right|$.
Finally, the \emph{total degree} of $v$ is $\degGr{G}(v) = \indegGr{G}(v) + \outdegGr{G}(v)$.
If the graph $G$ is clear from the context we drop the subscript~$G$.
We use $\sym(G)$ to denote the underlying undirected graph of a directed graph $G$.
To \emph{subdivide an arc $e\in A(G)$} is to delete $e = uv$, add a new vertex $w$, and add arcs $uw, wv$.
We say that $H$ is a \emph{subdivision} of $G$ if it can be obtained by repeated subdivision of arcs of $G$, that is there exist graphs $G = G_0, \ldots, G_n = H$ such that $G_{i + 1}$ is the result of arc subdivision of $G_i$.

We consider the following problem:
\prob{\textsc{Directed Steiner Network} (DSN)}
{An arc-weighted directed graph $G$ and an (unweighted) directed graph $R$ such that $V(R) \subseteq V(G)$.}
{Find a minimum-cost subgraph $H$ of $G$ in which there is a path from $s$ to $t$ for every $st \in A(R)$.}
The problem is also called \textsc{Directed Steiner Forest} or \textsc{Point-to-Point Connection}.
We only consider positive weights on arcs, as allowing negative-weight arcs would lead to a different problem.
We call a subgraph $H$ of $G$ \emph{a solution} to the instance $(G, R)$ of DSN if $H$ contains a path from $s$ to $t$ for every $st \in A(R)$. 
Moreover, we say that $H$ is an \emph{inclusion-minimal} solution to $R$, if $H$ is a solution for some instance $(G,R)$, but for every $e\in A(H)$, $H-e$ is not.
%
%A solution is \emph{inclusion-minimal} if $H\setminus e$ is not a solution for every arc $e \in A(H)$.
Note that an optimum solution (one with the least sum of weights) is necessarily inclusion-minimal, as we assume positive weights.
Through the paper we denote $p= |A(R)|$, $T=V(R)$, $q=|T|$, $n=|V(G)|$, and $H$ a solution to $(G, R)$.

Let $\mathcal{R}$ be a class of graphs. The $\mathcal{R}$-DSN problem is such a variant of DSN where it is promised that the graph $R$ belongs to the class $\mathcal{R}$.

\begin{proposition}[{Feldmann and Marx~\cite[Theorem 5]{FeldmannM16} (see also~\cite{FeldmannM17})}]\label{prop:solution_tw_algorithm}
Let an instance of $\mathcal{R}$-DSN be given by a graph $G$ with $n$ vertices, and 
a pattern $R$ on $k$ terminals with vertex cover number $\tau$. If the 
optimum solution to $R$ in $G$ has treewidth~$\omega$, then the optimum can be 
computed in time $2^{O(k+\max\{\omega^2,\tau\omega\log\omega\})}n^{O(\omega)}$.
\end{proposition}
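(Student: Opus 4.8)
The plan is to prove the ETH-based lower bound of Theorem~\ref{thm:gen_hard} by a reduction from \textsc{Partitioned Subgraph Isomorphism} (PSI). Recall the standard fact, which follows from the work of Marx, that assuming ETH there is no $f(H)\cdot |V(G)|^{o(|E(H)|/\log|E(H)|)}$ time algorithm for PSI even when the pattern graph $H$ has maximum degree $3$; in that case $|E(H)| = \Theta(|V(H)|)$. So it suffices to produce, from a PSI instance $(G,H,\psi)$ with $|V(H)|=k$ and $|E(H)|=O(k)$, an equivalent DSN instance $(G',R)$ in which $|T|=|V(R)| = O(\sqrt{k})$ (up to logarithmic factors we are fine with $|T|=O(\sqrt{k\log k})$ or similar), the arc weights are small integers, and $|V(G')|$ is polynomial in $|V(G)|+k$. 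Then an $f(R)\cdot |V(G')|^{o(|T|^2/\log|T|)}$ algorithm for DSN would solve PSI in time $f\cdot |V(G)|^{o(k/\log k)}$, contradicting ETH.

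The key idea is the quadratic trade-off: a request graph $R$ on $q$ terminals can carry up to $\binom{q}{2}$ (or, with directions, up to $q(q-1)$) distinct connection requests, so $q = \Theta(\sqrt{p})$ requests suffice to encode $p = \Theta(k)$ "edge-selection" gadgets. Concretely, I would group the $k$ vertices of $H$ into $q = \Theta(\sqrt k)$ blocks and introduce one pair of terminals $\{a_i, b_i\}$ per block; a request from $a_i$ to $b_j$ (for every ordered pair $i,j$, including $i=j$) will be responsible for routing through a gadget that "guesses" the image under $\phi$ of the $H$-vertices in block $i$ relevant to block $j$, together with consistency gadgets forcing the guesses made by different requests through the same block to agree. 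The weights are set so that an optimum solution of a prescribed target cost exists if and only if a valid subgraph isomorphism $\phi$ exists: each request path is forced to pick exactly one candidate vertex/edge in $G$, edge requests check adjacency in $G$, and the partition constraint $\phi(u)\in\psi^{-1}(u)$ is built into which candidates a request may use. This is essentially the "many requests sharing few terminals" strategy already present in the Chitnis~et~al.\ SCSS lower bound, but pushed to the quadratic regime by reusing each terminal in $\Theta(q)$ requests rather than $O(1)$.

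In more detail, the steps I would carry out are: (1) fix the source PSI instance with $\Delta(H)\le 3$ and $k=|V(H)|=|E(H)|\cdot\Theta(1)$, and partition $V(H)$ into $q=\lceil\sqrt k\rceil$ groups $V_1,\dots,V_q$ of size $O(\sqrt k)$ each; (2) for each group $V_i$ create a pair of terminals $(a_i,b_i)$ and, for the relevant pairs $(i,j)$, a long "selector" path-gadget in $G'$ whose internal structure has one branch per candidate assignment of the block and per candidate $G$-edge incident to it, with unit (or carefully chosen) weights so that cheapest routing selects exactly one branch; (3) add "synchronization" sub-gadgets — small shared sub-paths that every request touching block $i$ must traverse — forcing all requests to agree on $\phi$ restricted to $V_i$; (4) add "edge-check" gadgets realizing each $\{u,v\}\in E(H)$ as a constraint inside the request that handles the pair of blocks containing $u$ and $v$, so that the selected images $\phi(u),\phi(v)$ must be adjacent in $G$; (5) set the global budget $B$ as the sum of the minimum possible per-gadget costs and argue the two directions: a subgraph isomorphism yields a solution of cost exactly $B$, and any solution of cost $\le B$ must be tight on every gadget, hence decodes to a valid $\phi$; (6) bookkeeping: $|V(G')| = \mathrm{poly}(|V(G)|,k)$, $|T|=2q=\Theta(\sqrt k)$, so $|T|^2/\log|T| = \Theta(k/\log k)$, and translate the hypothetical DSN algorithm into a PSI algorithm violating ETH.

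The main obstacle, as usual in these quadratic-blow-up lower bounds, is designing the synchronization and edge-check gadgets so that (a) a single terminal pair can host $\Theta(q)$ independent requests without the requests' optimal paths interfering or "sharing" edges in a way that cheats the budget, and (b) the budget argument is genuinely tight — i.e.\ no solution can undercut $B$ by cleverly reusing arcs across different request paths. Controlling arc reuse is the crux: I expect to handle it by making the candidate-branch arcs of distinct gadgets vertex-disjoint and by giving the "commit" arcs inside each block large but uniform weight, so that any shortcut on one request forces an overpayment elsewhere. Keeping all weights bounded by a polynomial (ideally constants, so the hardness also holds for unit weights as in Chitnis~et~al.) while preserving tightness is the part that will require the most care.
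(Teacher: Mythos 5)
Your proposal does not address the statement at all. The statement you were asked to prove is Proposition~\ref{prop:solution_tw_algorithm}, an \emph{algorithmic upper bound}: if the optimum solution to an $\mathcal{R}$-DSN instance has treewidth $\omega$, then it can be computed in time $2^{O(k+\max\{\omega^2,\tau\omega\log\omega\})}n^{O(\omega)}$. In the paper this is an imported result, cited from Feldmann and Marx~\cite[Theorem 5]{FeldmannM16}; a proof of it would have to describe (or reconstruct) their dynamic-programming / structural argument over solutions of bounded treewidth, not a hardness reduction. What you have written instead is a plan for the ETH-based \emph{lower bound}, i.e.\ Theorem~\ref{thm:gen_hard}, via a reduction from \textsc{Partitioned Subgraph Isomorphism}. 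These are different statements with opposite roles in the paper (the proposition is the engine of the positive result, Theorem~\ref{thm:DSN_surfaces_for_terminals}; the reduction yields the negative result), so nothing in your text bears on the claim you were supposed to establish.

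Even judged as an attempt at Theorem~\ref{thm:gen_hard}, your sketch is considerably heavier than what the paper actually does and leaves its crux unresolved. The paper's Construction~\ref{con:hardness} needs no block-selector gadgets, no synchronization sub-gadgets, and no weight engineering: it builds three label sets $X$, $Y$, $Z$ of size $O(\sqrt{k})$ (via a greedy $4$-colouring of the $3$-regular pattern $H$, a split into groups of size $r=\lceil\sqrt{k}\rceil$, and greedy colourings of an auxiliary graph and multigraph) and uses these labels themselves as the $O(\sqrt{k})$ terminals, with unit weights throughout; tightness of the budget $2|V(H)|+3|E(H)|$ then follows from a short counting argument in Lemma~\ref{lem:cor:dsn_to_psi}, precisely because every $X$--$Y$ path has length exactly $2$ and every $X$--$Z$ path length exactly $3$, which rules out the arc-reuse ``cheating'' you flag as the main obstacle but do not solve. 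So if your text were meant as a proof of the lower bound, it would still be a plan with its hardest step (gadget design and the tight budget argument) missing; and as a proof of Proposition~\ref{prop:solution_tw_algorithm} it is simply about the wrong theorem.
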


%\begin{proposition}[Demaine and Hajiaghayi~\cite{DemaineH05a}]\label{thm:MinorFreeTW}
%	For any fixed graph $H$, there exists a constant $c_H$ such that every $H$-minor-free graph of treewidth w has an $c_Hw \times c_Hw$ grid as a minor. 
%\end{proposition}

\begin{proposition}[Demaine, Hajiaghayi, and Kawarabayashi~\cite{DemaineHK09}]\label{thm:MinorFreeTW}
	Suppose $G$ is a graph with no $K_{3,k}$-minor. If the treewidth is at least
	$20^{4k}r$, then $G$ has an $r \times r$ grid minor.

\end{proposition}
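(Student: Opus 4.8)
The plan is to establish the statement in its ``win/win'' form: starting from a graph $G$ with $\tw(G) \geq 20^{4k}r$, I will either extract an $r \times r$ grid minor directly, or else build a $K_{3,k}$ minor, which the hypothesis forbids. The first move is to turn the treewidth bound into a concrete planar witness whose size is \emph{linear} in the treewidth. Concretely, I would obtain a wall $W$ (a subdivided hexagonal grid) of height $h$ with $h$ proportional to $\tw(G)$, together with the system of disjoint paths certifying the associated well-linked set. In a general graph this passage is very lossy, so the point is to carry it out in a way that exploits $K_{3,k}$-minor-freeness and keeps the loss down to a $2^{O(k)}$ factor; the base $20$ in the final bound is the customary constant inherited from the grid-minor theorem invoked at this step.

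The core of the argument is a ``cleaning'' step for $W$ inside $G$: I would look at the portion of $G$ attached to $W$ (its compass) and show that either a tall subwall $W' \subseteq W$ is \emph{flat} — meaning that, after deleting a bounded set of apex vertices, the compass of $W'$ can be drawn in a disk with the branch vertices of $W'$ on the boundary in cyclic order — or else a $K_{3,k}$ minor appears. I expect the second alternative to be the delicate point. If no subwall of height $\geq r$ becomes flat after deleting the allowed number of apices, then one should be able to find three connected subgraphs of $G$, each joined through $W$ to $k$ pairwise ``far-apart'' branch vertices of $W$, and disjointly so; contracting $W$ so that these $k$ branch vertices form the size-$k$ side and the three subgraphs form the size-$3$ side yields $K_{3,k}$. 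This is precisely where the exponent $4k$ is spent: the wall must be tall enough that the grid-distance balls around $k$ prescribed branch vertices can be routed to three hubs without collisions, and each unit of $k$ costs a constant factor (of order $20^{4}$) in wall height because of the fan-out of a hexagonal wall.

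In the remaining case, a flat subwall $W'$ of height $h'$ essentially \emph{is} the answer: a flat wall is planar enough that the standard contraction turning a hexagonal wall into a square grid goes through, producing an $\lfloor h'/2 \rfloor \times \lfloor h'/2 \rfloor$ grid minor; since a grid, being planar, has no $K_{3,3}$ minor, this case is consistent with $K_{3,k}$-minor-freeness for $k \geq 3$. Choosing the original height $h$ so that, after discarding the at most $O(k)$ bands of $W$ ``spoiled'' by bad attachment points, the surviving flat subwall still has height $h' \geq 2r$, and composing the three estimates — treewidth-to-wall, wall-to-flat-subwall (with its $2^{O(k)}$ overhead controlled by $K_{3,k}$-minor-freeness), and flat-wall-to-grid — one checks that the product stays below $20^{4k}r$, which is the claimed threshold.

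The main obstacle is the middle step: proving that failure of flatness after removing only $O(k)$ apex vertices forces a genuine $K_{3,k}$ minor, with only a constant-to-the-$O(k)$ loss in wall height rather than a super-polynomial one. This needs a clean routing lemma for walls — roughly, that in a wall of height $\geq 20^{4}\ell$ any attachment set meeting $\ell$ well-separated bricks can be linked by disjoint paths through the wall to $\ell$ prescribed, pairwise-distant branch vertices — applied three times in parallel while keeping the three path systems disjoint. The treewidth/wall equivalence and the flat-wall/grid contraction are standard; it is this quantitative routing that makes the final constant come out as $20^{4k}r$ rather than a tower, and it is presumably the reason Demaine, Hajiaghayi, and Kawarabayashi single out this estimate.
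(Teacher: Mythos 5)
This statement is not proved in the paper at all: it is Proposition~\ref{thm:MinorFreeTW}, imported verbatim from Demaine, Hajiaghayi, and Kawarabayashi and used as a black box (only in the proof of Lemma~\ref{lem:reducing_length}). So there is no in-paper proof to compare against; the question is whether your argument stands on its own, and as written it does not.

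The central gap is that your first step is essentially the theorem you are trying to prove. You begin by extracting a wall $W$ of height \emph{linear} in $\tw(G)$, conceding that ``in a general graph this passage is very lossy'' and promising to ``carry it out in a way that exploits $K_{3,k}$-minor-freeness'' with only a $2^{O(k)}$ loss --- but that linear (up to $2^{O(k)}$) treewidth-to-wall/grid relationship for $K_{3,k}$-minor-free graphs \emph{is} the content of the proposition. In general graphs the grid-minor theorem gives only a polynomial (historically, far worse) dependence, so you cannot invoke it to get a wall of height $\Theta(\tw(G)/2^{O(k)})$ and then refine; the argument is circular. The two remaining steps are likewise asserted rather than proved: the flat-wall dichotomy with a $2^{O(k)}$ overhead is stated as something you ``expect,'' and the quantitative routing lemma that would turn non-flatness into a $K_{3,k}$ minor is explicitly flagged by you as ``the main obstacle.'' A proof sketch whose hardest step is labelled as an open obstacle is a plan, not a proof.

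For what it is worth, the actual argument of Demaine, Hajiaghayi, and Kawarabayashi does not route through the flat-wall machinery at all. It proceeds by induction on $k$: assuming $\tw(G)\ge 20^{4k}r$ and no $r\times r$ grid minor, one extracts one additional connected branch set joined to three large connected pieces, growing a $K_{3,k-1}$ minor into a $K_{3,k}$ minor at the cost of a fixed factor ($20^4$) in treewidth per step, with the planar/near-planar grid theorem of Robertson--Seymour--Thomas serving as the base case. That inductive structure is exactly what produces the clean $20^{4k}$ constant, and it sidesteps the need for a linear wall theorem in general graphs --- which is the step your outline cannot supply.
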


\iffalse
\begin{proposition}[\cite{GrossTucker87}] 
	\label{thm:genus-complete}
	The orientable and nonorientable genus, denoted by $\gamma$ and $\tilde\gamma$, of complete bipartite graphs is given by the following formulae:
	\begin{eqnarray*}
		 %\gamma(K_n) &=& \left\lceil \frac{(n-3)(n-4)}{12}\right\rceil, n\ge 3; \\%\qquad 
		 %\tilde\gamma(K_n) &=& \left\lceil \frac{(n-3)(n-4)}{6}\right\rceil, n\ge 3 {\textrm{ \ and\ }} n\neq 7,\  \tilde\gamma(K_7) = 3.
		% \\
		%\rule{0pt}{8mm}
		\gamma(K_{m,n}) &=& \left\lceil \frac{(m-2)(n-2)}{4}\right\rceil, m,n\ge 2; 
		\qquad \tilde\gamma(K_{m,n}) = \left\lceil \frac{(m-2)(n-2)}{2}\right\rceil, m,n\ge 2.
	\end{eqnarray*}
\end{proposition}
\fi

\begin{proposition}[\cite{GrossTucker87}]\label{pro:genus-minor}
For a graph $G$, let $\gamma(G)$ and $\tilde\gamma(G)$ denote the orientable and nonorientable genus of $G$, respectively.
 If $G$ is a graph such that $\gamma(G)\le g$ and $\tilde\gamma(G)\le h$ for some constants $g$ and $h$, then $G$ does not contain $K_{3,4g+3}$ nor $K_{3,2h+3}$ as a minor. 
%	If $G$ is a graph with %such that 
%	$\gamma(G)\le g = \left\lceil \frac{(n-3)(n-4)}{12}\right\rceil$ and \mbox{$\tilde\gamma(G)\le h = \left\lceil \frac{(\tilde{n}-3)(\tilde{n}-4)}{6}\right\rceil$} for some constants $g$ and $h$, then $G$ does not contain $K_{n}$ nor $K_{\tilde{n}}$ as a minor. 
\end{proposition}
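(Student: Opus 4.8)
The statement asserts that a graph $G$ of bounded orientable genus $\gamma(G)\le g$ (respectively nonorientable genus $\tilde\gamma(G)\le h$) cannot contain $K_{3,4g+3}$ (respectively $K_{3,2h+3}$) as a minor. The plan is to reduce the minor statement to a statement about subgraphs by invoking the standard fact that taking minors does not increase the genus: if $M$ is a minor of $G$, then $\gamma(M)\le\gamma(G)$ and $\tilde\gamma(M)\le\tilde\gamma(G)$ (this is because contracting an edge and deleting an edge or vertex can each be realized on the same surface in which $G$ is embedded). Consequently it suffices to compute the genus of the complete bipartite graphs $K_{3,n}$ and observe that for $n$ large enough it exceeds $g$ (respectively $h$), so $K_{3,n}$ cannot be a minor of $G$.

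Concretely, I would use the classical formulas of Ringel for the genus of complete bipartite graphs (the commented-out Proposition in the excerpt records these): for $m,n\ge 2$,
\[
\gamma(K_{m,n}) = \left\lceil \frac{(m-2)(n-2)}{4}\right\rceil,
\qquad
\tilde\gamma(K_{m,n}) = \left\lceil \frac{(m-2)(n-2)}{2}\right\rceil.
\]
Specializing to $m=3$ gives $\gamma(K_{3,n}) = \lceil (n-2)/4\rceil$ and $\tilde\gamma(K_{3,n}) = \lceil (n-2)/2\rceil$. Now suppose for contradiction that $K_{3,4g+3}$ is a minor of $G$. Then $\gamma(K_{3,4g+3})\le\gamma(G)\le g$. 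But $\gamma(K_{3,4g+3}) = \lceil (4g+1)/4\rceil = g+1 > g$, a contradiction. Similarly, if $K_{3,2h+3}$ were a minor of $G$, then $\tilde\gamma(K_{3,2h+3})\le\tilde\gamma(G)\le h$, whereas $\tilde\gamma(K_{3,2h+3}) = \lceil (2h+1)/2\rceil = h+1 > h$, again a contradiction. This establishes both parts of the claim.

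The only genuine ingredients are (i) genus-monotonicity under minors and (ii) the Ringel genus formula for $K_{3,n}$; both are standard and cited (the latter is exactly the \cite{GrossTucker87} reference already in the paper, which contains the commented formula), so the argument is essentially a one-line ceiling computation once these are in hand. The main thing to be careful about is the arithmetic in the ceilings: one must check that the chosen thresholds $4g+3$ and $2h+3$ are the smallest values forcing the strict inequality, i.e. that $(4g+3-2)/4 = g + 1/4$ rounds up to $g+1$ and $(2h+3-2)/2 = h+1/2$ rounds up to $h+1$, which they do. No serious obstacle arises; the result is a direct corollary of known facts, and the proof is short.
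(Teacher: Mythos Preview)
Your argument is correct: genus is minor-monotone, and plugging $n=4g+3$ (resp.\ $n=2h+3$) into Ringel's formulas $\gamma(K_{3,n})=\lceil (n-2)/4\rceil$ and $\tilde\gamma(K_{3,n})=\lceil (n-2)/2\rceil$ gives $g+1$ (resp.\ $h+1$), yielding the desired contradiction. The paper itself offers no proof of this proposition---it is simply quoted from \cite{GrossTucker87}---but the commented-out proposition immediately preceding it contains exactly the Ringel formulas you use, so your derivation is precisely the intended one.
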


\begin{proposition}[{Eppstein~\cite[Theorem 2]{Eppstein00}}]\label{prop:genus_tw}
	Let $G$ be a graph of genus $g$ and diameter $D$.
	Then G has treewidth $\OhOp{gD}$.
\end{proposition}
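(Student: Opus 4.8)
The plan is to reduce to the planar case and then peel off the handles and crosscaps one at a time. For planar graphs the bound is classical --- it is the statement that planar graphs have linear local treewidth, which can be seen by fixing a root $r$, embedding $G$ with $r$ on the outer face, and reading an $\OhOp{D}$-wide tree decomposition off of the breadth-first layering $L_0=\{r\},L_1,\dots,L_D$ (equivalently, via Bodlaender's bound that a $k$-outerplanar graph has treewidth at most $3k-1$). So the real content is to pass from genus $g$ to genus $0$ paying only a factor that depends on $g$.

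For this, I would embed $G$ on a surface $\Sigma$ of Euler genus $\OhOp{g}$, pick a vertex $r$, and take a BFS tree $T$ rooted at $r$; since $\diam(G)\le D$, the tree $T$ has depth at most $D$. A tree--cotree decomposition of the embedded graph yields a set of $t=\OhOp{g}$ non-tree edges $e_1,\dots,e_t$ whose fundamental cycles $C_{e_1},\dots,C_{e_t}$ generate $H_1(\Sigma;\mathbb{Z}/2)$; each $C_{e_i}$ consists of two root-paths of $T$ plus one edge, so it has at most $2D+1$ vertices. Cutting $\Sigma$ along $C_{e_1}\cup\dots\cup C_{e_t}$ produces a genus-$0$ surface with boundary, and combinatorially turns $G$ into a planar graph $\widehat G$ obtained by duplicating only the at most $t(2D+1)=\OhOp{gD}$ vertices lying on these cycles (a vertex on $j$ of the cycles splits into $\OhOp{j}$ copies, so $\OhOp{gD}$ copies are created in total). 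One then checks that $\widehat G$ still has diameter $\OhOp{gD}$: cutting breaks $T$ into $\OhOp{gD}$ subtrees of depth at most $D$ whose roots all lie on the few short cut cycles, so any two vertices are joined through a bounded number of such subtrees and cycles. Applying the planar case to (each component of) $\widehat G$ gives $\tw(\widehat G)=\OhOp{gD}$, and $G$ is recovered from $\widehat G$ by identifying the $\OhOp{gD}$ groups of duplicated vertices; adding all of these $\OhOp{gD}$ copies to every bag of an optimal tree decomposition of $\widehat G$ produces a tree decomposition of $G$ of width $\OhOp{gD}$. I expect this topological surgery to be the main obstacle: one must argue that $\OhOp{g}$ cycles of length $\OhOp{D}$ genuinely planarize $\Sigma$ (the homology-basis/tree--cotree argument, with extra care for non-cellular and non-orientable embeddings), that the cut graph retains diameter $\OhOp{gD}$ even when it becomes disconnected, and that the copies of each cut vertex are few enough that re-identifying them costs only $\OhOp{gD}$ in the width rather than a quantity quadratic in $g$.

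It is tempting to use the machinery already recorded: by Proposition~\ref{pro:genus-minor} the graph $G$ has no $K_{3,4g+3}$-minor, so by Proposition~\ref{thm:MinorFreeTW} a treewidth of at least $20^{16g+12}r$ would force an $r\times r$ grid minor, which one would hope to rule out once $r$ exceeds a constant times $D$. The catch is that a grid minor of an arbitrarily embedded graph need not be ``flat'', so the clean pigeonhole argument --- a path of length $\le D$ cannot meet $r-\OhOp{1}$ pairwise-disjoint separating subgraphs --- does not apply until one first straightens the grid minor using the bounded-genus refinement of the Grid Minor Theorem; and even then this route only yields a bound of the form $\OhOp{h(g)\cdot D}$ with $h$ exponential in $g$, which is why the surface-cutting argument of Eppstein~\cite{Eppstein00} is needed for the clean $\OhOp{gD}$.
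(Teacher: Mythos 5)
The paper does not prove this statement at all --- it is quoted verbatim from Eppstein and used as a black box --- so the only meaningful comparison is with Eppstein's own argument, which your sketch essentially reproduces: planarize by cutting along $\OhOp{g}$ cycles, each consisting of two BFS root-paths plus one edge, apply the planar diameter--treewidth bound, and absorb the $\OhOp{gD}$ affected vertices into every bag. That overall strategy is sound, and the re-identification bookkeeping at the end is fine.

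The genuine gap is the step you dispose of with ``one then checks that $\widehat G$ still has diameter $\OhOp{gD}$.'' This is the crux, and the justification offered (``any two vertices are joined through a bounded number of such subtrees and cycles'') does not establish it: a priori the cut surface can consist of many genus-$0$ pieces with many boundary circles, the graph $\widehat G$ can be disconnected, and a path between two copies of cut vertices may have to hop between many boundary circles, so ``bounded number'' is exactly what needs proving. To close it you need two facts you never state: (i) the cut locus $K$ (a subtree plus $t=\OhOp{g}$ non-tree edges) has at most $t+1=\OhOp{g}$ faces --- a tree does not separate a surface, and each added edge creates at most one new face --- hence only $\OhOp{g}$ boundary circles of total length $\OhOp{gD}$; and (ii) every vertex of $\widehat G$ is within $D$ of some boundary circle via its surviving BFS root-path prefix, so within a component any two circles are linked by $\OhOp{D}$-hops and the component's diameter is $\OhOp{gD}$. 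Even then, chasing constants to get $\OhOp{gD}$ rather than $\OhOp{g^2D}$ requires visiting each circle at most once. Eppstein's actual proof sidesteps all of this: he iteratively \emph{deletes} a short non-contractible cycle formed of two BFS paths (reducing the genus each time, $\OhOp{gD}$ deleted vertices in total), and bounds the treewidth of the planar remainder not by its diameter --- which may indeed blow up --- but by the at most $D+1$ BFS layers it inherits from the original root; the planar bound only needs a layering, not small diameter. If you keep your cutting approach, either import that layering trick or add the face-count argument above. (For the purposes of this paper any bound of the form $h(g)\cdot D$ would in fact suffice, since $c_g$ is allowed to depend on $g$, but the proposition as stated claims $\OhOp{gD}$.)
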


For a more detailed treatment of topological graph theory the reader
is referred to \cite{GrossTucker87}.% or \cite{White2001}.

\toappendix{
\subparagraph{$t$-Boundaried Graphs and Gluing.}
A \emph{$t$-boundaried graph} is a graph $G$ and a set $B \subseteq  V(G)$ of size at most $t$ with each vertex $v \in  B$ having a label $G(v) \in \{1, \ldots , t\}$. Each vertex in $B$ has a unique label. We refer to
$B$ as the boundary of $G$. For a $t$-boundaried graph $G$ the function $\delta(G)$ returns the boundary of $G$. Two $t$-boundaried graphs $G_1$ and $G_2$ can be \emph{glued} together to form a graph $G = G_1\oplus G_2$. The gluing operation takes the disjoint union of $G_1$ and $G_2$ and identifies the vertices of $\delta(G_1)$ and $\delta(G_2)$ with the same label.

A $t$-boundaried graph H is a minor of a $t$-boundaried graph
$G$ if (a $t$-boundaried graph isomorphic to) $H$ can be obtained from $G$ by deleting vertices or
edges or contracting edges, but never contracting edges with both endpoints being boundary
vertices.
For more details see e.g.~\cite{FominLMS12}.
}

\toappendix{
\subparagraph{Monadic Second Order Logic.}
The syntax of Monadic second order logic (MSO) includes the logical connectives $\vee, \wedge, \neg, \Rightarrow, \Leftrightarrow$, variables for
vertices, edges, sets of vertices and sets of edges, the quantifiers $\forall, \exists$ that can be applied to these variables, and the following five binary relations:
\begin{enumerate}
	\item $u\in U$ where $u$ is a vertex variable and $U$ is a vertex set variable;
	\item $d \in D$ where $d$ is an edge variable and $D$ is an edge set variable;
	\item $\inc(d, u)$, where $d$ is an edge variable, $u$ is a vertex variable; and the interpretation is that the edge $d$ is incident on the vertex $u$;
	\item $\adj(u, v)$, where $u$ and $v$ are vertex variables, and the interpretation is that $u$ and $v$ are adjacent;
	\item equality of variables representing vertices, edges, set of vertices and set of edges.
\end{enumerate}
Many common graph-theoretic notions such as vertex degree, connectivity, planarity, outerplanarity, 
being acyclic, and so on, can be expressed in MSO, as can be seen from introductory expositions~\cite{LibkinFMT}.
}

\section{Solving DSN on a Fixed Surface}\label{sec:DSN_on_surface}
\sv{\toappendix{
\section{Additions to Section~\ref{sec:DSN_on_surface}}
}}
We fix an instance $(G,R)$ of DSN with $g$ being the genus of $G$, $T = V(R)$, $q = |T|$, and $H$ is an inclusion-minimal solution to $(G,R)$.
In the rest of this section we assume $g$ is fixed.
Note that the genus of $H$ is at most $g$.

The goal of this section is to show the following theorem.
\begin{theorem}\label{thm:H_has_bounded_tw}
Treewidth of $H$ is $\OhOp{20^{4(4g+3)}g \cdot q}$.
\end{theorem}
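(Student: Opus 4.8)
The plan is to follow the strategy outlined in the introduction: reduce the treewidth bound on the solution $H$ to a bound on the treewidth of an auxiliary graph $H'$ of small diameter, and then invoke Eppstein's bound (Proposition~\ref{prop:genus_tw}). More precisely, I would construct from $H$ a graph $H'$ satisfying the three bulleted properties announced earlier: $H'$ is still of genus at most $g$, $\tw(H) = \OhOp{\tw(H')}$, and $\diam(H') = \OhOp{q}$. Granting these, Proposition~\ref{prop:genus_tw} gives $\tw(H') = \OhOp{g \cdot q}$, hence $\tw(H) = \OhOp{g \cdot q}$; combining with the grid-minor machinery of Propositions~\ref{thm:MinorFreeTW} and~\ref{pro:genus-minor} (a genus-$g$ graph has no $K_{3,4g+3}$-minor, so if its treewidth exceeds $20^{4(4g+3)}r$ it has an $r\times r$ grid minor, which a bounded-genus graph cannot have for $r$ large) yields the explicit constant $20^{4(4g+3)}g$ in the statement. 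The real content is therefore the construction of $H'$.

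For the construction I would exploit inclusion-minimality of $H$ heavily. Since $H$ realizes only the $p = |A(R)|$ requests and is inclusion-minimal, every arc of $H$ lies on some ``useful'' $s$–$t$ path for a request $st \in A(R)$; one shows the arc set of $H$ decomposes into $p$ paths, each between two terminals, and then studies how these paths can interact. The key structural fact I would aim to prove is that along any one such terminal-to-terminal path $P$, the subpaths of $P$ that are internally disjoint from all the other request-paths are ``contractible'' without destroying genus or inflating treewidth — replacing each such long detour by a single arc (or a short gadget) gives $H'$. The number of maximal pieces into which a path $P$ is cut by its intersections with the $O(p)$ other paths is bounded by a function of $p$ (hence of $R$), and the number of ``branch'' vertices where three or more paths meet is likewise bounded in terms of $p$; since $q \le p+1$ (or is otherwise controlled once we discard isolated terminals), collapsing each piece to constant length makes every pair of the $\OhOp{q}$ distinguished vertices (terminals plus branch vertices) reachable in $\OhOp{q}$ steps, giving $\diam(H') = \OhOp{q}$. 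Contraction of an arc never increases genus and never increases treewidth (it is a minor operation), so genus stays $\le g$; for the treewidth lower-bound direction $\tw(H)=\OhOp{\tw(H')}$ I would re-expand the contracted pieces, noting each is a path and so re-subdividing an edge changes treewidth by at most a constant (in fact not at all once $\tw\ge 2$), or alternatively keep the subdivision structure explicit in $H'$ and argue directly.

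The main obstacle I anticipate is the combinatorial geometry of how $p$ terminal-to-terminal paths in a bounded-genus graph can overlap: two such paths need not be internally disjoint, they can share long common subpaths, cross each other many times, and a single path can revisit the neighbourhood of another. I would need a clean ``reduced'' notion — essentially treating $H$ as a topological object and taking its ``skeleton'' after suppressing degree-two vertices and contracting pendant-free detours — and then argue that the skeleton has $\OhOp{p}$ vertices and $\OhOp{p}$ edges, each original edge of $H$ mapping into a subdivided skeleton-edge of length we can make $O(1)$ while preserving which requests are realizable. Controlling genus through these operations is fine (everything is a minor or a subdivision), but bounding the number of skeleton vertices requires an argument that inclusion-minimality forbids ``wasteful'' structure, e.g. a cycle of $H$ all of whose vertices have degree two in $H$ would contradict minimality since some arc on it is unused. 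That degree/minimality bookkeeping, together with making the $\OhOp{p}$-versus-$\OhOp{q}$ passage rigorous (one may have to first delete terminals that are not endpoints of any realized path, or merge requests with identical endpoints), is where I expect the proof to be most delicate; once the skeleton has $\OhOp{q}$ vertices and bounded-length edges, the diameter bound and hence the theorem follow immediately.
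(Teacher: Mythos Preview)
Your high-level strategy is exactly the paper's: build an auxiliary $H'$ of the same genus with $\tw(H)=\OhOp{\tw(H')}$ and $\diam(H')=\OhOp{q}$, then invoke Eppstein (Proposition~\ref{prop:genus_tw}). The gap is in how you propose to obtain $H'$.

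Your plan rests on the claim that after suppressing degree-$2$ vertices the resulting ``skeleton'' has $\OhOp{q}$ (or even $\OhOp{p}$) vertices, equivalently that the portions of a request path $P$ between consecutive intersections with other request paths are just paths and hence contractible. This is false. In an inclusion-minimal solution every non-terminal vertex can already have degree at least $3$ (the paper enforces this in Lemma~\ref{lem:degreeTwoVertices}), so there may be nothing to suppress. Concretely, between two terminals the solution can contain an arbitrarily long \emph{ladder}: two directed paths that weave back and forth through each other, producing unboundedly many degree-$3$ vertices, yet with no $P$-avoiding path from any of those vertices to a terminal. Thus the number of ``pieces'' into which $P$ is cut by the other paths is \emph{not} bounded by any function of $p$ or $q$, and your contraction/re-subdivision argument for $\tw(H)=\OhOp{\tw(H')}$ breaks down because the pieces are not paths.

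The paper's construction of $H'$ is accordingly more delicate. It first isolates on each terminal-to-terminal path $P$ a set $I_P$ of \emph{important} vertices (those with a $P$-avoiding path to or from some terminal) and shows $|I_P|\le 2q-2$ (Lemma~\ref{lem:important_vertices_on_P}); these, together with a bounded number of auxiliary \emph{marked} vertices, are the only places where $P$ genuinely interacts with the rest of $H$. Between consecutive important/marked vertices the structure is shown to be a subdivision of a ladder (Lemma~\ref{lem:ladder_structure_between_important_points}). Each long ladder is then replaced by a constant-size ladder via a protrusion-replacement argument (Lemma~\ref{lem:protrusion_replacement}) that preserves every $k\times k$ grid minor for $k\ge 10$; combined with Propositions~\ref{thm:MinorFreeTW} and~\ref{pro:genus-minor} this is what gives $\tw(\sym(H))\le 20^{4(4g+3)}\tw(\sym(H'))$ (Lemma~\ref{lem:reducing_length}). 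Only after this replacement does one get paths of length $\OhOp{|I_P|}$ between terminals, and then a separate counting argument (Lemma~\ref{lem:small_distances}) turns this into the diameter bound $\OhOp{q}$. The explicit constant $20^{4(4g+3)}$ thus enters through the grid-minor preservation step, not as a post-hoc refinement of an $\OhOp{gq}$ bound.
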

With this theorem at hand Theorem~\ref{thm:DSN_surfaces_for_terminals} follows immediately from Proposition~\ref{prop:solution_tw_algorithm}.

\subparagraph{Reversing Arcs -- Symmetry.}
Let $\overleftarrow{G}$, $\overleftarrow{H}$, and $\overleftarrow{R}$ be the directed graphs we obtain from $G$, $H$, and $R$, respectively, by reversing all the arcs.
That is, for example, $\overleftarrow{G}$ contains an arc $uv$, if and only if $G$ contains the arc $vu$. Note that there is a one-to-one correspondence between an $s$-$t$ path in $H$ and a $t$-$s$ path in $\overleftarrow{H}$.
Hence, if $H$ is an optimum solution to the instance $(G,R)$, then $\overleftarrow{H}$ is an optimum solution to the instance $\bigl(\overleftarrow{G}, \overleftarrow{R} \bigr)$.
The importance of $\overleftarrow{G}, \overleftarrow{H}, \overleftarrow{R}$ is that every lemma holds in both $H, R$ and $\overleftarrow{H}, \overleftarrow{R}$.
In this way we obtain symmetric results without reproving everything twice.

\begin{observation}\label{obs:adjustingR}
Let $R'$ be a directed graph with vertex set $T$ and $st\in A(R')$ for $s,t \in T$ if and only if there is a $T$-avoiding path in $H$ from $s$ to $t$.
Then $H$ is an inclusion-minimal solution to $R'$. \fpcom{$(G,R')$ ??}\dkcom{no -- it is property of $H$ and $R$ no matter which $G$.}
\end{observation}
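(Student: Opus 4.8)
The plan is to verify directly the two defining conditions of an inclusion-minimal solution to $R'$: that $H$ is a solution to $R'$, and that deleting any single arc of $H$ destroys this property.

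The first part is immediate from the definition of $R'$. If $st \in A(R')$, then by construction there is a $T$-avoiding path in $H$ from $s$ to $t$; such a path is in particular an $s$-$t$ path in $H$, so the connectivity requirement imposed by the arc $st$ is met. Hence $H$ is a solution to $R'$.

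The substance is the minimality. I would fix an arbitrary arc $e \in A(H)$ and use that $H$ is an inclusion-minimal solution to $R$: there is an arc $st \in A(R)$ such that $H-e$ contains no $s$-$t$ path, i.e.\ $e$ lies on \emph{every} $s$-$t$ path of $H$. Now take any $s$-$t$ path $P$ in $H$ (it must use $e$) and cut it at the terminals it visits. Writing $s = u_0, u_1, \dots, u_k = t$ for the terminals met along $P$ in order, each segment $u_i[P]u_{i+1}$ has all internal vertices outside $T$, so it is a $T$-avoiding path from $u_i$ to $u_{i+1}$, and therefore $u_iu_{i+1} \in A(R')$ for every $i$. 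The arc $e$ lies in exactly one of these segments, say $u_j[P]u_{j+1}$. I claim $e$ lies on every $u_j$-$u_{j+1}$ path of $H$: if $Q$ were a $u_j$-$u_{j+1}$ path in $H-e$, then the walk $s[P]u_j \circ Q \circ u_{j+1}[P]t$ would be an $s$-$t$ walk in $H$ avoiding $e$ — the two outer subpaths of $P$ together cover all of $P$ except the segment containing $e$, and $Q$ avoids $e$ by assumption — and any such walk contains an $s$-$t$ path in $H-e$, contradicting the choice of $st$. Consequently $H-e$ contains no $u_j$-$u_{j+1}$ path, while $u_ju_{j+1} \in A(R')$, so $H-e$ is not a solution to $R'$. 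Since $e$ was arbitrary, $H$ is inclusion-minimal for $R'$.

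The only mildly delicate points are the well-definedness of the terminal decomposition of $P$ (clear, since $P$ is a simple directed path and thus visits its terminals in a well-defined linear order, with each inter-terminal segment $T$-avoiding by construction) and the standard fact that an $s$-$t$ walk in $H-e$ yields an $s$-$t$ path in $H-e$. Neither of these presents a real obstacle, so I expect the whole argument to be short; the one thing to state carefully is precisely which subpaths of $P$ are recombined with $Q$, so that their union is seen to avoid $e$.
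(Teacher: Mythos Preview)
The paper states this observation without proof, so there is nothing to compare against directly. Your argument is correct and is precisely the straightforward verification one would expect: splitting an $s$-$t$ path at its intermediate terminals to locate a pair $u_ju_{j+1}\in A(R')$ whose connectivity is destroyed by removing $e$, and using the standard walk-to-path shortening to derive the contradiction.
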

From now on we replace $R$ with $R'$.

Let $H_1, H_2$ be two directed graphs.
We say that the pair $\left( H_1, H_2 \right)$ is an \emph{admissible pair} if
%\begin{itemize}
%  \item
   $H_2$ is embeddable into the same surface as $H_1$ and %\label{it:propEmbed}
%  \item
   $\tw(H_1) = \OhOp{\tw(H_2)}$. %\label{it:propTW}
%\end{itemize}

\lv{
\begin{observation}\label{obs:admissibleTransitivity}
Let $\left( H_1, H_2 \right)$ and $\left( H_2, H_3 \right)$ be admissible pairs.
Then $\left( H_1, H_3 \right)$ is an admissible pair.
\end{observation}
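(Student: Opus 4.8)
The plan is to simply compose the two defining properties of admissibility. Recall that $(H_1,H_2)$ being an admissible pair means two things: $H_2$ embeds into the same surface as $H_1$, and $\tw(H_1)=\OhOp{\tw(H_2)}$, i.e.\ $\tw(H_1)\le c_1\cdot\tw(H_2)$ for some absolute constant $c_1$. Likewise, $(H_2,H_3)$ admissible yields that $H_3$ embeds into the same surface as $H_2$ and that $\tw(H_2)\le c_2\cdot\tw(H_3)$ for an absolute constant $c_2$. The goal is to derive the two corresponding conditions for the pair $(H_1,H_3)$.

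First I would dispatch the embeddability requirement. Since a graph that embeds into a given surface also embeds into every surface of at least that genus, the relation ``embeds into the same surface as'' is transitive: $H_3$ embeds into the surface hosting $H_2$, which embeds into the surface hosting $H_1$, hence $H_3$ embeds into the surface hosting $H_1$. (Equivalently, in the setting of this section all graphs under consideration embed into the single fixed genus-$g$ surface, so this step is immediate.) Next I would handle the treewidth requirement by chaining the two inequalities: $\tw(H_1)\le c_1\cdot\tw(H_2)\le c_1 c_2\cdot\tw(H_3)$, and $c_1 c_2$ is again an absolute constant, giving $\tw(H_1)=\OhOp{\tw(H_3)}$. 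Together these two facts say exactly that $(H_1,H_3)$ is an admissible pair.

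There is no real obstacle here; the statement is a bookkeeping observation whose purpose is to let us reach the target graph through a constant-length chain of admissible pairs without re-verifying both conditions at each step. The only subtlety worth a remark is that the constants hidden in the $\OhOp{\cdot}$ notation must be genuinely absolute (independent of the instance $(G,R)$), so that their product is still a legitimate constant; this holds for every admissible pair we construct in this section.
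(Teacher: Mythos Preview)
Your argument is correct and is exactly the intended reasoning; the paper states this observation without proof, treating the transitivity of both defining conditions as immediate. There is nothing to compare: you have simply spelled out the two-line verification the paper omits.
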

}

\subparagraph{Overview of the Proof of Theorem~\ref{thm:H_has_bounded_tw}.}
To prove Theorem~\ref{thm:H_has_bounded_tw} we transform the solution graph $H$ into a graph $H'$ containing all terminals and preserving all terminal-to-terminal connections such that $\left( H, H' \right)$ is an admissible pair and $H'$ has bounded diameter (and thus by Proposition~\ref{prop:genus_tw} has bounded treewidth).
We do this by exploiting that a terminal to terminal path in $H$ contains only $O(q)$, so called, important and marked vertices.
Furthermore, a subpart of the solution ``between'' two consecutive marked or important vertices has constant treewidth and contains few vertices with arcs to the rest of the solution $H$.
This allows us to replace this part of a solution to constant size while preserving genus and all terminal-to-terminal connections.
Thus obtaining the graph $H'$ of bounded diameter.

\begin{lemma}\label{lem:degreeTwoVertices}
There exists directed graph $H_{\ge 3}$ such that every non-terminal vertex in $H_{\ge 3}$ has at least three neighbors and $\left( H, H_{\ge 3}\right)$ is an admissible pair.
\end{lemma}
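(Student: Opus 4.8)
The plan is to obtain $H_{\ge 3}$ from $H$ by exhaustively suppressing non-terminal vertices of degree at most two in the underlying undirected graph $\sym(H)$, while being careful to preserve all terminal-to-terminal reachability and the embeddability into the surface of genus $g$. Recall that after Observation~\ref{obs:adjustingR} we work with the request graph $R'$ whose arcs record $T$-avoiding paths in $H$, and that $H$ is inclusion-minimal for $R'$. The key structural fact to exploit is that in an inclusion-minimal solution a non-terminal vertex $v$ cannot be a ``dead end'': every arc of $H$ lies on some realizing $s$-$t$ path, so a non-terminal with undirected degree one is impossible, and a non-terminal with undirected degree two must have its two incident arcs forming a ``through'' configuration consistent with a path passing through $v$.

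First I would show that $H$ has no non-terminal vertex of undirected degree $0$ or $1$: any arc incident to a degree-$\le 1$ non-terminal could be removed without destroying any realizing path (since such a vertex can be the endpoint of no request and hence only an internal vertex of paths, but an internal vertex needs both an in- and an out-neighbor), contradicting inclusion-minimality. Next I would handle non-terminal vertices $v$ of undirected degree exactly two. By inclusion-minimality every incident arc is used, and a short case analysis on the orientations of the (at most two) arcs at $v$ shows that the only surviving configuration is $u \to v \to w$ (with possibly $u = w$, which one rules out again by minimality, or by the fact that request graphs with the same transitive closure are equivalent). I then \emph{suppress} $v$: delete $v$ and add the arc $uw$ (with weight equal to the sum, though weights are irrelevant for this lemma). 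Every realizing path through $v$ maps to a realizing path through the new arc $uw$ and vice versa, so the resulting graph is still a solution to $R'$ and is still inclusion-minimal after we possibly prune any newly-created redundant arc; crucially, suppressing a degree-two vertex is the reverse of a subdivision and hence preserves the underlying topological structure, so the genus does not increase and the new graph still embeds into the same surface. Iterating this process terminates (the vertex count strictly decreases) and yields a graph $H_{\ge 3}$ in which every non-terminal has undirected degree at least three.

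It remains to check that $(H, H_{\ge 3})$ is an admissible pair. Embeddability into the same surface is exactly what the previous paragraph established, since each suppression step is a topological contraction of a degree-two vertex. For the treewidth bound I would argue both inequalities: $\tw(\sym(H_{\ge 3})) \le \tw(\sym(H))$ because $\sym(H_{\ge 3})$ is a (topological) minor of $\sym(H)$ obtained by contracting edges at degree-two vertices; conversely $\tw(\sym(H)) = O(\tw(\sym(H_{\ge 3})))$ because $H$ is obtained from $H_{\ge 3}$ by subdividing arcs (replacing each new arc $uw$ by the original path) and possibly re-attaching degree-$\le 1$ pendants, and subdivision does not increase treewidth (for graphs with at least one edge), while attaching trees of pendant vertices increases treewidth by at most one (and one may assume $\tw \ge 1$). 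Hence $\tw(\sym(H)) = \Theta(\tw(\sym(H_{\ge 3})))$, which is what the admissibility definition requires.

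The main obstacle I anticipate is not the treewidth bookkeeping but getting the degree-two case analysis exactly right in the directed setting: one must verify that inclusion-minimality of $H$ with respect to the path-based request graph $R'$ genuinely forbids configurations such as $v$ with two out-arcs, $v$ with two in-arcs, or $v$ appearing on two ``parallel'' segments, and that after suppression the new arc does not coincide with an already-present arc in a way that breaks simplicity (the paper works with simple digraphs). I would address this by noting that multiple parallel paths through the same degree-two vertex contradict minimality, and that if suppression would create a parallel arc one simply keeps the cheaper one (or, since weights are positive and minimality is by inclusion, one keeps a single copy), and that a self-loop $u = w$ is likewise excluded by minimality since the two arcs at $v$ would then be removable. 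Once these local reductions are justified, the lemma follows.
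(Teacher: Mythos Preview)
Your proposal is correct and takes essentially the same approach as the paper: exhaustively suppress non-terminal vertices with at most two neighbors, using inclusion-minimality to rule out degree-one dead ends and to justify that the two incident arcs at a degree-two vertex form a through-path, and then observe that suppression is the inverse of subdivision so genus and treewidth are preserved. The paper's own argument is a terse proof sketch that states the suppression rule (add $uw$ if $uv,vw\in A(H)$, and symmetrically $wu$) and notes degree-one non-terminals are impossible by minimality; you simply fill in the admissibility bookkeeping and the orientation case analysis that the sketch leaves implicit.
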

\begin{proof}[Proof Sketch]
We exhaustively repeat the following until we cannot apply it anymore.
Let $v$ be a non-terminal and let $u,w$ be the two neighbors of $v$.
Note that $v$ cannot have degree one, since $H$ is a minimal solution.
We delete $v$ from $H$ and add an arc $uw$ if both $uv$ and $vw$ were in $H$, similarly for an arc $wu$.
Denote the resulting graph $H_{\ge 3}$.
\end{proof}

From now on we replace $H$ with $H_{\ge 3}$.

\subsection{Important and Marked Vertices}
For a fixed $T$-avoiding directed path $P$ in $H$ between two terminals $s$ and $t$, we say that vertex $u\in V(P)$ is \emph{important with respect to $P$} if there is a $P$-avoiding directed path from some terminal not on $P$ to $u$ or from $u$ to some terminal not on $P$.
Let $I_P$ denote the set of vertices important with respect to $P$.
Let $I$ be the union of important vertices over all paths. 

Let $s,t\in T$ and $P = (s=p_1, \ldots, p_r =t)$ be fixed for the rest of this subsection.

\begin{lemma}\label{lem:important_vertices_on_P}
There are at most $2q-2$ important vertices on $P$.
Moreover, there exists a function $g_P \colon I_P \to T$ with $\left| g_P^{-1}(x) \right| \le 2$ for every $x\in T$ such that for every $v\in I_P$ there is either $v$-$g(v)$ or $g(v)$-$v$ directed $\left(V(P)\cup T\right)$-avoiding path.
\end{lemma}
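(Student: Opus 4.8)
The plan is to associate to each important vertex $v \in I_P$ a "witness" terminal, namely a terminal $t_v \notin V(P)$ that is connected to $v$ by a $(V(P)\cup T)$-avoiding directed path (in one direction or the other), and then to argue that no terminal can serve as witness for too many important vertices. First I would make the path $P$-avoiding paths genuinely $(V(P)\cup T)$-avoiding: if $Q$ is a $P$-avoiding path from a terminal $s'\notin V(P)$ to $v\in V(P)$, I take the suffix of $Q$ starting at the last terminal $s''$ on $Q$; this suffix is internally terminal-free and $P$-avoiding, hence $(V(P)\cup T)$-avoiding, and it still starts at a terminal not on $P$. (Here I use that $v$ is important, so its witness terminal is not on $P$; and that $s''$, being on $Q$ which was $P$-avoiding except possibly at its endpoint $v$, is not on $P$.) So we may assume every important vertex $v$ has such a clean path to or from some terminal $t_v\notin V(P)$.

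Next, the key combinatorial step: I want to bound, for a fixed terminal $x\notin V(P)$, the number of $v\in I_P$ whose witness path goes \emph{from $x$ to $v$} by one, and similarly the number whose witness path goes \emph{from $v$ to $x$} by one. Consider all vertices $v\in V(P)$ reachable from $x$ by a $(V(P)\cup T)$-avoiding path, and among these pick the one, call it $v_x^{\mathrm{out}}$, that is \emph{last} on $P$ (largest in the order $\le_P$). I claim $v_x^{\mathrm{out}}$ can be taken as the unique "out-witness" for $x$: for any other such $v\le_P v_x^{\mathrm{out}}$, the path $x \to v_x^{\mathrm{out}}$ followed by the subpath $v_x^{\mathrm{out}}[P]\cdots$ — wait, this needs care, since going backward along $P$ is not allowed and the $x\to v$ path and $x\to v_x^{\mathrm{out}}$ path may intersect. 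The cleaner route is to argue that $v$ is redundant as an important vertex: the reason $v$ was declared important (the path from $x$) is "dominated" by $v_x^{\mathrm{out}}$ in the sense that, using minimality of $H$ (every arc of $H$ lies on some required $s'$-$t'$ path, after the replacement of $R$ by $R'$ in Observation~\ref{obs:adjustingR}), the portion of $P$ strictly between $v$ and $v_x^{\mathrm{out}}$ together with the witness structure forces a contradiction unless there are few such $v$. Here I expect to instead define $g_P$ directly: send $v\in I_P$ to the witness terminal $t_v\notin V(P)$, and show $|g_P^{-1}(x)|\le 2$ by showing at most one $v\in g_P^{-1}(x)$ has an $x\to v$ path and at most one has a $v\to x$ path — the extremal ($\le_P$-maximal, resp. $\le_P$-minimal) choice being canonical, with all other candidates not actually being important (their importance-witnessing path can be rerouted through the extremal vertex and along $P$, so they fail the definition once we insist the path be $P$-avoiding).

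The bound $2q-2$ then follows: $|I_P| = \sum_{x\in T\setminus\{s,t\}} |g_P^{-1}(x)| \le 2(q-2)$, plus I should double-check whether $s$ and $t$ themselves contribute — they are the endpoints of $P$, not eligible as witness terminals "not on $P$", so they contribute $0$, but the statement claims $2q-2$ rather than $2q-4$, so I would allow a slack of $2$ (perhaps $s$ or $t$ can be important with respect to a sub-branch, or the $\le_P$-extremal argument leaves two exceptional vertices uncharged), which is harmless. The main obstacle is the redundancy argument in the previous paragraph: showing that a non-extremal candidate $v$ is genuinely not important requires combining (i) the existence of the $x\to v$ and $x\to v_x^{\mathrm{out}}$ paths, (ii) inclusion-minimality of $H$, and (iii) the fact that after Observation~\ref{obs:adjustingR} every arc of $P$ between $v$ and $v_x^{\mathrm{out}}$ is justified only by terminal-to-terminal demands realized within $H$ — and making sure the rerouting I do to contradict importance does not secretly use vertices of $P$ or terminals. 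By the symmetry remark (reversing all arcs), the "in-witness" case is identical to the "out-witness" case, so I only need to do the argument once.
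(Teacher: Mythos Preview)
Your overall strategy---assign to each important vertex a (terminal, direction) pair and argue that each pair is realised at most once---is exactly the paper's approach. Two concrete points need repair.

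First, the extremal direction is reversed. For paths of type $x\to v$ the paper takes the $\le_P$-\emph{minimal} $v$ (closest to $s$), and for $v\to x$ the $\le_P$-\emph{maximal} one. This is not a matter of taste: the rerouting you invoke has to go \emph{forward} along $P$, which is only available if the canonical vertex is to the \emph{left} of the competitor. With your choice (maximal for $x\to v$) the detour would have to traverse $P$ backwards.

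Second, and more seriously, the step ``all other candidates are not actually being important'' is a logical slip. If $v$ has a $P$-avoiding path from $x$, then $v$ \emph{is} important by definition, regardless of whether one can also reach $v$ via the extremal vertex and then along $P$; the existence of a non-$P$-avoiding route does not erase the $P$-avoiding one. What you must show instead is that every important vertex receives \emph{some} label, i.e.\ is extremal for \emph{some} terminal in \emph{some} direction. This is where inclusion-minimality of $H$ does the real work: assuming $v$ is important yet unlabelled, one locates an arc entering $v$ (or, in a further subcase, an arc on the witness path) whose removal does not break any required $s'$--$t'$ connection, because every such connection can be rerouted through the labelled vertex $v'<_P v$ and then along $v'[P]v$. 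The paper carries this out in three cases (depending on whether some in-neighbour of $v$ off $P$ has a $P$-avoiding path to an earlier vertex of $P$, and on the structure of the set of vertices reaching $v$); your proposal gestures at minimality but does not plan this argument, and without it the bound $|I_P|\le 2q-2$ does not follow.

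The discrepancy you note between $2q-2$ and $2q-4$ is harmless: the paper's labels allow $x\in\{s,t\}$, and only the two labels $(s,\leftarrow)$ and $(t,\rightarrow)$ are excluded as they would be assigned to $s$ and $t$ themselves.
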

\begin{proof}
We bound the number of important vertices by inspecting the interaction between the path $P$ and other paths in the solution $H$.
We claim that every important vertex on the path $P$ can be labelled by a pair consisting of a terminal $x \in T$ and a direction $\left\{ \leftarrow, \rightarrow \right\}$.
In order to prove this, we label an important vertex $v\in V(P)$ by $(x, \leftarrow)$ if there is a directed $P$-avoiding path from a terminal $x$ to $v$ in $H$ and $v$ is the closest to $s$ among all such vertices of $P$.
Similarly, we label an important vertex $v\in V(P)$ by $(x, \rightarrow)$ if there is a directed $P$-avoiding path from $v$ to a terminal $x$ in $H$ and $v$ is the closest to $t$ among all such vertices of $P$.
\begin{claim}\appmark\label{clm:important_receive_a_label}
Every important vertex received some label.
\end{claim}
It follows from the above claim that the number of important vertices is bounded by the possible number of labels with is $2q - 2$.
This is because $(s, \leftarrow)$ and $(t, \rightarrow)$ labels are never assigned to any important vertex of $P$ (as they would be assigned to $s$ and $t$, respectively).

\toappendix{
\begin{proof}[Proof of Claim~\ref{clm:important_receive_a_label}]

Suppose for contradiction that there is an important vertex $v \in V(P)$ with no assigned label.
Assume that there is a $P$-avoiding directed path from some terminal not on $P$ to $v$ (a symmetric argument works if there is a $P$-avoiding directed path from $v$ to some terminal not on $P$).
There must be an arc $uv$ in $H$ such that $u \notin V(P)$.

Suppose first that there is an arc $uv$ in $H$ such that $u \notin V(P)$ but there is a $P$-avoiding path $Q$ from $u$ to some vertex $v'$ on $P$ with $v' <_P v$. As any path using the arc $uv$ can be detoured to use the path $Q$ and then $v'[P]v$, removing the arc $uv$ from $H$ does not violate any connection requested in $R$, contradicting the inclusion-minimality of $H$. 

Suppose next that there is an arc $uv$ in $H$ such that $u \notin V(P)$ and there is no $P$-avoiding path from a vertex $v'$ on $P$ with $v' >_P v$ to $u$. We claim that the arc $uv$ can be removed from $H$ without violating any connection requested in $R$, contradicting the inclusion-minimality of $H$.
Suppose that $s'$ is a terminal such that there is a path $Q$ from $s'$ to $v$ in $H$ but not in $H - uv$.
The arc $uv$ is the last arc of $Q$.
If $\mathring{Q}$ contained a vertex of $P$, then let $v'$ be the last such vertex. 
If $v' <_P v$, then we would have the walk $s'[Q]v' \circ v'[P]v$ from $s'$ to $v$ in $H - uv$, which can be shortened to a path from $s'$ to $v$ in $H - uv$, contradicting our assumptions. If $v' >_P v$, then we would have $P$-avoiding path from a vertex $v'$ on $P$ with $v' >_P v$ to $u$, again contradicting our assumptions. Hence $Q$ is $P$-avoiding and witnesses that $v$ is important.
As $v$ did not receive the label $(s',\leftarrow)$ via the above described procedure,
there is an important vertex $v'$ of $P$ labeled $(s', \leftarrow)$ with $v' <_P v$.
But then there is a $P$-avoiding path $Q'$ from $s'$ to $v'$ in $H$ and, as it does not go trough $v$, it is preserved in $H - uv$. 
But then $Q' \circ v'[P]v$ is a path from $s'$ to $v$ in $H - uv$, contradicting our assumption.

Finally let $U$ be the set of vertices $u \notin V(P)$ having a $P$-avoiding path from $u$ to $v$, not having a $P$-avoiding path from $u$ to any $v'$ on $P$ with $v' <_P v$ and having a $P$-avoiding path from a vertex $v'$ on $P$ with $v' >_P v$ to $u$. By the previous two cases, all in-neighbors of $v$ not on $P$ are in $U$.  Let a terminal $y$ and a $P$-avoiding path $Q$ from $y$ to $v$ witness that $v$ is important. Since $v$ did not receive the label $(y,\leftarrow)$ via the above described procedure, $y$ is not in $U$. Nevertheless the second to last vertex of $Q$ is in $U$. 
Let $ww'$ be the arc of this path such that $w \notin U$ and $w' \in U$. 
We claim that the arc $ww'$ can be removed from $H$ without violating any connection requested in $R$, contradicting the inclusion-minimality of $H$.

Suppose that $s'$ is a terminal such that there is a path from $s'$ to $w'$ in $H$ but not in $H - ww'$.
As $v$ did not receive the label $(s',\leftarrow)$ via the above described procedure,
there is an important vertex $v'$ of $P$ labeled $(s', \leftarrow)$ with $v' <_P v$.
But then there is a $P$-avoiding path $Q'$ from $s'$ to $v'$ in $H$. 
This path cannot go through $w'$, since $w'$ is in $U$ and, thus, does not have a $P$-avoiding path to any $v'$ on $P$ with $v' <_P v$. 
Therefore, $Q'$ is preserved in $H - ww'$.
Moreover, since $w' \in U$, there is a $P$-avoiding path $Q''$ from some vertex $v''$ with $v''>_P v$ to $w'$. 
But then $Q' \circ v'[P]v'' \circ Q''$ is a path from $s'$ to $w'$ in $H - ww'$, contradicting our assumption.
\end{proof}
}

As to the moreover part, it follows from the labeling procedure that if an important vertex $v$ received a label $(\leftarrow, x)$, then there is a $P$-avoiding path $Q$ in $H$ from $x$ to $v$. If this path contains another terminal, then let $y$ be the terminal closest to $v$ on $Q$. We claim that $v$ also received the label $(\leftarrow, y)$. If not, then there would be another vertex $v'$ on $P$ with this label with $v'<_P v$ and a $P$-avoiding path $Q'$ from $y$ to $v'$. But then $x[Q]y \circ y[Q']v'$ is a walk that can be shortened to a $P$-avoiding path from $x$ to $v'$, contradicting $x$ receiving the label $(\leftarrow, x)$. Hence, each important vertex has a $(V(P) \cup T)$-avoiding path to or from some terminal, such that it has a label of that terminal. To prove the moreover part it remains to set $g_P(v)$ to any such terminal.
\end{proof}

\lv{
The following lemma shows that the influence of non-important vertices is limited.
}

\begin{lemma}\appmark\label{lem:nonimportant}
 If $v$ is a vertex in $V(P) \setminus I_P$, then its out-degree is at most 2. Moreover, if $u$ is its out-neighbor not on $P$, then there is a $P$-avoiding path from $u$ to some vertex $v' \in V(P)$ with $v' <_P v$.
\end{lemma}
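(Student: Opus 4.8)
The plan is to mimic the structure of the proof of Claim~\ref{clm:important_receive_a_label}, exploiting inclusion-minimality of $H$ together with the defining property of a non-important vertex, namely that for $v \in V(P)\setminus I_P$ there is \emph{no} $P$-avoiding directed path from $v$ to any terminal off $P$ (and, symmetrically, none from any such terminal to $v$). Fix $v \in V(P)\setminus I_P$ and let $u$ be an out-neighbor of $v$ with $u \notin V(P)$; since $v$ is non-terminal, it suffices to show (i) that there is at most one such $u$ and (ii) that for any such $u$ there is a $P$-avoiding path from $u$ to a vertex $v' \in V(P)$ with $v' <_P v$. Together with the at most one on-$P$ out-neighbor (the successor of $v$ on $P$), this gives out-degree at most~$2$. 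Note $v$ is not the last vertex $t$ of $P$ in the interesting case, since $t$ is a terminal; and if $v = s$ then $v$ has no predecessor but the argument below still produces the required $v' <_P v$, which is impossible, forcing $v=s$ to have no off-$P$ out-neighbor — consistent with the statement.

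For the main claim (ii), suppose $u \notin V(P)$ is an out-neighbor of $v$. Consider whether there is a $P$-avoiding path from $u$ to some $v' \in V(P)$. If such a path exists, I claim it must reach a $v'$ with $v' <_P v$: if instead every $P$-avoiding path from $u$ hit $P$ only at vertices $v' \ge_P v$, I would argue that the arc $vu$ can be deleted from $H$ without destroying any requested connection, contradicting inclusion-minimality. Indeed, suppose some path $Q$ from a terminal $s'$ to a terminal $t'$ (with $s't' \in A(R)$) uses the arc $vu$. Let $w$ be the last vertex of $Q$ lying on $P$ before the step into $u$; then $w \le_P v$. Follow $P$ from $w$ to $v$ — wait, this only reroutes up to $v$, not past the deleted arc. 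The correct rerouting: after $vu$, the path $Q$ continues from $u$ and must eventually leave the "$u$-side"; trace $Q$ forward from $u$ until it first returns to $V(P)$ at some vertex $v''$. If no such return exists, then the portion of $Q$ after $u$ reaches $t'$ while staying $P$-avoiding, so $u$ (hence $v$, via $vu$) would have a $P$-avoiding path to the terminal $t' \notin V(P)$, contradicting $v \notin I_P$. So $Q$ returns to $P$ at $v''$, and by assumption $v'' \ge_P v$; but then $s'[Q]w \circ w[P]v''$ followed by $v''[Q]t'$ is a walk in $H - vu$ (it avoids the arc $vu$ since $w \le_P v \le_P v''$ and we traverse $P$ monotonically between them), which shortens to an $s'$-$t'$ path in $H - vu$. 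Hence deleting $vu$ violates nothing, the contradiction. Therefore a $P$-avoiding path from $u$ to $P$ must exist and must reach some $v' <_P v$. It remains to rule out that $u$ has \emph{no} $P$-avoiding path to $V(P)$ at all: in that case any $s'$-$t'$ path through $vu$ would, as above, witness a $P$-avoiding path from $u$ to $t' \notin V(P)$, again contradicting $v \notin I_P$, unless deleting $vu$ is harmless — which it then is, contradicting minimality. So $u$ does have a $P$-avoiding path to some $v' <_P v$, proving the "moreover" part.

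For the degree bound (i), suppose $v$ had two distinct off-$P$ out-neighbors $u_1, u_2$. By the moreover part just proved, each $u_i$ has a $P$-avoiding path $Q_i$ to some $v_i <_P v$. Then for one of the two arcs, say $vu_1$, I argue deletion is harmless: any $s'$-$t'$ path using $vu_1$ can be rerouted, after reaching $v$, along $v[P^{-1}]$? No — better: reach $v$ along the original path up to its last $P$-vertex $w \le_P v$, then take $w[P]v$? That still uses vertices up to $v$ only. The clean statement is that $Q_1$ gives a detour from $u_1$ back onto $P$ strictly before $v$, so any use of $vu_1$ followed (eventually) by a return to $P$ at $v'' \ge_P v$ is reroutable through $w[P]v''$ as before, and any use with no return to $P$ contradicts $v \notin I_P$; moreover any use that returns to $P$ at $v'' <_P v$ can be short-circuited by $w[P]v''$ directly. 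Hence $H - vu_1$ is still a solution, contradicting inclusion-minimality. Thus $v$ has at most one off-$P$ out-neighbor, and combined with its single on-$P$ successor, $\outdegGr{H}(v) \le 2$.

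The main obstacle I anticipate is the careful bookkeeping in the rerouting arguments: one must ensure the walk obtained after deleting the arc genuinely avoids that arc and can be shortened to a path, handling the cases where the original path $Q$ re-enters $P$ multiple times, and correctly invoking the non-importance of $v$ (the absence of \emph{any} $P$-avoiding path from $v$ to an off-$P$ terminal) to close the case where the detoured path never comes back to $P$. The symmetric statement for in-degree follows by applying the lemma to $\overleftarrow{H}, \overleftarrow{R}$ as discussed in the "Reversing Arcs" paragraph.
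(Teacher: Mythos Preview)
Your argument for the ``moreover'' part (ii) is essentially correct and matches the paper's rerouting idea: if an off-$P$ out-neighbor $u$ of a non-important $v$ had no $P$-avoiding path back to a vertex strictly before $v$ on $P$, then the arc $vu$ could be deleted, since any $s'$-$t'$ path through $vu$ either returns to $P$ at some $v''>_P v$ (reroute via $v[P]v''$) or never returns, producing a $P$-avoiding path from $v$ to a terminal off $P$ and contradicting $v\notin I_P$.

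The gap is in part (i). You pick an arbitrary one of the two off-$P$ out-arcs, say $vu_1$, and claim it can be deleted. But your rerouting for the case where the original path first returns to $P$ at some $v''<_P v$ does not work: you write ``short-circuited by $w[P]v''$ directly'', yet $w$ is the last $P$-vertex on the path before $u_1$, which is $v$ itself, so $w[P]v''$ would have to go \emph{backwards} along $P$. Invoking $Q_1$ does not help either, since $Q_1$ starts at $u_1$ and you have just deleted the only arc into $u_1$ that you control.

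The missing idea is that the choice of which arc to delete is not arbitrary. The paper takes the set $X$ of all vertices reachable from $v$ by a $P$-avoiding path, lets $u'$ be the \emph{leftmost} vertex of $X\cap V(P)$, keeps precisely the out-neighbor $u$ lying on a $P$-avoiding path from $v$ to $u'$, and deletes any other off-$P$ out-neighbor $w$. Then for any path through $vw$ that returns to $P$ at $w'$, one has $w'\ge_P u'$ by the choice of $u'$, so the detour $v[Q]u'\circ u'[P]w'$ (using the retained arc $vu$) reaches $w'$ in $H-vw$. In your framework you could achieve the same by comparing the leftmost return points reachable from each $u_i$ and deleting the arc whose leftmost return is \emph{not} the global minimum; but some such asymmetric choice is necessary, and your current write-up omits it.
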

\toappendix{
\begin{proof}[Proof of Lemma~\ref{lem:nonimportant}]
Suppose $v \in V(P) \setminus I_P$. If $v$ has out-degree 1, then we are done. Otherwise let $v'$ be the out-neighbor of $v$ on $P$ and let $X$ be the set of vertices different from $v$ and $v'$ having a $P$-avoiding path from $v$. Since $v$ is not important, there is no terminal in $X$. 
 
Each vertex $x \in X$ is on some $s'$-$t'$-path in $H$ for some $s't' \in A(R)$, as otherwise we could remove $x$ from $H$ to obtain smaller solution, contradicting the inclusion-minimality of $H$. It follows that each such path through $x$ must intersect $P$ (either $x \in V(P)$ or some vertex after $x$ on the path is in $V(P)$). 

Let $u'$ be the vertex in $X \cap V(P)$ closest to $s$. If $u' <_P v$ then let $u \in X$ be the out-neighbor of $v$ on any $P$-avoiding path $Q$ from $v$ to $u'$. Otherwise do not denote any vertex as $u$.
If there is an out-neighbor $w$ of $v$ different from $v'$ and $u$, then we claim that the arc $vw$ can be removed from $H$ without violating any connection requested in $R$, contradicting the inclusion-minimality of $H$.

Suppose that $t'$ is a terminal such that there is a path $Q'$ from $v$ to $t'$ in $H$ but not in $H - vw$.
That is, the path uses $vw$ as its first arc and must intersect $P$. Let $w'$ be the first vertex of $P$ on $Q$ after $v$. If $w'>_P v$, then $v[P]w' \circ w'[Q']t'$ is a path from $v$ to $t'$ in $H - vw$, contradicting our assumptions. If $w'<_P v$, then $w'>_P u'$ by the choice of $u'$. But then $v[Q]u' \circ u'[P]w' \circ w'[Q']t'$ is a path from $v$ to $t'$ in $H - vw$, again contradicting our assumptions. 
This finishes the proof of the lemma.
\end{proof}
}

The following expresses that in order to bound the diameter of $H'$ it is enough to bound the length of the path $P$ linearly in $|I_P|$.

\begin{lemma}\label{lem:small_distances}
If for every $\bar{s}\bar{t}\in A(R)$ there is a $T$-avoiding path $\tilde{P}$ in $H$ of length at most $c\cdot |I_{\tilde{P}}|$, for some constant $c$, then the distance between any two terminal vertices in the underlying undirected graph $\sym(H)$ of $H$ is at most $8cq$.
\end{lemma}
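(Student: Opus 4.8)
The plan is to reduce the statement to a bounded-diameter property of an auxiliary graph on the terminals, and then to feed that property, together with Lemmas~\ref{lem:important_vertices_on_P} and~\ref{lem:nonimportant}, into a short-cut argument. It suffices to treat each connected component $C$ of $\sym(H)$ separately; fix one and let $T_C=V(R)\cap V(C)$. For every arc $\bar s\bar t\in A(R)$ with $\bar s\neq\bar t$ fix, as the hypothesis provides, a $T$-avoiding path $\tilde P_{\bar s\bar t}$ in $H$ with $|\tilde P_{\bar s\bar t}|\le c\,|I_{\tilde P_{\bar s\bar t}}|$. By Lemma~\ref{lem:important_vertices_on_P} we have $|I_{\tilde P_{\bar s\bar t}}|\le 2q-2$, hence $|\tilde P_{\bar s\bar t}|\le 2c(q-1)$, and in particular $\mathrm{dist}_{\sym(H)}(\bar s,\bar t)\le 2c(q-1)$. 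Let $\Gamma$ be the undirected graph on $T$ with $\{s,t\}\in E(\Gamma)$ iff $st\in A(R)$ or $ts\in A(R)$; then every edge of $\Gamma$ joins terminals at $\sym(H)$-distance at most $2c(q-1)$. It is now enough to prove that (i) $\Gamma[T_C]$ is connected and (ii) $\Gamma[T_C]$ has diameter at most $4$: then a shortest $\Gamma[T_C]$-path between $a$ and $b$ has at most four edges, each realized by a $\sym(H)$-walk of length at most $2c(q-1)$, and concatenating gives $\mathrm{dist}_{\sym(H)}(a,b)\le 4\cdot 2c(q-1)\le 8cq$.

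For (i) I would first use the inclusion-minimality of $H$: for every arc $e\in A(H)$ there is a request $\bar s\bar t$ with no $\bar s$-$\bar t$ path in $H-e$, so every $\bar s$-$\bar t$ path of $H$, and in particular $\tilde P_{\bar s\bar t}$, uses $e$; hence $A(H)=\bigcup_{\bar s\bar t\in A(R)}A(\tilde P_{\bar s\bar t})$, and $C$ is the edge-union of the connected subgraphs $\sym(\tilde P_{\bar s\bar t})$ that it contains, so the intersection graph of those request paths is connected. Next I would show that two intersecting request paths have all four of their endpoints in one component of $\Gamma$: if $\tilde P_{\bar s\bar t}$ and $\tilde P_{\bar s'\bar t'}$ meet in a vertex $w$, then either $w$ is a terminal, hence a common request endpoint, or $w$ is a non-terminal and $\bigl(\bar s[\tilde P_{\bar s\bar t}]w\bigr)\circ\bigl(w[\tilde P_{\bar s'\bar t'}]\bar t'\bigr)$ is a directed $T$-avoiding $\bar s$-$\bar t'$ walk, which shortens to a directed $T$-avoiding path and hence, by Observation~\ref{obs:adjustingR}, to an arc $\bar s\bar t'$ of $R$, giving the $\Gamma$-path $\bar t-\bar s-\bar t'-\bar s'$. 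Walking along the connected intersection graph then places all endpoints of request paths inside $C$, i.e.\ all of $T_C$, into a single component of $\Gamma$.

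For (ii), which I expect to be the main obstacle, I would exploit the important-vertex structure. The basic fact I would establish is that every edge of $\Gamma[T_C]$ lies in a triangle: for $\bar s\bar t\in A(R)$ (the opposite orientation being handled via $\overleftarrow H,\overleftarrow R$), the path $\tilde P:=\tilde P_{\bar s\bar t}$ has length at least $1$, so $|I_{\tilde P}|\ge 1$; if $|\tilde P|>2c$ then $|I_{\tilde P}|>2$ and some important vertex $v$ is internal to $\tilde P$, while if $|\tilde P|\le 2c$ the edge is already very short, so I may assume $v$ is an internal (non-terminal) important vertex; the ``moreover'' part of Lemma~\ref{lem:important_vertices_on_P} then supplies a terminal $g(v)\notin V(\tilde P)$ and a $(V(\tilde P)\cup T)$-avoiding directed path $Q$ between $v$ and $g(v)$, and splicing $Q$ at $v$ with each of the two subpaths $\bar s[\tilde P]v$ and $v[\tilde P]\bar t$ yields directed $T$-avoiding paths between $g(v)$ and each of $\bar s,\bar t$, so $\{\bar s,g(v)\},\{\bar t,g(v)\}\in E(\Gamma)$. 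I would then combine (a) this triangle structure, (b) the bound $|g_{\tilde P}^{-1}(x)|\le 2$, which for long $\tilde P$ forces its endpoints to have many distinct common $\Gamma$-neighbours, (c) Lemma~\ref{lem:nonimportant}, controlling how non-important vertices of $\tilde P$ attach to it, and (d) once more the inclusion-minimality of $H$, to exclude a shortest $\Gamma[T_C]$-path of length $\ge 5$: such a path should expose a request for which some arc of $H$ is redundant, a contradiction. The hard part is precisely turning the abundance of short-cuts supplied by (a)--(c) into the constant diameter bound; steps (i) and the final concatenation are routine.
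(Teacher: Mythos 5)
There is a genuine gap: your step (ii) — that $\Gamma[T_C]$ has diameter at most $4$ — is false, and it is the load-bearing step of your plan. Take $H$ (and $G$) to be a directed path through the terminals, $t_1\to t_2\to\cdots\to t_q$, with $R$ (even after the adjustment of Observation~\ref{obs:adjustingR}) consisting exactly of the arcs $t_it_{i+1}$. This $H$ is inclusion-minimal, $\sym(H)$ is connected, and the hypothesis of the lemma holds with $c=1$ (each arc $(t_i,t_{i+1})$ is a $T$-avoiding path of length $1$, and its endpoint $t_{i+1}$ or $t_i$ is important because of the neighbouring terminal). Yet $\Gamma[T_C]$ is a path on $q$ vertices, of diameter $q-1$. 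So no constant bound on the diameter of $\sym(R)$ within a component of $\sym(H)$ can be proved, and your scheme ``(number of $\Gamma$-hops) $\times$ (max length of a realizing path)'' cannot yield $O(cq)$: with the only per-edge bound you have, $|\tilde P|\le c|I_{\tilde P}|\le 2c(q-1)$, it degenerates to $\Theta(cq\cdot\operatorname{diam}\Gamma)$, i.e.\ up to $\Theta(cq^2)$. Your triangle claim also does not rescue this, since in the example above every realizing path is short and its important vertices are terminals, so the splicing step producing a third terminal $g(v)\notin V(\tilde P)$ is unavailable.

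The paper's proof keeps the number of hops unbounded and instead bounds the \emph{total} number of important vertices along the hops. Fix two terminals $t_1,t_2$, take a \emph{shortest} path $Q=(t^1,\ldots,t^\ell)$ in $\sym(R)$, and realize each edge by a $T$-avoiding path $Q_i$ with $|Q_i|\le c|I_{Q_i}|$. Your splicing observation is exactly the right ingredient, but used differently: for each important $v\in I_{Q_i}$, the terminal $g_i(v)$ from Lemma~\ref{lem:important_vertices_on_P} gets (via Observation~\ref{obs:adjustingR}) an arc in $R$ to $t^i$ or $t^{i+1}$. Now charge $v$ to $g_i(v)$. If some terminal $t'$ were charged from two segments $Q_i,Q_j$ with $j-i\ge 4$, one could shortcut $Q$ through $t'$, contradicting that $Q$ is shortest in $\sym(R)$; hence each terminal is charged from at most $4$ segments, and since $|g_{Q_i}^{-1}(t')|\le 2$, at most $8$ times in total. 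Thus $\sum_i|I_{Q_i}|\le 8q$ and $\operatorname{dist}_{\sym(H)}(t_1,t_2)\le\sum_i|Q_i|\le c\sum_i|I_{Q_i}|\le 8cq$. This charging argument along a shortest $\sym(R)$-path is the missing idea; the crude bound $|I_{\tilde P}|\le 2q-2$ per edge, which your plan relies on, is never needed.
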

\begin{proof}
Let $t_1,t_2\in T$ be arbitrary two terminal vertices and let \mbox{$Q = (t_1=t^1, t^2, \ldots, t^\ell=t_2)$} be a shortest path from $t_1$ to $t_2$ in $\sym(R)$.
Now let \mbox{$\QQQ = (Q_1,\ldots, Q_{\ell-1})$} be a realization of the path $Q$ in $H$, that is, $Q_i$ is a directed $T$-avoiding path between $t^i$ and $t^{i+1}$ of length at most $c\cdot |I_{Q_i}|$ in $H$ for every $1\le i \le \ell-1$.
Note that it does not matter whether $Q_i$ is a directed path from $t^i$ to $t^{i+1}$ or vice versa.

For $1\le i\le \ell-1$ let $g_i$ be the function $g_{Q_i}$ for the path $Q_i$ from  Lemma~\ref{lem:important_vertices_on_P}. Let $v\in I_{Q_i}$ be an important vertex on $Q_i$.
From Lemma~\ref{lem:important_vertices_on_P} it follows that there is a $(V(Q_i)\cup T)$-avoiding directed path either from $v$ to $g_i(v)$ or from $g_i(v)$ to $v$.
Moreover, since $Q_i$ is $T$-avoiding, there are two $T$-avoiding directed paths in $H$ that go either from $t^i$ to $v$ and from $v$ to $t^{i+1}$ or from $t^{i+1}$ to $v$ and from  $v$ to $t^{i}$.
Therefore, it follows from Observation~\ref{obs:adjustingR} and the choice of $R$ that if a terminal $t'$ is in $g_i(I_{Q_i})$, then there is a $T$-avoiding directed path either between $t'$ and $t^i$ or between $t'$ and $t^{i+1}$ in $H$ and consequently, by our assumptions on $R$, there is an arc between $t'$ and either $t^{i}$ or $t^{i+1}$ in $R$. 

Now, for a terminal $t'$, let $1\le i<j\le \ell-1$ be such that $t'\in \left(g_i(I_{Q_i})\cap g_j(I_{Q_j})\right)$.
Then we claim that $j-i\le 3$.
From the argument above, it follows that there is an edge between $t'$ and $t^i$ or $t^{i+1}$ and between $t'$ and $t^j$ or $t^{j+1}$ in $\sym(R)$.
However, if $j-i\ge 4$, then we can obtain a shorter path than $Q$ in $\sym(R)$ from $t_1$ to $t_2$ by going along $Q$ from $t_1$ to $t^{i}$ or to $t^{i+1}$, then using the aforementioned edges to $t'$ and from $t'$ to $t^{j}$ or $t^{j+1}$ and continuing on $Q$.
This is a contradiction with the choice of $Q$.
Therefore, for each terminal $t'$ there are at most $4$ paths $\bar{Q}\in \QQQ$ such that $t'\in g_{\bar{Q}}(I_{\bar{Q}})$.
Since for each path $\bar{Q}$ and terminal $t'$, it holds that $\left|g_{\bar{Q}}^{-1}(t') \right|\le 2$, it follows that $\sum_{i=1}^{\ell-1}|I_{Q_i}|\le 2\cdot 4\cdot |T|$.
Therefore the distance between $t_1$ and $t_2$ is at most $\sum_{i=1}^{\ell-1}|{Q_i}|\le \sum_{i=1}^{\ell-1}c\cdot|I_{Q_i}|\le 8cq$ and the lemma follows.	
\end{proof}

\begin{lemma}\appmark\label{lem:indegree_of_non_important_vertices}
	Let $p_i$, $p_j$, $p_k$ be three vertices on $P$ such that
	%\begin{itemize}
		%\item
		(1)
		$i< j < k$,
		%\item
		(2)
		there is a path $Q$ from  $p_k$ to $p_i$ that avoids $p_j$, and 
		%\item
		(3)
		every directed path $P'$ from some terminal $s'$ to $p_j$ in $H$ intersect $P$ in a vertex $p_\ell$ such that $p_\ell \neq p_j$ and $\ell \le k$. 
	%\end{itemize}
	Then $p_j$ has no in-neighbor other than $p_{j-1}$ in $H$.
\end{lemma}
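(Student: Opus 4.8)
The plan is to argue by contradiction from the inclusion‑minimality of $H$, in exactly the style of the proofs of Claim~\ref{clm:important_receive_a_label} and Lemma~\ref{lem:nonimportant}. Suppose $p_j$ has an in‑neighbor $u\neq p_{j-1}$ in $H$. I will show that the arc $up_j$ can be removed from $H$ without violating any connection requested by $R$, contradicting the fact that $H$ is inclusion‑minimal.

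First I would fix a request $s't'\in A(R)$ that is ``witnessed only through $up_j$'', i.e.\ such that $H$ contains an $s'$‑$t'$ path but $H-up_j$ does not, and fix such a path $R^{*}$ in $H$; it must use the arc $up_j$. Since $up_j$ enters $p_j$, the vertex $p_j$ lies on $R^{*}$ (exactly once, as $R^{*}$ is a path) and is not its first vertex, so $R^{*}$ decomposes as a path $P'$ from $s'$ to $p_j$ whose last arc is $up_j$, followed by a (possibly trivial) path $P''$ from $p_j$ to $t'$; in particular $P''$ contains no arc entering $p_j$. Now apply hypothesis~(3) to $P'$, which runs from the terminal $s'$ to $p_j$: it yields a vertex $p_\ell\in V(P)$ on $P'$ with $p_\ell\neq p_j$ and $\ell\le k$. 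Because $p_j$ is the last vertex of $P'$ and $\ell\neq j$, the prefix $s'[P']p_\ell$ does not contain $p_j$, and hence does not use $up_j$.

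The core step is to reroute the portion of $R^{*}$ between $p_\ell$ and $p_j$ along $P$ (and, when needed, along $Q$), so as to avoid the arc $up_j$. If $\ell<j$, I would replace that portion by the subpath $p_\ell[P]p_j$; its only arc into $p_j$ is $p_{j-1}p_j\neq up_j$. If $\ell>j$ (the case $\ell=j$ is excluded), I would instead use $p_\ell[P]p_k\circ Q\circ p_i[P]p_j$: the segment $p_\ell[P]p_k$ (valid since $\ell\le k$) visits only vertices of index exceeding $j$, so it avoids $p_j$ altogether; $Q$ avoids $p_j$ by~(2); and $p_i[P]p_j$ (valid since $i<j$) again enters $p_j$ only through $p_{j-1}p_j\neq up_j$. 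In either case, $s'[P']p_\ell$ followed by this detour followed by $P''$ is a walk from $s'$ to $t'$ in $H-up_j$; shortening it gives an $s'$‑$t'$ path in $H-up_j$, contradicting the choice of $s't'$ and $R^{*}$.

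The only point requiring care is checking that the spliced‑in detour never reuses the forbidden arc $up_j$, and as sketched this reduces to observing that each segment taken along $P$ enters $p_j$ only via $p_{j-1}p_j$ — which is precisely where the hypothesis $u\neq p_{j-1}$ is used — while $Q$ avoids $p_j$ and $P''$ starts at $p_j$. I do not anticipate any genuine difficulty beyond this bookkeeping: once the decomposition of $R^{*}$ at $p_j$ is in place, the rest is a routine walk‑shortening argument of the same kind already used repeatedly in this section.
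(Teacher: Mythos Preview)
Your proposal is correct and follows essentially the same argument as the paper: assume another in-neighbor $u$, take an $s'$--$t'$ path through $up_j$, apply hypothesis~(3) to its prefix ending at $p_j$ to locate $p_\ell$ with $\ell\le k$, and reroute via $P$ and $Q$ to obtain a walk avoiding $up_j$, contradicting inclusion-minimality. The only cosmetic difference is that the paper uses the single detour $p_\ell[P]p_k\circ Q\circ p_i[P]p_j$ uniformly (which works regardless of whether $\ell<j$ or $\ell>j$), whereas you split into two cases; both are fine.
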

\begin{proof}
Refer to Fig.~\ref{fig:indegreeOfNonimportant}.
Let $u\neq p_{j-1}$ be an in-neighbor of $p_j$. Let $s't'$ be an arc in $R$ such that the arc $up_j$ is on a path $P'$ from $s'$ to $t'$ in $H$. We show that there is a directed path from $s'$ to $t'$ in $H-up_j$. 
By our assumption, it follows that $s'[P']p_j$ intersects $P$ in a vertex $p_\ell$ such that $\ell<k$.
Therefore, the walk
$s'[P']p_\ell \circ p_\ell[P]p_k \circ p_k[Q]p_i \circ p_i[P]p_j \circ p_j[P']t'$ induces a directed path from $s'$ to $t'$ in $H-up_j$. Since this is true for every pair of terminals $s', t'$ with an $s'$-$t'$ path in $H$, it contradicts the inclusion-minimality of $H$ and hence the only in-neighbor of $p_j$ is $p_{j-1}$.
\begin{figure}[bt]
  \begin{minipage}{.5\textwidth}
  \usetikzlibrary{calc}

\begin{tikzpicture}
\tikzstyle{vertex}=[fill, circle, inner sep=2pt]
\tikzstyle{notImpVertex}=[circle, inner sep=2.2pt, fill=red]
\tikzstyle{mrkVertex}=[circle, draw, inner sep=3pt, fill=green]
\tikzstyle{mrkImpVertex}=[circle, draw=red, thick, inner sep=3pt, fill=green]

\tikzstyle{blackPath}=[->, thick]
\tikzstyle{greenPath}=[blackPath, green!40]
\tikzstyle{redPath}=[blackPath, red!90]

%\selectcolormodel{gray}

%% vertices
\node[label={45:$P$}] (P0) {};
\node[] at ($(P0) + (6.5,0)$) (PLast) {};
\draw[blackPath] (P0) to (PLast);

\node[label={270:$p_i$},vertex] at ($(P0)!.2!(PLast)$) (pi) {};
\node[label={270:$p_j$},vertex] at ($(P0)!.5!(PLast)$) (pj) {};
\node[label={270:$p_k$},vertex] at ($(P0)!.8!(PLast)$) (pk) {};

\draw[draw=red, ->, thick, dashed] (pk) to[out=135, in=45] node[above, xshift=1.5cm, yshift=-.5cm] {Q} (pi);

\coordinate (pl) at ($(P0)!.7!(PLast)$);
%\draw[draw=orange!60,thick,->] ($(pl) - (.5,.4)$) node[below,yshift=.18cm] {$s'$} to[out=270] (pl) to[out=210,in=60] (pj) to ($(pj) - (1, .3)$) node[below,yshift=.1cm] {$t'$};
\draw[draw=orange!60,thick,->] ($(pl) - (.3,.3)$) node[below,yshift=.18cm,xshift=.05cm] {$s'$} to (pl) to[out=135,in=45] (pj) to ($(pj) - (1, .3)$) node[below,yshift=.2cm,xshift=-.2cm] {$t'$};
\end{tikzpicture}
  \end{minipage}
  \begin{minipage}{.47\textwidth}
  \caption{\label{fig:indegreeOfNonimportant}
  The three vertices $p_i,p_j,p_k$ on a directed path $P$ as in Lemma~\ref{lem:indegree_of_non_important_vertices}.
  The orange (light gray) path cannot exists as it is rerouted via $p_k$ and $Q$ (dashed); contradicting the minimality of the solution.
  }
  \end{minipage}
\end{figure}
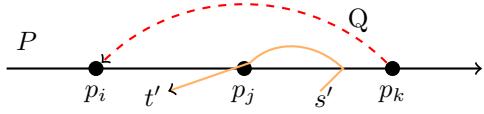
\end{proof}

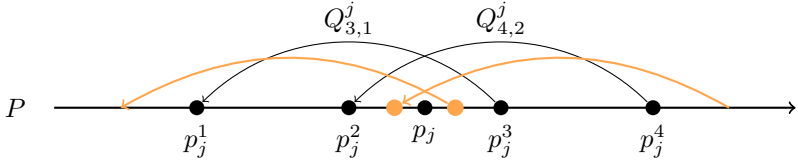
\begin{figure}
  \usetikzlibrary{calc}

\begin{tikzpicture}
\tikzstyle{vertex}=[fill, circle, inner sep=2pt]
\tikzstyle{notImpVertex}=[circle, inner sep=2.2pt, fill=orange!70]
\tikzstyle{mrkVertex}=[circle, draw, inner sep=3pt, fill=green]
\tikzstyle{mrkImpVertex}=[circle, draw=red, thick, inner sep=3pt, fill=green]

\tikzstyle{blackPath}=[->, thick]
\tikzstyle{greenPath}=[blackPath, green!40]
\tikzstyle{redPath}=[blackPath, red!90]

%\selectcolormodel{gray}

%% vertices
\node[label={180:$P$}] (P0) {};
\node[] at ($(P0) + (10,0)$) (PLast) {};
\draw[blackPath] (P0) to (PLast);

\node[label={270:$p_j^1$},vertex] at ($(P0)!.2!(PLast)$) (pj1) {};
\node[notImpVertex] at ($(P0)!.54!(PLast)$) (first) {};
\node[label={270:$p_j^2$},vertex] at ($(P0)!.4!(PLast)$) (pj2) {};
\node[label={270:$p_j$},vertex] at ($(P0)!.5!(PLast)$) (pj) {};
\node[label={270:$p_j^3$},vertex] at ($(P0)!.6!(PLast)$) (pj3) {};
\node[notImpVertex] at ($(P0)!.46!(PLast)$) (second) {};
\node[label={270:$p_j^4$},vertex] at ($(P0)!.8!(PLast)$) (pj4) {};

%\draw[draw=red, ->, thick, dashed] (pk) to[out=120, in=60] node[above, xshift=1.5cm, yshift=-.5cm] {Q} (pi);

\draw[->] (pj4) to[out=135, in=45] node[midway, yshift=.3cm] {$Q^j_{4,2}$} (pj2);
\draw[->] (pj3) to[out=135, in=45] node[midway, yshift=.3cm] {$Q^j_{3,1}$}(pj1);

\draw[->,orange!70,thick] (first) to[out=150, in=30] ($(P0)!.1!(PLast)$);
\draw[->,orange!70,thick] ($(P0)!.9!(PLast)$) to[out=150, in=30] (second);

%\coordinate (pl) at ($(P0)!.33!(PLast)$);
%\draw[draw=orange!60,thick,->] ($(pl) + (.5,-.4)$) node[below,yshift=.18cm] {$s'$} to[out=180, in=270] (pl) to[out=30,in=120] (pj) to[out=300] ($(pj) + (1, -.3)$) node[below,yshift=.15cm] {$t'$};
\end{tikzpicture}
  \caption{\label{fig:atMostTwoPoints}
  The four vertices on path $P$ for the vertex $p_j$.
  By the choice of $p_j^1$ and $p_j^4$, the orange (light gray) paths cannot exist.
  }
\end{figure}
For a vertex $p_j\in V(P)$ let $p_j^1, p_j^2,p_j^3,p_j^4$ denote the following four vertices (see Fig.~\ref{fig:atMostTwoPoints}):
\begin{itemize}
	\item $p_j^1$ is the leftmost vertex on $P$ such that there is a $P$-avoiding path from a vertex $p_x$, with $x\ge j$ to $p_j^1$,
	\item $p_j^3$  is a vertex such that $p_j\le_P p_j^3$ and $p_j^3$ is the first vertex of some $P$-avoiding path to $p_j^1$,
	\item $p_j^4$ is the rightmost vertex on $P$ such that there is a $P$-avoiding path from $p_j^4$, to some vertex $p_y$ with $y \le j$, and
	\item $p_j^2$, is a vertex such that  $p_j^2\le_P p_j$ is the last vertex of some $P$-avoiding path from $p_j^4$.
\end{itemize}
\lv{
Note that $p_j^1\le_P p_j^2\le_P p_j \le_P p_j^3 \le_P p_j^4$ and that some of the vertices $\left\{p_j, p_j^1, p_j^2, p_j^3, p_j^4 \right\}$ might coincide.
In particular, if there is no $P$-avoiding path that starts in or after $p_j$ on $P$ and ends in or before $p_j$, then $\left\{p_j, p_j^1, p_j^2, p_j^3, p_j^4 \right\}=\left\{p_j\right\}$.
}
Furthermore, let us denote $Q_{3,1}^j$ and $Q_{4,2}^j$ the $P$-avoiding paths from $p_j^3$ to $p_j^1$ and from $p_j^4$ to $p_j^2$, respectively, and let $Q_{4,1}^j$ denote the path $Q_{4,2}^j\circ p_j^2[P]p_j^3\circ Q^j_{3,1}$.

\begin{lemma}\appmark\label{lem:vertices_around_the_I_P}
For every $p_j$ there are at most two vertices in $V(P)\setminus (I_P\cup \{p_j^2, p_j^3\})$ between $p_j^1$ and $p_j^4$. 
\end{lemma}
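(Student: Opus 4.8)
The plan is to show that any vertex $p_\ell$ on $P$ strictly between $p_j^1$ and $p_j^4$ that is neither important nor one of $p_j^2, p_j^3$ has in-degree exactly one in $H$ (its only in-neighbor being $p_{\ell-1}$), with at most two exceptions; then a counting argument finishes the lemma, since in an inclusion-minimal solution every vertex lies on a realized path, and a long stretch of in-degree-one vertices can be shortcut. First I would invoke Lemma~\ref{lem:indegree_of_non_important_vertices} with the triple $(p_i,p_j,p_k)$ instantiated as $(p_j^1, p_\ell, p_j^4)$: condition (1) $p_j^1 <_P p_\ell <_P p_j^4$ holds by assumption; condition (2) is witnessed by the path $Q_{4,1}^j = Q_{4,2}^j \circ p_j^2[P]p_j^3 \circ Q_{3,1}^j$ from $p_j^4$ to $p_j^1$, provided this path avoids $p_\ell$ — and here is where excluding $p_\ell \in \{p_j^2, p_j^3\}$ matters, since the $P$-portion of $Q_{4,1}^j$ is exactly $p_j^2[P]p_j^3$, so $Q_{4,1}^j$ meets $P$ only in $p_j^1, p_j^2, p_j^3, p_j^4$; condition (3) must say that every directed path from a terminal to $p_\ell$ meets $P$ in a vertex $p_m \neq p_\ell$ with $m \le$ (index of $p_j^4$).

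The heart of the argument is verifying condition (3), and this is the step I expect to be the main obstacle. Let $P'$ be a directed path from a terminal $s'$ to $p_\ell$. Since $p_\ell$ is not important with respect to $P$, the path $P'$ cannot be $P$-avoiding, so it meets $P$; let $p_m$ be the last vertex of $P'$ on $P$ before reaching $p_\ell$, so $p_m \neq p_\ell$. I need $m \le$ the index of $p_j^4$. Suppose not, i.e.\ $p_\ell <_P p_j^4 <_P p_m$. Then the suffix $p_m[P']p_\ell$ is a $P$-avoiding path (by the choice of $p_m$ as the last intersection) from a vertex of $P$ lying after $p_j^4$ down to $p_\ell$, and since $p_\ell <_P p_j^4$, concatenating appropriately would exhibit $p_m$ as a vertex violating the extremal choice defining $p_j^4$ — recall $p_j^4$ is the rightmost vertex admitting a $P$-avoiding path to some $p_y$ with $y \le j$; I would need to massage this to reach $p_y$ with $y \le j$, using that $\ell$ might exceed $j$, by routing $p_m[P']p_\ell$ then along $P$ backwards is not allowed ($P$-avoiding forbids it), so instead I would need $\ell \le j$ or otherwise combine with $Q_{3,1}^j$ to descend below $p_j$. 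The careful case split on whether $\ell \le j$ or $\ell > j$ (using the symmetric structure and the definitions of $p_j^1$ through $p_j^4$, together with Observation~\ref{obs:adjustingR} so that $s'$ can be taken arbitrary) is the technical crux.

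Once in-degree one is established for all such $p_\ell$ save possibly two, I finish as follows. Suppose there were three or more vertices of $V(P)\setminus(I_P \cup \{p_j^2,p_j^3\})$ between $p_j^1$ and $p_j^4$; pick three of them, $p_a <_P p_b <_P p_c$, each with unique in-neighbor its $P$-predecessor. By Lemma~\ref{lem:nonimportant} (applied in the reversed graph $\overleftarrow{H},\overleftarrow{R}$ via the symmetry remark, so that we control out-degrees too) and the in-degree bound, the subpath $p_a[P]p_c$ is a ``thin'' segment: every vertex strictly inside it has its only in-neighbor on $P$ and (by the same argument symmetrically, or by Lemma~\ref{lem:nonimportant}) any out-neighbor off $P$ only leads back to an earlier point of $P$. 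Then no realized $s'$-$t'$ path can enter the interior of $p_a[P]p_c$ except by traversing $p_a$ first and must leave only through $p_c$ forward or back before $p_a$; hence the arc $p_b p_{b+1}$ (or the single in-arc of $p_b$) is redundant — any path through it can be rerouted — contradicting inclusion-minimality of $H$. This yields the bound of at most two such vertices and completes the proof.
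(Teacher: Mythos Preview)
There are two concrete gaps.

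\textbf{The avoidance claim for $Q_{4,1}^j$ is false.} You write that $Q_{4,1}^j$ ``meets $P$ only in $p_j^1, p_j^2, p_j^3, p_j^4$''. But $Q_{4,1}^j = Q_{4,2}^j \circ p_j^2[P]p_j^3 \circ Q_{3,1}^j$, so its intersection with $V(P)$ is $\{p_j^1,p_j^4\} \cup V\bigl(p_j^2[P]p_j^3\bigr)$ --- every vertex of $P$ between $p_j^2$ and $p_j^3$, not just the two endpoints. Hence for any $p_\ell$ with $p_j^2 <_P p_\ell <_P p_j^3$ (and such $p_\ell$ certainly can exist and be non-important), your intended witness for condition~(2) of Lemma~\ref{lem:indegree_of_non_important_vertices} fails. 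The paper handles this by first splitting the interval at $p_j$ and treating only $p_j^1 \le_P p_x \le_P p_j$ (the other half by symmetry in $\overleftarrow{H}$); inside that half it case-splits once more on whether $p_x$ lies before or after $p_j^2$, using the $P$-avoiding path $Q_{4,2}^j$ (with the triple $p_j^2,p_x,p_j^4$) in the latter case and the full $Q_{4,1}^j$ (with the triple $p_j^1,p_x,p_j^4$) only in the former case, where $Q_{4,1}^j$ genuinely avoids $p_x$. This split also dissolves your difficulty with condition~(3): once $x \le j$, the last-intersection vertex $p_y$ of any terminal-to-$p_x$ path gives a $P$-avoiding path from $p_y$ to a vertex with index $\le j$, so $p_y \le_P p_j^4$ directly from the extremal definition of $p_j^4$, with no further case analysis needed.

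\textbf{The counting/finishing step does not work as stated.} Having three in-degree-one vertices $p_a <_P p_b <_P p_c$ on $P$ does not by itself yield a redundant arc: each of them may still have an off-$P$ out-arc whose back-path (from Lemma~\ref{lem:nonimportant}) lands at pairwise different positions, and then no single arc is dispensable. The paper's argument is different and uses the preprocessing to $H_{\ge 3}$: in-degree one forces out-degree $\ge 2$ for every such vertex; pick the one, $p_\ell$, whose $P$-avoiding out-path reaches the leftmost landing point $p_{\ell'}$. For any other candidate $p_k$, one then applies Lemma~\ref{lem:indegree_of_non_important_vertices} once more --- in $H$ if $k < \ell$ and in $\overleftarrow{H}$ if $k > \ell$, with the appropriate triple built from $p_\ell$, $p_{\ell'}$, and $Q_{3,1}^j$ --- to force $p_k$ to have total degree at most two, contradicting $H_{\ge 3}$. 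This yields exactly one such vertex on each side of $p_j$, hence two in total. Your proposed rerouting of ``the arc $p_b p_{b+1}$'' does not go through, because paths entering at $p_a$ may need to exit via $p_b$'s off-$P$ out-arc rather than continuing to $p_c$.
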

\toappendix{
\begin{proof}[Proof of Lemma~\ref{lem:vertices_around_the_I_P}]
We first show that all the non-important vertices with respect to $P$ between $p_j^1$ and $p_j$ other than  $p_j^2$ have in-degree one.

Let $p_x$ be a non-important vertex between $p_j^1$ and $p_j$ on $P$. Since $p_x$ is not important, every path $P'$ from a terminal $s'$ to $p_x$ intersects $P$. The subpath of $P'$ between the last intersection of $P'$ with $P$, say $p_y$, and $p_x$ is $P$-avoiding. Hence $p_y \le_P p_j^4$ as $p_x \le_P p_j$. Because the choice of $P'$ was arbitrary, every path from a terminal to $p_x$ intersects $P$ is some vertex $p_y$ with $p_y\le_P p_j^4$. If $p_j^2\le_P p_x$, then the vertices  $p_j^2$, $p_x$, $p_j^4$ satisfy all the assumptions of Lemma~\ref{lem:indegree_of_non_important_vertices}. Otherwise, $p_j^1\le_P p_x\le_P p_j^2$, $Q_{4,1}^j$ avoids $p_x$, and the vertices  $p_j^1$, $p_x$, $p_j^4$ satisfy all the assumptions of Lemma~\ref{lem:indegree_of_non_important_vertices}. %\todo{Claim 1 right now} that all such vertices between $p_j^1$ and $p_j$ have in-degree $1$.
Hence, $p_x$ has only one in-neighbor, namely $p_{x-1}$.

It follows that every non-important vertex $v$ between $p^1_j$ and $p_j$ has out-degree at least two.
By Lemma~\ref{lem:nonimportant} for every non-important vertex $v$ with respect to $P$ with out-degree at least two there is a $P$-avoiding path that starts in this vertex and ends in a different vertex on $P$.
Let $p_\ell$ be the vertex in $V(P)\setminus (I_P\cup \{p_j^2, p_j^3\})$ between $p_j^1$ and $p_j$ with out-degree at least two such that a $P$-avoiding path starting in $p_\ell$ takes us to a vertex $p_{\ell'}$ with the lowest index $\ell'$.
Let $Q_\ell$ be a $P$-avoiding path from $p_\ell$ to $p_{\ell'}$. We claim that $p_\ell$ is the only vertex in $V(P)\setminus (I_P\cup \{p_j^2, p_j^3\})$ between $p_j^1$ and $p_j$.
Let $p_k$ be a non-important vertex between $p_j^1$ and $p_j$ other than $p_\ell$. Note that from the choice of $p_\ell$ it follows that every $P$-avoiding path from $p_k$ back to $P$ ends in some vertex $p_{k'}$ with $\ell'\le k'$. 
Now if $k\le \ell$ then Lemma~\ref{lem:indegree_of_non_important_vertices} applies to $p_{\ell'}$, $p_{k'}$, and $p_{\ell}$ contradicting the existence of a $P$-avoiding path from $p_{k}$ to $p_{k'}$.
Otherwise $k > \ell$ and the path $Q_{3,1}^j\circ p_j^1[P]p_\ell \circ Q_\ell$ avoids $p_k$ and the Lemma~\ref{lem:indegree_of_non_important_vertices} applied to one of $p_\ell$ or $p_j^3$, together with $p_k$, and $p_{\ell'}$ in $\overleftarrow{H}$ implies that $p_k$ has out-degree one in $H$.
Hence, $p_k$ has total degree at most two -- a contradiction with the choice of $H$ (see Lemma~\ref{lem:degreeTwoVertices}).

Therefore, there is at most one vertex in $V(P)\setminus (I_P\cup \{p_j^2, p_j^3\})$ between $p_j^1$ and $p_j$, namely $p_\ell$.
Using the same arguments in $\overleftarrow{H}$, we obtain that there is at most one vertex in $V(P)\setminus (I_P\cup \{p_j^2, p_j^3\})$ between $p_j$ and $p_j^4$.
\end{proof}
}

For the rest of this section, let us define %for an $s$-$t$ path $P$
the set $Q_P=\left\{p_j^1, p_j^2, p_j^3, p_j^4\mid p_j\in I_P \right\}$. It is easy to see that $\left|Q_P\right|\le 4\left|I_P\right|$. We will call the set $Q_P$ the \emph{set of marked vertices for $P$}.

Note that the same vertex in $Q_P$ may be marked for different reasons at the same time.
That is, for example, the same vertex can be denoted $p_j^1$, because it is the first vertex for the important vertex $p_j$ and at the same time it can be denoted $p_k^3$, because it is also third marked vertex for the important vertex $p_k$ with respect to $P$.

\subsection{Ladders}
In this subsection we define ladder graphs.
These graphs play crucial role as we will be able to show that if there is a $T$-avoiding $s$-$t$-path for $st\in A(R)$ that is ``long'', then in $H$ there is a ``large'' ladder (Lemma~\ref{lem:ladder_structure_between_important_points}).
Moreover, it is possible to replace such a ladder with one having constant size while preserving all connections and inclusion-minimality (Lemma~\ref{lem:protrusion_replacement}).

\begin{definition}[Class of Ladders]
Let $n$ be a positive integer and $I \subseteq [n]$ a set.
For examples of ladder graphs see Fig.~\ref{fig:ladder}.
We define the directed graph $G_{n}$ and the directed graph $G_{n, I}$ as follows.
Vertex set $V(G_{n})$ is the set $\left\{ a_i, b_i \mid i \in [n]\right\}$ the arc set $A(G_{n})$ is the set 
%\begin{align*}
$
\left\{ a_{2i+1}b_{2i+1} \mid 0 \le i < n/2 \right\} \cup	%% downward matching
\left\{ b_{2i}a_{2i} \mid  1\le i \le n/2 \right\} \cup		%% upward matching
\left\{ a_{2i}a_{2i-1} \mid 1 \le i \le n/2 \right\} \cup %\\	%% upper left arcs
\left\{ a_{2i}a_{2i+1} \mid 1 \le i < n/2 \right\} \cup	%% upper right arcs
\left\{ b_{2i+1}b_{2i} \mid 1 \le i < n/2 \right\} \cup	%% lower left arcs
\left\{ b_{2i-1}b_{2i} \mid 1 \le i \leq n/2 \right\} %\,.	%% lower right arcs
$.
%\end{align*}
The graph $G_{n, I}$ is the graph $G_{n}$ where we identify the vertices $a_i$ and $b_i$ whenever $i \in I$ (i.e., $G_{n}$ and $G_{n, \emptyset}$ is the same graph).
We emphasize that we suppress any loops in $G_{n,I}$.
We say that $n$ is the \emph{length} of the ladder $G_{n,I}$.
\end{definition}
\begin{figure}[bt]
  \usetikzlibrary{calc}

\begin{tikzpicture}
\tikzstyle{vertex}=[circle, inner sep=1pt, draw]
\tikzstyle{Larrow}=[->,thick]
\tikzstyle{Rarrow}=[<-,thick]
\tikzstyle{LarrowShortened}=[->,thick,shorten >=14pt]
\tikzstyle{RarrowShortened}=[<-,thick,shorten >=15pt]

\newcommand{\NUM}{6}
\pgfmathsetmacro\NUMm{\NUM - 2}

\foreach \x in {1,2,...,\NUM} {
  \node[vertex] (a\x) at (\x, 1) {$a_{\x}$};
  \node[vertex] (b\x) at (\x, 0) {$b_{\x}$};
}

% &\left\{ a_{2_i+1}b_{2i+1} \mid 0 \le i < n/2 \right\} \cup %% downward matching
% \left\{ b_{2i}a_{2i} \mid  1\le i \le n/2 \right\} \cup     %% upward matching
% \left\{ a_{2_i}a_{2i-1} \mid 1 \le i \le n/2 \right\} \cup \\   %% upper left arcs
% &\left\{ a_{2_i+1}a_{2(i+1)} \mid 0 \le i < n/2 \right\} \cup   %% upper right arcs
% \left\{ b_{2_i+1}b_{2i} \mid 1 \le i < n/2 \right\} \cup    %% lower left arcs
% \left\{ b_{2_i+1}b_{2(i+1)} \mid 0 \le i < n/2 \right\} \,. %% lower right arcs

\foreach \x in {1,3,...,\NUM} {
  \draw[Larrow] (a\x) -- (b\x);
}

\foreach \x in {2,4,...,\NUM} {
  \draw[Larrow] (b\x) -- (a\x);
}

\foreach \x in {2, 4, ..., \NUM} {
  \pgfmathsetmacro\xm{\x -1}
  \draw[Larrow] (a\x) -- (a\xm);
}

\foreach \x in {2, 4, ..., \NUMm} {
  \pgfmathsetmacro\xp{\x + 1}
  \draw[LarrowShortened] (a\x) -- (a\xp);
}

\foreach \x in {2, 4, ..., \NUM} {
  \pgfmathsetmacro\xm{\x -1}
  \draw[Rarrow] (b\x) -- (b\xm);
}

\foreach \x in {2, 4, ..., \NUMm} {
  \pgfmathsetmacro\xp{\x + 1}
  \draw[RarrowShortened] (b\x) -- (b\xp);
}

\begin{scope}[xshift=7cm]
\foreach \x in {1,3,4,6} {
  \node[vertex] (a\x) at (\x, 1) {$a_{\x}$};
  \node[vertex] (b\x) at (\x, 0) {$b_{\x}$};
}

\node[vertex] (a2) at (2, .5) {$i_{2}$};
\node[vertex] (a5) at (5, .5) {$i_{5}$};

\draw[Larrow] (a1) -- (b1);
\draw[Larrow] (a3) -- (b3);
\draw[Larrow] (a2) -- (a1);
\draw[Larrow] (b1) -- (a2);
\draw[Larrow] (a2) -- (a3);

\draw[Larrow] (b3) -- (a2);
\draw[Larrow] (b4) -- (a4);
\draw[Larrow] (a4) -- (a3);
\draw[Larrow] (b3) -- (b4);

\draw[Larrow] (a4) -- (a5);
\draw[Larrow] (a5) -- (b4);
\draw[Larrow] (a6) -- (a5);
\draw[Larrow] (a5) -- (b6);

\draw[Larrow] (b6) -- (a6);
\draw[Larrow] (a1) -- (b1);

\end{scope}

\end{tikzpicture}
  \caption{\label{fig:ladder}%
  Examples ladder graphs.
  The ladder $G_{6} = G_{6, \emptyset}$ to the left and $G_{6, \{2,5\}}$ to the right.
  }
\end{figure}
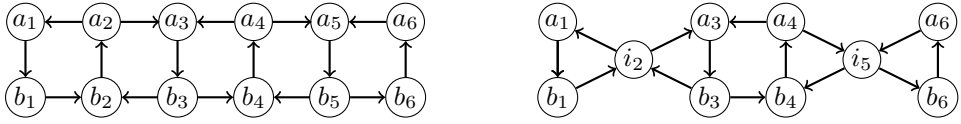

\begin{lemma}\appmark\label{lem:ladder_properties}
	Given a positive integer $n$ and $I\subseteq [n]$, the ladder $G_{n,I}$ is a union of two paths $P_1$ from $a_1$ to $a_n$  and $P_2$ from $b_n$ to $b_1$ if $n$ is even or paths $P_1$ from $a_1$ to $b_n$  and $P_2$ from $a_n$ to $b_1$, if $n$ is odd. Moreover, $G_{n,I}$ is an inclusion-minimal strongly connected graph connecting the set of terminals $\{a_1,b_1,a_n,b_n\}$. 	
	\begin{comment}
satisfy the following properties 
		\begin{itemize}
		\item either $a_1=b$ or $ab\in A(H)$ and $a$ is the only in-neighbor of $b$ and $b$ is the only out-neighbor of $a$,
		\item either $c=d$ or $cd\in A(H)$ and $c$ is the only in-neighbor of $d$ and $d$ is the only out-neighbor of $c$,
		\item  there exists a directed path $P_1$ from $a$ to $d$ in $H[F\cup \{a,b,c,d\}]$ , and
		\item  there exists a directed path $P_2$ from $c$ to $b$  in $H[F\cup \{a,b,c,d\}]$.
	\end{itemize}
	\end{comment}
\end{lemma}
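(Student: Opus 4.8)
The plan is to analyze the structure of $G_{n,I}$ directly from its arc set, which is built from two "rails" and a sequence of "rungs" that alternate direction. First I would describe the two directed paths explicitly. Looking at the arcs, the top rail together with the odd rungs $a_{2i+1}b_{2i+1}$ and the steps $a_{2i}a_{2i-1}$, $a_{2i}a_{2i+1}$ gives a path that weaves: start at $a_1$, go down to $b_1$ (odd rung), go along the bottom $b_1 \to b_2$, up $b_2 \to a_2$ (even rung), then $a_2 \to a_3$ along the top, down $a_3 \to b_3$, and so on. So $P_1$ is the path $a_1, b_1, b_2, a_2, a_3, b_3, b_4, a_4, \ldots$; its final vertex is $a_n$ if $n$ is even and $b_n$ if $n$ is odd. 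Symmetrically, $P_2$ uses the remaining arcs: the even rungs $b_{2i}a_{2i}$, the steps $b_{2i+1}b_{2i}$, $b_{2i-1}b_{2i}$ on the bottom rail going "backwards", and $a_{2i}a_{2i-1}$... wait, I must be careful to partition the arc set correctly, but the point is that $P_2$ runs from $b_n$ (or $a_n$, parity-dependent) back down to $b_1$, and the two paths together cover every arc exactly once. I would verify the parity statement by checking which endpoints are free at each end.

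Next, for strong connectivity and the fact that the terminal set is $\{a_1,b_1,a_n,b_n\}$: from the path decomposition, $P_1$ connects $a_1$ to one of $\{a_n,b_n\}$ and $P_2$ connects the other of $\{a_n,b_n\}$ to $b_1$. Combined with the rungs $a_1b_1$ and $a_nb_n$ (or the identifications when $1\in I$ or $n\in I$), these four terminals lie on a single directed closed walk, so $G_{n,I}$ is strongly connected and certainly connects $\{a_1,b_1,a_n,b_n\}$. For inclusion-minimality I would argue arc by arc: since $A(G_{n,I})$ is exactly the disjoint union of $A(P_1)$ and $A(P_2)$, deleting any arc $e$ breaks one of these two paths, say $P_1$. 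Then I claim the strong-connectivity requirement among the terminals fails. The key observation is that each internal vertex of $G_{n,I}$ has in-degree and out-degree exactly one along the path it belongs to (after the identifications, a vertex $i_i$ with $i\in I$ has in-degree and out-degree stemming from both rails but still no arc is redundant), so every arc is forced: an arc on $P_1$ is the unique way to leave (or enter) its tail (or head) vertex toward the "rest" of the structure, hence it lies on every directed path realizing the relevant terminal-to-terminal requirement. Making this precise requires identifying, for a deleted arc $e = p_ip_{i+1}$ of $P_1$, a terminal pair $st$ in the strongly-connected request such that every $s$-$t$ path in $G_{n,I}$ must traverse $e$ — which follows because removing $e$ disconnects $a_1$ (or $a_n$/$b_n$) from the corresponding other terminal in the underlying "two reversed rails" structure.

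The main obstacle I expect is the bookkeeping around the identified vertices (the case $I \neq \emptyset$) and the boundary arcs near indices $1$ and $n$: the arc-set formula has slightly different ranges for the even and odd families, so one has to check that $P_1$ and $P_2$ really do partition $A(G_{n,I})$ with no leftover or doubly-used arc, and that after identifying $a_i$ with $b_i$ for $i\in I$ no loop survives (the paper explicitly says loops are suppressed) and the path structure is preserved — an identified vertex $i_i$ simply becomes a vertex where $P_1$ and $P_2$ both pass through, but they still each enter and leave it once along distinct arcs. Once the partition is established, both the union-of-two-paths claim and strong connectivity are immediate, and inclusion-minimality reduces to the routine (if slightly tedious) observation that no arc of either path can be bypassed. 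I would organize the write-up as: (1) exhibit $P_1$ and $P_2$ and check they partition the arcs; (2) deduce the endpoint/parity statement; (3) deduce strong connectivity; (4) for each arc, exhibit a terminal requirement it is essential for, proving inclusion-minimality.
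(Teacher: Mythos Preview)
Your approach is essentially the paper's: exhibit $P_1$ and $P_2$ explicitly, observe that their union is all of $A(G_{n,I})$, and then argue inclusion-minimality arc by arc (the paper simply declares the latter ``straightforward to verify'').

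There is one genuine slip, however. You assert that the two paths \emph{partition} the arc set (``cover every arc exactly once'', ``$A(G_{n,I})$ is exactly the disjoint union of $A(P_1)$ and $A(P_2)$''). That is false: both $P_1$ and $P_2$ traverse every rung $a_{2i+1}b_{2i+1}$ and $b_{2i}a_{2i}$, and they differ only on the rail arcs ($P_1$ uses $a_{2i}a_{2i+1}$ and $b_{2i-1}b_{2i}$, while $P_2$ uses $a_{2i}a_{2i-1}$ and $b_{2i+1}b_{2i}$). Your own description of $P_1=(a_1,b_1,b_2,a_2,a_3,\ldots)$ already uses the even rung $b_2a_2$, so it cannot be ``remaining'' for $P_2$. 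The lemma only claims a \emph{union}, not a disjoint one. This does not wreck your minimality argument---deleting any arc still destroys at least one of the two paths, and a rung deletion destroys both---but you should drop the partition language and base the argument on ``every arc lies on $P_1$ or $P_2$ (or both), and breaking $P_i$ kills the corresponding terminal-to-terminal path''.
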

\toappendix{
\begin{proof}[Proof of Lemma~\ref{lem:ladder_properties}]
Let $P_1$ be the path containing precisely the arcs 
\begin{align*}
\left\{ a_{2i+1}b_{2i+1} \mid 0 \le i < n/2 \right\} \cup	%% downward matching
\left\{ b_{2i}a_{2i} \mid  1\le i \le n/2 \right\} \cup		%% upward matching
&\left\{ a_{2i}a_{2i+1} \mid 1 \le i < n/2 \right\} \cup\\	%% upper arcs
&\left\{ b_{2i-1}b_{2i} \mid 1 \le i \le n/2 \right\}.	%% lower arcs
\end{align*}
and let $P_2$ be the path containing precisely the arcs 
\begin{align*}
\left\{ a_{2i+1}b_{2i+1} \mid 0 \le i < n/2 \right\} \cup	%% downward matching
\left\{ b_{2i}a_{2i} \mid  1\le i \le n/2 \right\} \cup		%% upward matching
&\left\{ a_{2i}a_{2i-1} \mid 1 \le i \le n/2 \right\} \cup\\	%% upper arcs
&\left\{ b_{2i+1}b_{2i} \mid 1 \le i < n/2 \right\}.	%% lower arcs
\end{align*}
Clearly, the union of $P_1$ and $P_2$ contains all the arcs in $G_{n,I}$.
Furthermore, the inclusion-minimality is straightforward to verify (refer to Fig.~\ref{fig:ladder}).
\end{proof}
}

\toappendix{
\begin{lemma}[Sufficient Condition]\appmark\label{lem:ladder_properties_only_if}
	Let $K$ be a directed graph, with $a,b,c,d\in V(K)$ such that $a=b$ or $ab\in A(K)$, $c=d$ or $cd\in A(K)$, and $K$ is an inclusion-minimal graph such that 
	\begin{itemize}
%	\item there is no vertex with total degree at most $2$ in $K-\{a,b,c,d\}$,
	\item if $a\neq b$, then  $a$ is the only in-neighbor of $b$ and $b$ is the only out-neighbor of $a$,
	\item if $c\neq d$, then and $c$ is the only in-neighbor of $d$ and $d$ is the only out-neighbor of $c$,
	\item  there exists a directed path $P_1$ from $a$ to $d$, and
	\item  there exists a directed path $P_2$ from $c$ to $b$.
	\end{itemize}
	Then $K$ is a subdivision of a ladder. 
\end{lemma}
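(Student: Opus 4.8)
Fix a directed $a$-$d$ path $P_1$ and a directed $c$-$b$ path $P_2$ in $K$; these exist by hypothesis. Deleting an arc can only shrink neighbourhoods, so the ``only in-neighbour'' and ``only out-neighbour'' conditions are automatically preserved under arc deletion; hence inclusion-minimality of $K$ is forced purely by the two reachability requirements, and since $K[A(P_1)\cup A(P_2)]$ satisfies every condition, $A(K)=A(P_1)\cup A(P_2)$. Combining this with the degree conditions I can describe $K$ near the four special vertices: if $a\neq b$ then $P_1$ starts with the arc $ab$ (as $b$ is the only out-neighbour of $a$), while $b$'s only in-neighbour is $a$ and every arc at $b$ lies on $P_1\cup P_2$, so $P_2$ (which ends at $b$) must end with $ab$ and $b$ has in- and out-degree exactly one; the situation at $d$ is symmetric. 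Thus in the main case ($a\neq b$, $c\neq d$) one has $P_1=(a,b,\dots,c,d)$ and $P_2=(c,d,\dots,a,b)$, and $a,b,c,d$ all have total degree $2$; the cases $a=b$ and/or $c=d$ are identical with the corresponding outer arc collapsed. Every non-terminal vertex of degree $2$ has a single in-arc and a single out-arc, both traversed by whichever of the two paths runs through it, so it can be suppressed without creating parallel arcs or loops (minimality rules out the offending configurations). Let $\hat K$ be obtained from $K$ by suppressing all such vertices; then $K$ is a subdivision of $\hat K$, the arcs $ab$ and $cd$ persist in $\hat K$, and $\hat K$ still satisfies the hypotheses. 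It remains to prove that $\hat K$ equals some ladder $G_{n,I}$.

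\textbf{Induction on $|A(\hat K)|$.} Concatenating $P_1$, the arc $cd$, $P_2$, and the arc $ab$ gives a closed walk through all of $V(\hat K)$, so $\hat K$ is strongly connected. The base cases are $\hat K$ a single arc (which is $G_1$) and $\hat K$ a directed digon (which is $G_{2,\{1,2\}}$). For the inductive step I peel a cycle off the $cd$-end. Since $c$ has a unique in-neighbour $v$ and $d$ a unique out-neighbour $w$, the cycle $C$ through $cd$ has the shape $v\to c\to d\to w\leadsto v$, where $w\leadsto v$ follows $P_2$ onward from $d$ until it first re-meets the final part of $P_1$; minimality forces $C$ to meet the rest of $\hat K$ in exactly one arc $e$ (the ``previous rung'') or, when that rung has degenerated, in exactly one vertex $x$. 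Deleting the private part of $C$ leaves a graph $\hat K'$; taking the endpoints of $e$ as new terminals $c',d'$ (or $c'=d'=x$ in the vertex case) while keeping $a,b$, the graph $\hat K'$ again satisfies all hypotheses of the lemma, and is inclusion-minimal, since any deletable arc of $\hat K'$ lifts to a deletable arc of $\hat K$ through the forced sub-paths of $P_1,P_2$. By the induction hypothesis $\hat K'$ is a ladder with terminals $a,b,c',d'$ in which $c',d'$ form the final rung (or the final, contracted column), and re-attaching $C$ prepends exactly one column; hence $\hat K=G_{n'+1,I}$ with $n'\in I$ precisely when $C$ was attached at a vertex. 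Peeling from the $cd$-end rather than the $ab$-end is what makes this re-attachment a genuine ladder, rather than a ladder with the two rails interchanged.

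\textbf{Main obstacle.} The technical heart — and the step I expect to be most laborious — is the inductive step: showing that the cycle $C$ through $cd$ is attached to the remainder of $\hat K$ along a single arc or a single vertex, which is exactly where inclusion-minimality must be exploited carefully (any spurious arc at that interface would be removable), and then verifying that $\hat K'$ with its new terminals meets every hypothesis, the ``only in-/out-neighbour'' conditions and inclusion-minimality included. One must moreover handle uniformly the degenerate coincidences ($a=b$, $c=d$, the new terminals coinciding with one another or with $a$ or $b$, and the recursion bottoming out at $G_1$ or $G_{2,\{1,2\}}$), and use the $(a,b)\leftrightarrow(c,d)$ together with arc-reversal symmetry of the hypotheses to avoid repeating symmetric arguments — all together a fairly long case analysis, which is presumably why the lemma is relegated to the appendix. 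A minor point worth stating explicitly is that, although the ``only-neighbour'' conditions look like additional constraints, they are monotone under arc deletion and so play no role in determining which arcs are essential — only the two reachability requirements do.
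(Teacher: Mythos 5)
Your overall plan coincides with the paper's: normalize so that every non-terminal vertex has total degree at least $3$ (the paper suppresses such vertices one at a time inside the induction, you do it up front), observe that $A(K)=A(P_1)\cup A(P_2)$ and read off the local structure at $a,b,c,d$, and then induct by peeling one rung off one end (the paper peels $\{a,b\}$ from the $ab$-end, you peel from the $cd$-end; by the arc-reversal symmetry of the hypotheses these are equivalent). The normalization part of your argument is essentially sound.

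The genuine gap is exactly at the step you yourself flag as the ``technical heart'': you assert, without argument, that the cycle $C$ through $cd$ meets the rest of $\hat K$ in a single arc $e$ or a single vertex, and the hint you offer (``any spurious arc at that interface would be removable'') is not a proof --- removability must be witnessed by alternative $a$-$d$ and $c$-$b$ paths, and constructing them is the entire content of the step. Concretely, letting $w$ be the successor of $d$ on $P_2$ and $v$ the predecessor of $c$ on $P_1$, what must be shown is that either $w=v$, or $wv\in A(\hat K)$ with $w,v$ satisfying the ``only in-/out-neighbour'' condition in $\hat K-\{c,d\}$. Note first that in $\hat K$ every vertex lies on both $P_1$ and $P_2$ (each path contributes at most $2$ to the total degree, so a vertex on only one path has degree at most $2$ and is therefore one of $a,b,c,d$, which lie on both paths via the shared arcs $ab$ and $cd$); in particular your stopping rule ``follow $P_2$ from $d$ until it first re-meets $P_1$'' terminates immediately at $w$ itself, which indicates that the real issue has been blurred. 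The paper's argument at the corresponding point is an exchange argument: both $P_1$ and $P_2$ contain a subpath from (its) $\bar b$ to $\bar a$; swapping these two subpaths yields alternative $a$-$d$ and $c$-$b$ paths, so inclusion-minimality forces the two subpaths to coincide, and then every internal vertex of the common subpath has total degree $2$, which has been excluded --- hence the subpath is a single arc forming the new rung. This exchange idea (or an equivalent construction of the rerouted paths) is absent from your proposal, and without it the peeling step, and hence the induction, does not go through. The remaining assertions --- that the reduced graph with the new terminals is again inclusion-minimal and satisfies the neighbour conditions, and the treatment of the degenerate coincidences --- are routine once this is supplied, just as in the paper.
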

}
\toappendix{
\begin{proof}%[Proof of Lemma~\ref{lem:ladder_properties_only_if}]
	We prove the lemma by induction on the number of vertices. It is easy to verify that it holds if $|V(K)|\le 4$. 
	
	If $K$ contains a vertex $v$ with total degree at most $2$ other than one of $\{a,b,c,d\}$, then since $K$ is inclusion-minimal, it follows that $v$ is not a sink or source, but has one in- and one out-neighbor. Replacing $v$ by an arc between its neighbors does not falsify any of the assumptions. Hence by the induction hypothesis, after replacing $v$ by an arc between the neighbors, the graph is a subdivision of the ladder. Hence also $K$ is a subdivision of the ladder. 
	
	For the rest of the proof, we assume that there is no vertex of total degree $2$ in $K - \{a,b,c,d\}$. Now, note that $K$ contains precisely the arcs in $A(P_1)\cup A(P_2)$. Furthermore, as there is no vertex of total degree $2$ in $K-\{a,b,c,d\}$, and arcs $ab$ and $cd$ are necessarily on both $P_1$ and $P_2$, it follows that every vertex of $K$ lies on both $P_1$ and $P_2$. 	

Let $\bar{a}$ be the unique in-neighbor of $a$ on $P_2$ and let $\bar{b}$ be the unique out-neighbor of $b$ on $P_1$. Note that $a$ and $b$ does not have any other neighbors, as arc $ab$ is on both $P_1$ and $P_2$.
Clearly, if $\bar{a}=\bar{b}$, then by induction hypothesis applied to $K-\{a,b\}$ and $\bar{b},\bar{a},c,d$ we get that $K-\{a,b\}$ is a ladder.
Now it follows that $K$ is also a ladder, since $\bar{a},\bar{b}$ are the only neighbors of $a,b$ in the graph $K-\{a,b\}$.
	
	We claim that if $\bar{a}\neq \bar{b}$, then $\bar{b}$ is the only in-neighbor of $\bar{a}$ and $\bar{a}$ is the only out-neighbor of $\bar{b}$ in $K-\{a,b\}$. %Assume for a contradiction that $\bar{b}$ has an out-neighbor $u\neq \bar{a}$. 
	Note that both $P_1$ and $P_2$ contain subpath from $\bar{b}$ to $\bar{a}$. However $a[P_1]\bar{b}\circ \bar{b}[P_2]\bar{a}\circ \bar{a}[P_1]d$ is also an $a$-$d$ path and $c[P_2]\bar{b}\circ \bar{b}[P_1]\bar{a}\circ \bar{a}[P_2]b$ is a $c$-$b$ path. Hence, either $\bar{b}[P_2]\bar{a} = \bar{b}[P_1]\bar{a}$, or we can remove an arc from one of the paths. However, in the second case, $K$ is not inclusion-minimal, which is a contradiction. In the first case all inner vertices on the path $\bar{b}[P_2]\bar{a}$ have the same in- and the same out-neighbor on both $P_1$ and $P_2$ and hence they have total degree $2$. Since there are no such vertices, it follow that $\bar{b}$ is the only in-neighbor of $\bar{a}$ and $\bar{a}$ is the only out-neighbor of $\bar{b}$.
Then the graph $K-\{a,b\}$ together with $\bar{b},\bar{a},c,d$ satisfy all the assumptions of the lemma. It follows from the induction hypothesis and the fact that $\bar{a},\bar{b}$ are the only neighbors of $a,b$ in $V(K) \setminus \{a,b\}$ that $K$ is also a ladder.
\end{proof}
}

\toappendix{
\begin{lemma}\appmark\label{lem:outerplanar_ladder}
Let $K$ be a $2$-connected outerplanar graph and let $a,b,c,d$ be vertices of degree two such that $ab, cd \in E(K)$.
If $\deg(v) = 3$ for all $v\in V(K)\setminus \{a,b,c,d\}$, then $K$ is the underlying undirected graph of a ladder.
\end{lemma}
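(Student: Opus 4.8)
The plan is to prove this by induction on $|V(K)|$, peeling off one \emph{ear} at a time. I will use the standard structure of a $2$-connected outerplanar graph: it has a unique Hamiltonian cycle $C$ (its outer boundary), every non-cycle edge is a chord, distinct chords are pairwise non-crossing, and its inner faces --- made adjacent when they share an edge --- form a tree; a leaf of this tree is a \emph{leaf ear}, i.e.\ a chord $xy$ together with a subpath $x=p_0,p_1,\dots,p_{k+1}=y$ of $C$ (with $k\ge 1$) bounding an inner face, all of whose internal vertices $p_1,\dots,p_k$ have degree $2$. If $K$ has a chord then it has at least two leaf ears. Since the only degree-$2$ vertices of $K$ are $a,b,c,d$, each $p_i$ above lies in $\{a,b,c,d\}$, so $1\le k\le 4$. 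The degree sum is $8+3(|V(K)|-4)$, hence $|V(K)|$ is even; the case $|V(K)|=4$ forces $K=C_4=\sym(G_2)$ and is the base case. So assume $|V(K)|\ge 6$ and fix a leaf ear with notation as above.

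I would first rule out $k=1$ and $k=3$ by elementary arguments. If $k=1$, then $p_1$ has degree $2$ and its two neighbours are the chord endpoints $x,y$, which have degree $\ge 3$; but every one of $a,b,c,d$ has a degree-$2$ neighbour (its partner under the edge $ab$, resp.\ $cd$), contradicting $p_1\in\{a,b,c,d\}$. If $k=3$, then three of $a,b,c,d$ equal $p_1,p_2,p_3$, whose only edges among themselves are $p_1p_2$ and $p_2p_3$ (as $p_1,p_3$ have degree $2$ with neighbours $\{x,p_2\}$ and $\{p_2,y\}$); since $\{a,b\}$ and $\{c,d\}$ partition $\{a,b,c,d\}$, one pair, say $\{a,b\}$, lies inside $\{p_1,p_2,p_3\}$ and hence equals $\{p_1,p_2\}$ or $\{p_2,p_3\}$; then $\{c,d\}$ consists of the remaining ear vertex (which has no neighbour other than $p_2$ and $x$ or $y$) and the unique degree-$2$ vertex $z$ outside the ear --- so $cd\notin E(K)$, a contradiction.

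The main obstacle is the case $k=4$, which passes every such local test: then $\{a,b,c,d\}=\{p_1,p_2,p_3,p_4\}$, and the only partition of $\{p_1,\dots,p_4\}$ into two edges is $\{p_1p_2\},\{p_3p_4\}$, so $\{a,b\}=\{p_1,p_2\}$ and $\{c,d\}=\{p_3,p_4\}$ (up to symmetry) --- locally consistent. Here I would instead delete the whole ear interior: $K'':=K-\{p_1,p_2,p_3,p_4\}$ is again $2$-connected outerplanar, its Hamiltonian cycle is obtained from $C$ by deleting $p_1,\dots,p_4$ and keeping the edge $xy$ (the former chord), and $x,y$ become the only degree-$2$ vertices of $K''$ (every other vertex keeps degree $3$), with $x,y$ adjacent in $K''$. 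Now $K''$ has a degree-$3$ vertex, hence a chord, hence a leaf ear; its internal vertices lie in $\{x,y\}$, and if some leaf ear had both $x,y$ internal it would be $K''$'s only leaf ear, making $K''$ a cycle with all vertices of degree $2$ --- impossible. So some leaf ear of $K''$ is a triangle $s\,v\,t$ with chord $st$ whose only internal vertex is $v\in\{x,y\}$, whence $N_{K''}(v)=\{s,t\}$; but $v$ is adjacent to the other of $x,y$, which is therefore an endpoint of the chord $st$ and has degree $\ge 3$ in $K''$, contradicting its degree being $2$. Hence $k=2$.

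Finally, every leaf ear now has exactly two (mutually adjacent) internal degree-$2$ vertices, these sets are disjoint for distinct leaf ears, $K$ has four degree-$2$ vertices and at least two leaf ears, so $K$ has exactly two leaf ears. For one of them with internal pair $\{p_1,p_2\}$: $p_1$ is adjacent to $x$ (degree $\ge 3$) and to $p_2$, so if $p_1=a$ then its partner $b$ must be $p_2$; thus this pair is $\{a,b\}$, and the other leaf ear's pair is $\{c,d\}$. Set $K':=K-\{a,b\}$; it is $2$-connected outerplanar, its degree-$2$ vertices are exactly $x_1,y_1,c,d$ (four distinct vertices, where $x_1y_1$ is the deleted ear's chord and $x_1,y_1$ have degree $\ge 3$ in $K$), $x_1y_1,cd\in E(K')$, every other vertex has degree $3$, and $|V(K')|=|V(K)|-2$. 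By the induction hypothesis $K'=\sym(G_m)$ for some $m$. The edges $x_1y_1$ and $cd$ are vertex-disjoint edges joining degree-$2$ (corner) vertices of $G_m$, so we may label $G_m$ so that $x_1y_1$ is an end rung; gluing the square $x_1\,a\,b\,y_1$ (new vertices $a,b$; new edges $x_1a$, $ab$, $by_1$; the edge $x_1y_1$ retained) onto that end rung appends one column to $G_m$, turning it into $G_{m+1}$. Hence $K=\sym(G_{m+1})$, which completes the induction.
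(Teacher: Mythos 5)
Your proof is correct, and while it follows the same induction skeleton as the paper (remove the two adjacent degree-two vertices $a,b$ and apply the hypothesis to the rest), the key structural tool is genuinely different. The paper shows directly that the outer neighbours $\bar a$ of $a$ and $\bar b$ of $b$ are adjacent, arguing via the cyclic order of $a,b,c,d$ on the outer face together with a separate claim that a maximal outerplanar graph with two adjacent degree-two vertices has a third degree-two vertex (applied to a triangulation of the piece enclosed by a putative chord). You instead use the weak dual tree of the inner faces: a leaf face is one chord plus a path of degree-two vertices, your case analysis on the number $k\in\{1,2,3,4\}$ of internal vertices forces $k=2$ with the internal pair equal to $\{a,b\}$ or $\{c,d\}$, and you then peel off that square and glue it back onto an end rung of the inductively obtained $\sym(G_m)$. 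Your route avoids triangulations entirely, makes explicit the relabelling/gluing step (that the chord $x_1y_1$ must land on an end rung, including the $C_4$ case) which the paper leaves implicit, and yields the extra information that $\{a,b\}$ and $\{c,d\}$ sit on the two end rungs; the paper's route is shorter once its triangulation claim is granted. Two small remarks: your $k=4$ case admits a much quicker kill --- there are at least two leaf ears with disjoint, nonempty sets of internal degree-two vertices drawn only from $\{a,b,c,d\}$, so $k=4$ is impossible by counting (and, once $k=1$ is excluded, so is $k=3$); and your $K''$ argument tacitly assumes $K''$ has vertices besides $x,y$ --- this is automatic, since a chord joins vertices non-consecutive on the Hamiltonian cycle, so the arc of $C$ on the other side of $xy$ contains an internal vertex, but it merits a sentence.
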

}
\toappendix{
\begin{proof}%[Proof of Lemma~\ref{lem:outerplanar_ladder}]
Be begin with a technical claim.
\begin{claim}\label{cml:degree_2_outerplanar}
Let $\tilde{K}$ be a maximal outerplanar graph and let $u,v$ be degree two vertices with $uv\in E(\tilde{K})$.
Then there is a vertex $w \in V(\tilde{K})\setminus \{u,v\}$ with degree $2$.
\end{claim}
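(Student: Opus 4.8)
The plan is to rely on the standard structural description of maximal outerplanar graphs: a maximal outerplanar graph on $n\ge 3$ vertices is a triangulated polygon, that is, it is $2$-connected, its outer face is bounded by a Hamiltonian cycle $C$, and every bounded face is a triangle. Since a vertex of degree two can exist only when $n\ge 3$, we may assume $n\ge 3$ and work with this description. The first step I would record is the elementary consequence that if a vertex $x$ has $\deg(x)=2$, then both edges incident to $x$ lie on $C$ (the cycle $C$ enters and leaves $x$, using up both of its edges), so $x$ is incident to no chord.

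Next I would analyse the local picture around $u$. Write $v,q$ for the two neighbours of $u$. By the previous step the edge $uv$ is an edge of $C$, hence it is incident to the outer face and to exactly one bounded face, which is a triangle; as $u$ has degree two, the third vertex of that triangle must be $q$, so $vq\in E(\tilde{K})$ and $u,v,q$ span a triangle. Now use that $\deg(v)=2$ as well: the neighbours of $v$ are exactly $u$ and $q$, and again by the first step both edges $uv$ and $vq$ lie on $C$; likewise $uq$ lies on $C$. Thus all three edges of the triangle $uvq$ belong to the cycle $C$, which forces $C=(u,v,q)$ and therefore $n=3$, i.e.\ $\tilde{K}=K_3$. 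In this case the remaining vertex $q\notin\{u,v\}$ has degree two, and we take $w=q$.

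The only delicate point — and the step I would be most careful to spell out — is the claim that the unique bounded face along the boundary edge $uv$ is exactly the triangle $uvq$; this needs the observation that a degree-two vertex carries no chord, so the face on the inner side of $uv$ at $u$ is pinned down by $u$'s two edges. I would derive this directly from the triangulated-polygon description rather than invoking ear-removal or weak-dual-tree machinery, which keeps the argument short and self-contained. If one preferred a more robust phrasing, one could instead note that in a triangulated polygon with $n\ge 4$ the degree-two vertices are precisely the ``ears'', that distinct ears have distinct and pairwise non-adjacent tips, so the adjacency of $u$ and $v$ again forces $n=3$; but the boundary-edge argument above is the most economical.
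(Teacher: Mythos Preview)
Your argument is correct, and in fact it proves a sharper statement than the paper's: you show that a maximal outerplanar graph with two \emph{adjacent} degree-two vertices must already be $K_3$, so the desired third degree-two vertex is simply the remaining vertex $q$. The paper instead uses the ear-decomposition viewpoint you explicitly set aside: it observes that $u$ and $v$ have a common neighbour $x$, notes that the triangle $uvx$ is a face, and then appeals to the fact that a maximal outerplanar graph can be built by gluing triangles starting from any face triangle; the vertex introduced in the final gluing step has degree two and is distinct from $u$ and $v$. Your boundary-edge argument is more elementary and self-contained (no weak-dual tree, no construction sequence) and yields the stronger $n=3$ conclusion; the paper's approach is looser but reuses a standard structural fact, and would still produce a degree-two vertex even under weaker hypotheses. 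Either way the claim follows.
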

\begin{claimproof}
Since $\tilde{K}$ is maximal outerplanar, it follows that it can be constructed from a triangle by gluing triangles. %Diestel
As $u$ and $v$ have degree two there exists their common neighbor $x$.
Now we may assume that the construction started from the triangle with vertices $u,v,x$.
Let $w$ be the vertex ``introduced'' by the last gluing operation -- it follows that its degree is $2$.
\end{claimproof}

The prove is done by induction on $|K|$.
If $|K| = 2$ then $K$ is $C_4$ and thus an underlying undirected graph of a ladder.

For the induction step let $\bar{a}$ be the neighbor of $a$ different from $b$ and similarly $\bar{b}$ for $b$.
We would like to show that $\bar{a}\bar{b} \in E(K)$.
If this is so the lemma follows from induction hypothesis applied to $K - \{a,b\}$.

Let $a,b,c,d$ be the order in which these vertices appear on the outer face of $K$ and suppose $\bar{a}\bar{b} \notin E(K)$.
Now one neighbor of $\bar{b}$ is next to it in the fixed ordering along the outer face.
We show that the other neighbor $b'$ of $\bar{b}$ must lie between $d$ and $\bar{a}$.
Suppose that this is not the case then the edge $\bar{b}b'$ encloses an outerplanar graph $K'$ a subgraph of $K$ not containing any of $a,b,c,d$.
If we apply Claim~\ref{cml:degree_2_outerplanar} to an outerplanar triangulation of $K'$ and $\bar{b},b'$, we obtain a vertex of degree (at most) two in $K'$.
This is impossible as all vertices in $K'$ must have degree three -- a contradiction.
Similarly it follows that $\bar{a}$ has the other neighbor (non-consecutive in the fixed order along the outer face of $K$) between $b$ and $c$.
It follows that $\bar{a}\bar{b}$ is an edge, since this is the only possibility as $K$ is outerplanar.
\end{proof}
}

\subsection{Finishing the Proof}
The following technical lemma we show that if the distance between any two consecutive vertices $p_i, p_j \in Q_P \cup I_P$ with $i<j$ is at least $5$, then $p_i = p_k^4$ and $p_j = p_\ell^1$ where $p_k,p_\ell \in I_P$ and $k<\ell$.
Moreover, there exists a path from $p_j$ to $p_i$ in $H$ and between $p_i$ and $p_j$ there is a ladder with constant-sized boundary.

\begin{lemma}\appmark\label{lem:ladder_structure_between_important_points}
	Let $p_i, p_j \in Q_P \cup I_P$ with $i<j$ such that there is no $p\in Q_P\cup I_P$ with $p_i\le_P p \le_P p_j$.
	Let $F= \{p_{i+1},\ldots, p_{j-1}\}$ and let $C$ be the connected component of $H-\{p_{i+1},p_{i+2}, p_{j-2}, p_{j-1}\}$ containing $p_{i+3}$.
	If $j-i\ge 5$, then $H[C\cup \{p_{i+1},p_{i+2}, p_{j-2}, p_{j-1}\}]$ is a ladder and furthermore, $p_{i+1},p_{i+2}, p_{j-2}$, and $p_{j-1}$ are the only vertices with a neighbor outside the ladder.
	% Then there are at most $4$ vertices in $F$ with an $F$-avoiding path to some vertex in $V(P)\setminus F$. Furthermore, if we remove the 
%	Then the graph $F=H[\{p_{i+1},\ldots, p_{j-1}\}]$ %is strongly connected, 
%	contains at most $4$ vertices with a neighbor in $H-V(F)$ and the treewidth of $\sym(F)$ is at most $2$. 
\end{lemma}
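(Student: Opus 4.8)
The goal is to show that the piece of $H$ strictly between two consecutive "special" vertices $p_i,p_j$ (special meaning in $Q_P\cup I_P$) is a ladder whose only outside-neighboring vertices are $p_{i+1},p_{i+2},p_{j-2},p_{j-1}$. The strategy is to apply the structural lemma about sufficient conditions for ladders (Lemma~\ref{lem:ladder_properties_only_if}) to the graph $K := H[C\cup \{p_{i+1},p_{i+2},p_{j-2},p_{j-1}\}]$ with the boundary four-tuple $(a,b,c,d) := (p_{i+1},p_{i+2},p_{j-1},p_{j-2})$ (or with the orientation that makes the required $a$-$d$ and $c$-$b$ paths exist). So the work splits into three parts: (i) identify which of the four possible vertices $p_k^1,p_k^2,p_k^3,p_k^4$ the endpoints $p_i,p_j$ can be — in particular that $p_i=p_k^4$ and $p_j=p_\ell^1$ for some important $p_k,p_\ell$ with $k<\ell$, which yields the backward path $Q$ from $p_j$ to $p_i$; (ii) show every vertex of $C$ has total degree exactly $3$ and every vertex of $C$ that is on a terminal-to-terminal path is "trapped" so that its only escape back to the rest of $H$ is through the four boundary vertices; (iii) verify the hypotheses of Lemma~\ref{lem:ladder_properties_only_if} and conclude.

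\textbf{Identifying the endpoints.} First I would argue that since there are at least $j-i-1\ge 4$ vertices strictly between $p_i$ and $p_j$ and none of them is important, Lemma~\ref{lem:vertices_around_the_I_P} forces them to lie \emph{outside} any window $[p_k^1,p_k^4]$ for an important $p_k$, except possibly for the role of some $p_k^2$ or $p_k^3$. Consequently the subpath $p_i[P]p_j$ must be "sandwiched" exactly at the boundary of such a window: the only way a run of $\ge 4$ consecutive vertices of $P$ can avoid all of $I_P$ and all of $Q_P$ is if $p_i$ is the right end $p_k^4$ of one important vertex's window and $p_j$ is the left end $p_\ell^1$ of another (a case analysis on which of $\{p_*^1,p_*^2,p_*^3,p_*^4\}$ the endpoints are, using that $p_*^2\le_P p_*\le_P p_*^3$ and each of $p_*,p_*^2,p_*^3$ is either special or would, by Lemma~\ref{lem:vertices_around_the_I_P}, leave at most two non-special vertices, contradicting $j-i\ge5$). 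From $p_i = p_k^4$ we get the $P$-avoiding path $Q_{4,2}^k$ ending at $p_k^2\le_P p_i$, hence (chaining with $p_k^2[P]p_k^3\circ Q_{3,1}^k$ if needed) a path that, together with the fact $p_j=p_\ell^1$ gives a $P$-avoiding path from some $p_x$ with $x\ge \ell \ge j$ down to $p_j$; combining these produces the backward path $Q$ from $p_j$ to $p_i$ that we need. I would record this $Q$ explicitly, since it is the certificate that $K$ has the "$C$-like" cyclic structure required by the ladder characterization.

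\textbf{Degree and trapping arguments.} Next, for each internal vertex $v\in C$: it is non-important with respect to $P$, so by Lemma~\ref{lem:nonimportant} (and its reverse in $\overleftarrow H$) its out-degree is $\le 2$ and in-degree is $\le 2$; by Lemma~\ref{lem:degreeTwoVertices} it has total degree $\ge 3$; so its degree is exactly $3$. The key trapping claim is that for $v\in C$, every $P$-avoiding path from $v$ back to $P$ ends at a vertex with index $\le j-1$ (apply the choice of $p_k^4=p_i$: a later landing vertex would contradict rightmost-ness or would, via $Q$, let us reroute and contradict inclusion-minimality à la Lemma~\ref{lem:indegree_of_non_important_vertices}); symmetrically, every $P$-avoiding path \emph{to} $v$ starts at index $\ge i+1$. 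From this, together with the facts that $p_{i+1}$ is the unique in-neighbor-on-$P$ situation at the left boundary and $p_{j-1}$ at the right (using Lemma~\ref{lem:indegree_of_non_important_vertices} to pin down that the non-important boundary vertices have the stated unique neighbors), one deduces the component $C$ together with $\{p_{i+1},p_{i+2},p_{j-2},p_{j-1}\}$ is separated from the rest of $H$ by exactly those four vertices, i.e. they are the only vertices with a neighbor outside the ladder, and that the two-path decomposition $P_1\cup P_2$ of $K$ (the forward subpath of $P$ and the backward path through $Q$) covers all arcs of $K$ with every internal vertex of degree $3$.

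\textbf{Assembling the ladder.} Finally I would check the bullet hypotheses of Lemma~\ref{lem:ladder_properties_only_if} with $a=p_{i+1}$, $b=p_{i+2}$, $d=p_{j-2}$, $c=p_{j-1}$: the arc $ab$ is the edge $p_{i+1}p_{i+2}$ of $P$ and by the trapping/degree analysis $p_{i+1}$ is the only in-neighbor of $p_{i+2}$ within $K$ and $p_{i+2}$ the only out-neighbor of $p_{i+1}$ within $K$ (similarly for $c,d$ at the other end), there is an $a$-to-$d$ directed path (the subpath of $P$) and a $c$-to-$b$ directed path (follow $P$ from $p_{j-1}$ along... — more precisely, the backward path $Q$ routed to enter/leave through the correct boundary vertices), and $K$ restricted to these is inclusion-minimal because $H$ is. The hypotheses being met, Lemma~\ref{lem:ladder_properties_only_if} gives that $K$ is a subdivision of a ladder; but since every internal vertex already has degree exactly $3$ there is nothing to un-subdivide, so $K$ is a ladder, as claimed.

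\textbf{Expected main obstacle.} I expect the delicate part to be step (i)/the endpoint case analysis: precisely ruling out, when $j-i\ge5$, all configurations other than "$p_i=p_k^4$, $p_j=p_\ell^1$ with $k<\ell$," because the vertices $p_*^1,p_*^2,p_*^3,p_*^4$ are defined relative to \emph{every} important vertex and the same vertex can play several roles (as the text warns just before this lemma), so one must carefully combine Lemma~\ref{lem:vertices_around_the_I_P} (at most two non-special vertices inside any window) with the ordering constraints $p_*^1\le_P p_*^2\le_P p_*\le_P p_*^3\le_P p_*^4$ to force the $\ge4$ consecutive non-special vertices to sit outside all windows, which is exactly the gap between one window's right end and the next window's left end. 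The second most delicate point is making the trapping/rerouting arguments (step ii) airtight so that the separator is \emph{exactly} the four named vertices and not a larger set — this is again an application of the Lemma~\ref{lem:indegree_of_non_important_vertices}-style reroute-and-contradict-minimality technique, reused in both $H$ and $\overleftarrow H$.
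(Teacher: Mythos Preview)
Your outline tracks the paper's proof closely: the paper also proceeds by (1) a case analysis ruling out all endpoint configurations except $p_i = p_k^4$, $p_j = p_\ell^1$ (its Cases~1 and~2 force $j-i\le 3$), (2) pinning down which vertices of $F$ can have $F$-avoiding paths to or from $V(P)\setminus F$ via repeated use of Lemma~\ref{lem:indegree_of_non_important_vertices} in $H$ and $\overleftarrow H$, and (3) invoking Lemma~\ref{lem:ladder_properties_only_if}. You also correctly flag the endpoint case analysis as the delicate step.

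There is, however, one genuine gap. The $c$-to-$b$ path required by Lemma~\ref{lem:ladder_properties_only_if} must lie \emph{inside} $K=H[C\cup\{p_{i+1},p_{i+2},p_{j-2},p_{j-1}\}]$, but the path you assemble from $Q_{4,2}^k$ and $Q_{3,1}^\ell$ lives entirely \emph{outside} the window $(p_i,p_j)$: $Q_{4,2}^k$ runs from $p_i$ back to $p_k^2\le_P p_k < p_i$, and $Q_{3,1}^\ell$ runs from $p_\ell^3 > p_j$ to $p_j$. No combination of these yields a $p_{j-1}\to p_{i+2}$ path inside $K$. The paper does not construct the internal backward path from external data at all; instead it argues by inclusion-minimality: if $H[C\cup F]$ contained no $v$-to-$u$ path, then every non-$P$ arc incident to an internal vertex would be redundant, whence all such vertices would have total degree at most two, contradicting the hypothesis $j-i\ge 5$ together with Lemma~\ref{lem:degreeTwoVertices}. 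The external paths $Q_{4,2}^k$, $Q_{3,1}^\ell$ (more precisely $Q_{4,1}^\ell$) are used only as the witnessing path $Q$ when applying Lemma~\ref{lem:indegree_of_non_important_vertices}, to show that at most one vertex of $F$ besides $p_{i+1}$ receives an incoming $P$-avoiding path from outside $F$, and that this vertex is $p_{j-1}$ or $p_{j-2}$ (and symmetrically on the outgoing side).

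Two smaller points. First, Lemma~\ref{lem:nonimportant} bounds degrees only for vertices \emph{on} $P$, so your degree-$3$ claim for arbitrary $v\in C$ (which may lie off $P$) is not justified as written; the paper sidesteps this by letting the induction in Lemma~\ref{lem:ladder_properties_only_if} absorb degree-$2$ vertices rather than excluding them up front. Second, the paper does not fix $(a,b,c,d)=(p_{i+1},p_{i+2},p_{j-1},p_{j-2})$ a priori; it first identifies the actual boundary vertices $u\in\{p_{i+1},p_{i+2}\}$ and $v\in\{p_{j-1},p_{j-2}\}$ that carry the outgoing/incoming connections, allowing the degenerate cases $a=b$ or $c=d$ permitted by Lemma~\ref{lem:ladder_properties_only_if}.
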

\toappendix{
\begin{proof}[Proof of Lemma~\ref{lem:ladder_structure_between_important_points}]
%	\todo[inline]{need good notation for vertices in $Q_P$. something telling me that vertex $p_k\in Q_P$ is before/after $p_j$, it is on the $P$-avoiding path that take me furthest before. We need a picture for that as well.}
%
%	For a vertex $p_j\in I_P$ let $p_j^1,p_j^2,p_j^3, p_j^4$ be the vertices in $Q_P$ such that $p_j^{i}$ is before $p_j^{i+1}$ on $P$, for $i\in [3]$, $p_j^2$ is before $p_j$ and $p_j^3$ is after $p_j$ and there is a $P$-avoiding path in $H$ from $p_j^3$ to $p_j^1$ and  from $p_j^4$ to $p_j^2$. We call the set $Q_P$ the set of marked vertices and we call the vertex $p_j^i$ the i-th marked vertex for $p_j$. Note that it follows from Lemma~\ref{lem:path_across_p_i} vertices that vertices  $p_j^1,p_j^2,p_j^3, p_j^4$ can be uniquely identified, however they might not be all distinct. Similarly, a same vertex may be denoted for example $p_j^1$, because it is the first vertex for the important vertex $p_j$ and at the same time it can be denoted $p_k^3$, because it is also third marked vertex for the important vertex $p_k$ on $P$. \todo{redo indexes, we fixed i, j for in statement of Lemma}
%	
%	
We distinguish cases depending on whether $p_i$ and $p_j$ are important vertices or first, second, third, or forth marked vertex for some important vertex on $P$.  %Let $Q_1$ and $Q_2$ be paths  from $p_3$ to $p_1$ and   from $p_4$ to $p_2$, respectively, such that all the edges between $V(P)$ and $V(H)\setminus V(P)$ on these path are incident to vertices in $\{p_j^1, p_j^2, p_j^3, p_j^4\}$.

	 %Moreover, all $P$-avoiding paths that starts in a vertex $p_\ell$ between $p_j^1$ and $p_j$ ends in a vertex before $p_j$, but after the closest important vertex $p_{j'}$ before $p_j$.

	\begin{description}
		\item[Case 1)]  $p_i$ or $p_j$ is important. 
	\end{description}

	Without loss of generality, let us assume that $p_j$ is important. The other case is symmetric and follows from following the same argument in $\overleftarrow{H}$.

If $p_j\neq p_j^1$, then clearly $p_j^1\le_P p_i \le_P p_j$, because there is no marked vertex between $p_i$ and $p_j$ and $p_j^1$ is a marked vertex before $p_j$. Lemma~\ref{lem:vertices_around_the_I_P} says that there are at most $2$ non-important vertices between $p_j^1$ and $p_j$. However, then $j-i\le 3$ and the case follows. 

On the other hand, if $p_j=p_j^1$, then we can again show that there is at most one vertex of total degree $3$ between $p_i$ and $p_j$.

First, suppose that a vertex $p_k$ between $p_i$ and $p_j$ has in-degree $2$. 
It follows from the definition of (non-)important vertices that every path from some terminal to $p_k$ has to intersect $P$ in an important vertex, since $p_k$ is not important.
However, this vertex has to be before $p_j$, otherwise there would be a $P$-avoiding path from a vertex after $p_j$ to a vertex before $p_j$ contradicting the assumption $p_j = p_j^1$. Since there are no important vertices between $p_i$ and $p_j$, every path $P'$ from a terminal $s'$ to $p_k$ intersect $P$ in an important vertex $p_{k'}\le_P p_i$.
Clearly, every such path can use the subpath of $P$ between $p_{k'}$ and $p_k$.
Therefore, it follows from the inclusion-minimality of $H$ that $p_k$ has in-degree $1$. 

Second, we show that there is at most one vertex with out-degree at least two between $p_i$ and $p_j$. 
Now let $p_k$ be a vertex between $p_i$ and $p_j$ other than $p_i^3$. As argued before, every path $P'$ from a terminal $s'$ to $p_k$ intersect $P$ in an important vertex $p_{k'}\le_P p_i$. Furthermore, since $p_k$ is not important, every path $P''$ from $p_k$ to some terminal $t'$ intersect $P$ in some vertex $p_{k''}$ such that $p_i^1\le_P p_{k''}$. However, it is easy to see that every path from $s'$ to $t'$ can use the path $p_{k'}[P]p_i^3\circ Q_{3,1}^i\circ p_i^1[P]p_{k''}$ between $p_{k'}$ and $p_{k''}$ and hence $p_k$ has out-degree at most one as well. 
It follows that there is at most one vertex of total degree at least $3$ between $p_i$ and $p_j$. Since $H$ does not contain vertices of total degree at most $2$, it follows that $j-i\le 3$. 

%However, that means that there is no $P$-avoiding path starting after (also including) $p_j$ and ending before $p_j$ on $P$. \todo{TODO: here it is easy to see that between $p_i$ and $p_j$ is just a path.}   
%
%\fpinline{Hard for me to understand the first stentences in all cases below.}
%	\begin{description}
%	\item[Case 2)]
%	$p_i$ and $p_j$ are $p_k^1$ and $p_k^2$ (or $p_k^3$ and $p_k^4$), respectively, for some important vertex $p_k$, that is, $\left(p_i, p_j\right) = \left(p_k^1,p_k^2\right)$ (or $\left(p_i, p_j\right) = \left(p_k^3,p_k^4\right)$).
%	\end{description}

%Since $p_k^1, p_k^2, p_k^3$, and $p_k^4$ are between $p_k^1$ and $p_k^4$, so is every vertex between $p_i$ and $p_j$.
%It follows from Lemma~\ref{lem:vertices_around_the_I_P} that $j-i\le 2$.
%\dkcom{is this better now?}

We are now left only with the cases, when both $p_i$ and $p_j$ are marked and not important. 

\begin{description}
	\item[Case 2)]
	there exists $k$ such that $p_k^1\le p_i < p_j \le p_k^4$
\end{description}

By Lemma~\ref{lem:vertices_around_the_I_P} there are at most two non-important vertices between $p_k^1$ and $p_k^4$ and it follows that $j-i\le 3$. 

Note that if $p_i=p_k^a$, $a\in[4]$, but $p_j$ is not between $p_k^1$ and $p_k^4$, then $p_k^4 < p_j$ and necessarily $p_i=p_k^4$. Similarly, if $p_j=p_\ell^b$, $b\in [4]$, but $p_i$ is not between $p_\ell^1$ and $p_\ell^4$, then $p_i < p_\ell^1$ and necessarily $p_j=p_\ell^1$. Hence, we are left with the following case:
%\begin{description}
%	\item[Case 2)]
%	$p_i$ and $p_j$ are $p_\ell^a$ and $p_{k}^b$, respectively, for either some consecutive\oscom{Can it happen that they are not consecutive?}\eecom{I don't know. I would just delete the word consecutive and leave $\ell |ge l$}\oscom{go on, to cover more cases} important vertices $p_k$, $p_\ell$ with $\ell \ge k$ or for an important vertex $p_k=p_\ell$, where $a,b\in [4]$. That is, $\left(p_i, p_j\right) = \left(p_\ell^a,p_k^b\right)$.
%	\end{description}
%Note that since there is no important vertex between $i$ and $j$, then we have only three possibilities for the order of $i,j,k,\ell$. Namely 
%\begin{enumerate}[(i)]
%	\item $i < j < k \le \ell$,
%	\item $k <i < j < \ell$, and
%	\item $k \le \ell <i < j$.
%\end{enumerate}
%In Cases (i) and (ii) the vertices $p_i=p_\ell^a$ and $p_j$ are between $p_\ell^1$ and $p_\ell$. However, it follows from Lemma~\ref{lem:vertices_around_the_I_P} that there are at most two non-important vertices between $p_\ell^1$ and $p_\ell$ and $j-i\le 3$. In Case (iii) the vertices $p_i$ and $p_j=p_k^b$ are between $p_k$ and $p_k^4$ and by the same argument it follows that $j-i\le 3$.

	\begin{description}
	\item[Case 3)]
	$p_i$ and $p_j$ are $p_k^4$ and $p_{\ell}^1$ for some important vertices $p_k$, $p_\ell$ with $k < \ell$ and Case 2) does not apply.
	\end{description}
%Note that this implies $p_k\le_P p_k^4=p_i \le_P p_j = p_\ell^1\le_P p_\ell$ and in particular this case is not covered by Case 2).
%
	Let us first focus on the vertices between $p_i$ and $p_j$ that are last vertices of some $P$-avoiding path that starts on a vertex in $V(P)\setminus F$. Clearly $p_{i+1}$ is such a vertex, as it has an arc from $p_i$. We will show that there is at most one more such vertex and that this vertex is either $p_{j-1}$ or $p_{j-2}$. Symmetrically, we can reach some vertex in $V(P)\setminus F$ by a $P$-avoiding path only from one of the vertices $p_{j-1}, p_{i+1}$, and $p_{i+2}$.
	
	 Since all vertices between $p_i$ and $p_j$ are not important, every path from some terminal $s'$ to a vertex $p_x$ between $p_i$ and $p_j$ intersect $P$ in some (important) vertex $p_y$. If $y < x$, then each such path can use path $y[P]x$ between $y$ and $x$ instead. Hence it follows that $y>x$ and in particular any $P$-avoiding path that starts on a vertex in $V(P)\setminus F$ and ends in $p_x$ has to start in a vertex that is after $p_j$ on $P$.  
	
	 Now let $p_x$ be the leftmost vertex on $P$ %($x$ is minimum among all such vertices) 
	 such that $p_x$ is an endpoint of a $P$-avoiding path $Q$ that starts in a vertex $p_y$ on $P$ with $y\ge j$. Note that $p_\ell^4[Q_2^\ell]p_\ell^2\circ p_\ell^2[P]p_\ell^3\circ p_\ell^3[Q_1^\ell]p_\ell^1\circ p_\ell^1[P]p_y\circ p_y[Q]p_x$ is a directed path between $p_\ell^4$ and $p_x$ avoiding all the vertices between $p_x$ and $p_j=p_\ell^1$. Moreover, the vertices between $p_x$ and $p_j$ are all not important. Hence, it follows from the choice of $p_\ell^4$ that every directed path $P'$ from a terminal $s'$ to a vertex between $p_x$ and $p_j$ intersect $P$ in some vertex $p_z$ such that $p_z\le_P p_\ell^4$. Therefore, Lemma~\ref{lem:indegree_of_non_important_vertices} applied to $p_x$, any vertex between $p_x$ and $p_j$, and $p_\ell^4$ implies that $p_x$ and $p_{i+1}$ are only two vertices between $p_i$ and $p_j$ on $P$ with a in-neighbor in $H-F$. And in particular all the vertices between $p_x$ and $p_{j}$ have in-degree $1$ in $H$. 
	 Now we show that there is at most one vertex between $p_x$ and $p_j$. As all of these vertices are non-important with in-degree one, they have out-degree two. Furthermore, by Lemma~\ref{lem:nonimportant} there is a $P$-avoiding path from each such vertex $p_{x'}$ to a vertex $p_{y'}$ with $y'<x'$. But all the vertices between $p_x$ and $p_j$ have in-degree one and, hence, also $y'\le x$. Thus, there is a $P$-avoiding path from $p_{j-1}$ to some vertex $p_z\le_P p_x$. Applying Lemma~\ref{lem:indegree_of_non_important_vertices} to $p_{j-1}, p_{x'}, p_z$ for every $x'$ between $x$ and $j-1$ in $\overleftarrow{H}$ implies that all vertices between $x$ and $j-1$ have out-degree one. But we already proved that they have also in-degree two, which contradicts the fact that every vertex in $H$ has at least three neighbors. 
%	  Let $p_{x'}$ be the rightmost vertex on $P$ such that there is a $P$-avoiding $Q'$ path from $p_{x'}$ to some vertex $p_{y'}$ with $y'\le x$. 
	 %We will use similar argument to the above, but in the graph $\overleftarrow{H}$.
	  %Using a similar argument in $\overleftarrow{H}$, namely that once we are able to get to $p_x$ in $\overleftarrow{P}$, then using $p_x\overleftarrow{Q}p_y$ and $\overleftarrow{P}$ we can get to any vertex between $p_x$ and $p_j$, it is not hard to see
	  % that there is at most one vertex with out-degree at least $2$ between $p_x$ and $p_j$. It follows that, since $H$ does not contain vertices of total degree $1$ or $2$, the vertex $p_x$ is either $p_{j-1}$ or $p_{j-2}$.  
	 By a symmetric argument in  $\overleftarrow{G}$, $\overleftarrow{H}$, and $\overleftarrow{R}$, we obtain that only the vertices $p_{i+1}$ and $p_{i+2}$ (and obviously $p_{j-1}$) can have an $F$-avoiding directed path to a vertex in $V(P)\setminus F$.
	 
	 Now clearly the only role of edges in $H[C\cup F]$ is to connect vertices on $F$ that are the endpoints of some $F$-avoiding directed path to or from a vertex in $V(P)\setminus F$. We know that only the vertex $p_{i+1}$ and one of vertices $p_{j-2}$ or $p_{j-1}$ can have an $F$-avoiding path from a vertex in $V(P)\setminus F$. Similarly, only the vertex $p_{j-1}$ and one of vertices $p_{i+1}$ or $p_{i+2}$ can have an $F$-avoiding path to a vertex in $V(P)\setminus F$. Let $u$ be the (unique) vertex other than $p_{j-1}$ that is the first vertex on an $F$-avoiding path to a vertex in $V(P)\setminus F$ and let $v$ be the (unique) vertex other than $p_{i+1}$ that is the last vertex on an $F$-avoiding path from a vertex in $V(P)\setminus F$. Now we claim that $H[C\cup F]$ together with vertices $p_{i+1}, v,u, p_{j-1}$ satisfy all the conditions of Lemma~\ref{lem:ladder_properties_only_if}
	 
	% We claim that $H[C\cup F]$ is inclusion-minimal with respect to the requests $(p_{i+1}, p_{j-1})$ and $(u, v)$. To show the claim, we need to argue that we always connected also $(p_{i+1}, v)$ and $(u, p_{j-1})$.\oscom{I do not understand this sentence.} 
	 
	  Clearly, $p_{i+1}[P]p_{j-1}$ is a path from $p_{i+1}$ to $p_{j-1}$. Furthermore, if there would be no path from $v$ to $u$ in $H[C\cup F]$, then all arcs not in $A(P)$ that are incident to a vertex between $u$ and $v$ %$u,v,p_{i+1}$, and $p_{j-1}$ 
	  would be pointless and, because of inclusion-minimality, not present. Therefore, all the vertices between $u$ and $v$ would have total degree at most two and it follow that $j-i\le 5$. Otherwise, $H[C\cup F]$ contains also a path from $v$ to $u$. Furthermore, either $u= p_{j-1}$, or $u= p_{j-2}$, but then  $p_j^1$ is before $p_{j-1}$ and it follows from Lemma~\ref{lem:indegree_of_non_important_vertices} applied to $p_j^1$, $p_{j-1}$, and $p_j^4$ that $p_{j-2}$ is the only in-neighbor of $p_{j-1}$ in $H[C\cup F]$. % and the edge $up_{j-1}$ has to be in any solution connecting $(p_{i+1}, p_{j-1})$. 
	  From a symmetric argument it follows that either $v= p_{i+1}$, or $v= p_{i+2}$. Furthermore, from inclusion-minimality of $H$, it follows that $H[C\cup F]$ is also inclusion-minimal. Hence, indeed $H[C\cup F]$, together with vertices $p_{i+1}, v,u, p_{j-1}$ satisfy all the assumptions of Lemma~\ref{lem:ladder_properties_only_if} and hence it is a ladder and the lemma follows. 
\end{proof}
}

\begin{lemma}\label{lem:protrusion_replacement}
	Let $a,b,c,d$ be four vertices of $H$ and $F\subseteq V(H)$, such that $a=b$ or $ab\in A(H)$, $c=d$ or $cd\in A(H)$,  $F\cap T = \emptyset$, $H[F]$ is a connected component of $H-\{a,b,c,d\}$, and $H[F\cup \{a,b,c,d\}]$ is isomorphic to a ladder $G_{n,I}$. There exist a directed graph $H'$ and a set $F'\subseteq V(H')$ such that:
%\begin{enumerate}%[a)]
%	\item
	(1)
	the genus of $H'$ is at most the genus of $H$, 
	%\item
	(2)
	$H'-F'=H-F$,
	%\item
	(3)
	$|F'|= \OhOp{1}$,
	%\item
	(4)
	$N_{H'}(F')=\{a,b,c,d\}$,
	%\item
	(5)
	$H'$ is an inclusion-minimal solution to $R$, and 
	%\item
	(6)
	for every $k\ge 10$, if $\sym(H)$ contains $k\times k$ grid as a minor, then $\sym(H')$ contains $k\times k$ grid as a minor. 
%\end{enumerate}
\end{lemma}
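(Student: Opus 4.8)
\emph{The plan.} If $n\le n_0$ for a suitable absolute constant $n_0$, I would simply take $H'=H$, $F'=F$; so assume the ladder is long. The idea is to replace $H[F\cup\{a,b,c,d\}]\cong G_{n,I}$ by a ladder $G_{n_0,I_0}$ of constant length of \emph{the same type} as $G_{n,I}$: $n_0$ of the same parity as $n$, $1\in I_0\iff 1\in I$, and $n_0\in I_0\iff n\in I$. This makes Lemma~\ref{lem:ladder_properties} match on both sides (which of $a,b,c,d$ coincide, and that $G_{n_0,I_0}$ is a union of a path from $a$ to $d$ and a path from $c$ to $b$, exactly as $G_{n,I}$). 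Then I would set $H'=(H-F)\cup G_{n_0,I_0}$, glued along $\{a,b,c,d\}$, and $F'=V(H')\setminus(V(H)\setminus F)$; properties (2), (3), (4) are immediate with $|F'|=O(1)$.

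\emph{Genus (1).} Here I would argue that $\sym(G_{n_0,I_0})$ is obtained from $\sym(G_{n,I})$ by contracting edges each having at least one endpoint in $F$ --- repeatedly merge two consecutive ``columns'' of the underlying $2\times n$ grid until $n_0$ remain, and contract end-rungs exactly where $I$ vs.\ $I_0$ prescribe (by the choice of the type this never introduces a contraction between two boundary vertices). Hence $\sym(H')$ is a minor of $\sym(H)$, and (1) follows from minor-monotonicity of genus.

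\emph{Correctness and minimality (5).} Since $\{a,b,c,d\}$ separates $F$ from the rest of $H$ and $T\cap F=\emptyset$, the reachability relation on $V(H)\setminus F$ in $H$ equals that of $(H-F)$ with an added arc $xy$ for every ordered pair $(x,y)\in\{a,b,c,d\}^2$ admitting an $x$--$y$ path whose internal vertices all lie in $F$; let $D$ be the digraph of all such pairs, and define $D'$ analogously for $G_{n_0,I_0}$ and $F'$. A direct inspection of the arc set of $G_{n,I}$, together with Lemma~\ref{lem:ladder_properties}, shows $D$ depends only on the type, so $D'=D$. Hence $H$ and $H'$ induce the same reachability on $V(H)\setminus F=V(H')\setminus F'\supseteq T$, so $H'$ is a solution to $R$. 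For inclusion-minimality: an arc of $H'-F'=H-F$ is redundant in $H'$ iff it is redundant in $H$ (same reachability on $V\setminus F$), hence none is; and an arc $e$ of $G_{n_0,I_0}$, if redundant, would keep the reachability of $V(H')\setminus F'$ intact and hence fail to destroy the pair of $D$ that $G_{n_0,I_0}-e$ does destroy (by Lemma~\ref{lem:ladder_properties} the replacement ladder is itself inclusion-minimal strongly connected on $\{a,b,c,d\}$) --- but tracing the corresponding sub-path of one of the two spanning paths back into $H$, and using that $H$ is inclusion-minimal and contains the full ladder $G_{n,I}$, shows every such pair is genuinely needed. So $H'$ is an inclusion-minimal solution to $R$.

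\emph{Grid minors (6) --- the main obstacle.} By Lemma~\ref{lem:outerplanar_ladder} (and since outerplanar graphs have treewidth at most $2$), both $\sym(G_{n,I})$ and $\sym(G_{n,I})[F]$ have treewidth at most $2$. Given a $k\times k$ grid minor of $\sym(H)$ with $k\ge 10$, I would sort its branch sets into those inside $F$, those meeting both $F$ and $V(H)\setminus F$ (each containing a vertex of $\{a,b,c,d\}$, so at most four of them), and those inside $V(H)\setminus F$. The grid-vertices of the first kind induce a subgraph of the grid of treewidth at most $2$ whose only grid-neighbours outside the set are the (at most four) grid-vertices hosting $a,b,c,d$; since in a $k\times k$ grid with $k\ge 3$ any set cut off from the rest by four vertices and inducing a treewidth-$2$ subgraph is of bounded size (the whole grid cannot be hosted in the treewidth-$2$ set $F$), this part uses only $O(1)$ branch sets. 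A bounded configuration of $O(1)$ connected pieces inside a ladder, together with their incidences to the four boundary vertices, occupies a bounded-length portion of that ladder, so for $n_0$ a large enough constant it transplants into $\sym(G_{n_0,I_0})[F']$; the at most four mixed branch sets are rerouted through $F'$ via undirected $x$--$y$ paths (which $\sym(G_{n_0,I_0})$ provides for all $x,y\in\{a,b,c,d\}$), and the remaining branch sets live in the untouched $V(H)\setminus F$. This yields a $k\times k$ grid minor of $\sym(H')$. I expect the hardest part to be exactly this step --- making precise that a large grid minor cannot genuinely rely on the long ladder, and that the bounded ``residual'' configuration inside the ladder fits into the constant-length replacement; the boundaried-graph / protrusion-replacement language from the preliminaries is the natural way to make this rigorous.
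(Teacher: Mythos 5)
Your overall plan is the same as the paper's: replace the ladder piece by a constant-size ladder glued at $\{a,b,c,d\}$, get (5) from the fact that both the old and the new piece are unions of an $a$--$d$ path and a $c$--$b$ path and are strongly connected, and get (6) by classifying the branch sets of a largest grid minor and observing that at most four of them meet $\{a,b,c,d\}$ and only $O(1)$ of them can live inside $F$ (since $\sym(H)[F]$ has treewidth $2$ and a $\ge 10\times 10$ grid cannot be cut into two large parts by four vertices). However, two steps of your write-up have genuine gaps. First, the genus argument: you claim $\sym(G_{n_0,I_0})$ with $I_0\subseteq\{1,n_0\}$ is obtained from $\sym(G_{n,I})$ by contractions, but this fails whenever $I$ contains internal indices. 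For instance, for $I=[n]$ the underlying graph of $G_{n,I}$ is a path, and more generally with densely spaced internal identifications it is a chain of cycles in which no two cycles share an edge; a clean ladder of length at least $3$ contains two cycles sharing an edge, so it is not a minor of such a graph, and (1) does not follow from minor-monotonicity as you argue. (The paper does not use a minor argument here; it re-embeds the outerplanar replacement in place of the old piece, adjusting the parity/orientation of the ladder so that the attachment at $c,d$ matches.)

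Second, the step you yourself flag as the hardest is indeed where your justification breaks: the $O(1)$ branch sets contained in $F$ need not ``occupy a bounded-length portion'' of the ladder --- a single branch set may be an entire row of the ladder, spanning all $n$ columns and attaching to the boundary at both ends --- so nothing can be literally transplanted. What is actually needed is a compression statement: the at-most-$8$-vertex boundaried pattern these branch sets realize (the paper's $K_F$) is a boundaried minor of the long ladder, and one must show it is also a boundaried minor, with the same boundary labels, of a ladder of constant length. The paper obtains this by writing a constant-size MSO formula (``underlying graph of a ladder'', via Lemma~\ref{lem:outerplanar_ladder}, conjoined with ``contains $K_F$ as a boundaried minor'') and invoking the bounded-size-model property on graphs of bounded treewidth; without this, or a direct re-routing argument, your a priori choice of the absolute constant $n_0$ is unjustified. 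Relatedly, your inclusion-minimality argument for arcs of the replacement ladder is only gestured at; the missing detail is the paper's case analysis showing that deleting such an arc destroys the $a$--$d$ or the $c$--$b$ path through $F'$, and that then a corresponding arc of $P_1$ or $P_2$ could have been deleted from $H$, contradicting the inclusion-minimality of $H$.
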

\sv{
\begin{proof}[Proof sketch]
	From Lemma~\ref{lem:ladder_properties} it follows that $H[F\cup \{a,b,c,d\}]$ is a union of two directed paths $P_1$ from $a$ to $d$ and $P_2$ from $c$ to $b$. We construct $F'$ such that $H'[F'\cup \{a,b,c,d\}]$ is a ladder $G_{n',\emptyset}$. Even though it is a bit technical, it is rather straightforward to verify that if we replace $F$ by another ladder, then $H'$ remain an inclusion-minimal solution to $R$. If $\sym(H)$ does not contain any $k\times k$ grid for $k\ge 10$, then we just replace $F$ with any constant size ladder and we are fine. Otherwise, we take a largest grid minor $K$ of $\sym(H)$. Since $\sym(H)[F\cup \{a,b,c,d\}]$ has treewidth $2$ and only $4$ of its vertices have neighbors in the rest of $H$, one can show that $\sym(H)[F\cup \{a,b,c,d\}]$ contracts to at most eight vertices in $K$. Let $K_F$ be the graph induced on these eight vertices. It is easy to see that if we replace $H[F\cup \{a,b,c,d\}]$ with any ladder whose underlying undirected graph has $K_F$ as a minor which furthermore maps its boundaries onto $K_F$ in a same way as $\sym(H)[F\cup \{a,b,c,d\}]$, then the underlying undirected graph of the resulting graph contains $K$ as a minor as well. However, one can express by constant size MSO formula that a boundaried graph is a ladder $G_{n',\emptyset}$ and has the boundaried graph $K_F$ as a minor. It follows that this formula has a constant size model, whose suitable orientation is the sought replacement. 
\end{proof}
}
\toappendix{
\begin{proof}[Proof of Lemma~\ref{lem:protrusion_replacement}]
	From Lemma~\ref{lem:ladder_properties} it follows that $H[F\cup \{a,b,c,d\}]$ is a union of two directed paths $P_1$ from $a$ to $d$ and $P_2$ from $c$ to $b$. 
	 We will construct $F'$ and the edges incident to $F'$ in a way such that %$\bar{a},\bar{b},\bar{c}, \bar{d}\in F'$, the only arcs between $F'$ and $V(H')\setminus F'$ are $\{\bar{a}a,b\bar{b},\bar{c}c,d\bar{d}\}$ and 
	 $H'[F'\cup \{a,b,c,d\}]$ is a ladder $G_{n',\emptyset}$ and the set $\{a,b,c,d\}$ is a vertex-cut in $H'$.% is an inclusion-minimal solution connecting $a$ to $d$ by a directed  path $P_1'$ and $c$ to $b$ by a dirrected path $P_2'$, $b$ and $d$ do not have any in-neighbors in $F'$, and $a$ and $c$ do not have any out-neighbors in $F'$. 
%	More precisely, $F'$ is one of the following three graphs depending on whether $F$ contains a cut-vertex $s$ separating $\{\bar{a}a,b\bar{b}\}$ and $\{\bar{c}c,d\bar{d}\}$. \todo{Need pictures and description of $F'$}
	
\begin{claim}
$H'$ is an inclusion-minimal solution to $R$.
\end{claim}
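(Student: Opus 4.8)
The plan is to exploit that the subgraph $H[F\cup\{a,b,c,d\}]$, and likewise the replacement $H'[F'\cup\{a,b,c,d\}]$, is attached to the rest of the graph only through the four vertices $a,b,c,d$: by hypothesis $F$ is a connected component of $H-\{a,b,c,d\}$, and by the construction $H'-F'=H-F$ and $N_{H'}(F')=\{a,b,c,d\}$ (conditions (2) and (4)). Hence whether a set of requests is realized is determined by $H-F$ together with the set of ordered pairs of $\{a,b,c,d\}$ joined by a directed path inside the ladder part, or inside the ladder part after the deletion of a single arc. By Lemma~\ref{lem:ladder_properties} every ladder $G_{n,I}$ is strongly connected, so $H[F\cup\{a,b,c,d\}]\cong G_{n,I}$ and $H'[F'\cup\{a,b,c,d\}]\cong G_{n',\emptyset}$ both join \emph{every} ordered pair of their four designated vertices.

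I would first check that $H'$ is a solution. Given a request $st\in A(R)$, take a directed $s$-$t$ path $Q$ in $H$. Since $s,t\notin F$ (because $F\cap T=\emptyset$) and $F$ is a component of $H-\{a,b,c,d\}$, the path $Q$ splits into pieces lying in $H-F=H'-F'$ and maximal excursions whose internal vertices lie in $F$, each excursion being a directed $x$-$y$ path with $x,y\in\{a,b,c,d\}$. Replacing every excursion by a directed $x$-$y$ path inside $G_{n',\emptyset}$ (one exists, by strong connectivity) produces an $s$-$t$ walk in $H'$, which shortens to an $s$-$t$ path; so $H'$ is a solution.

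For inclusion-minimality, I would suppose $H'-e$ is a solution for some $e\in A(H')$ and contradict the inclusion-minimality of $H$. If $e$ is not an arc of the ladder $H'[F'\cup\{a,b,c,d\}]$, then $e\in A(H-F)$, the original ladder $H[F\cup\{a,b,c,d\}]$ is intact in $H-e$, and running the rerouting argument above in reverse -- rerouting the $F'$-excursions of a request path of $H'-e$ through the still strongly connected original ladder -- shows $H-e$ is a solution, a contradiction. If instead $e\in A(G_{n',\emptyset})=A(H'[F'\cup\{a,b,c,d\}])$, then by Lemma~\ref{lem:ladder_properties} the graph $G_{n',\emptyset}$ is an inclusion-minimal strongly connected graph connecting $\{a,b,c,d\}$, so $G_{n',\emptyset}-e$ no longer joins some ordered pair of $\{a,b,c,d\}$. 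Here I would pair $e$ with an arc $e'$ of the original ladder $G_{n,I}$ such that every ordered pair of $\{a,b,c,d\}$ still joined in $G_{n',\emptyset}-e$ is also still joined in $G_{n,I}-e'$; granting this, the inclusion-minimality of $H$ gives a request $st$ that $H-e'$ does not realize, and since realizing $st$ in $H-e'$ would use the original ladder only through connections surviving in $G_{n,I}-e'$, while $H'-e$ offers $H-F$ together with a subset of those connections, the same excursion-rerouting shows $H'-e$ does not realize $st$ either -- contradicting that $H'-e$ is a solution.

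The step I expect to be the real work is this pairing of ladder arcs. Using the decomposition of a ladder into the two directed paths $P_1$ and $P_2$ of Lemma~\ref{lem:ladder_properties}, one reads off which ordered boundary pairs survive the deletion of each arc, distinguishing rung arcs (which lie on both $P_1$ and $P_2$) from side arcs (which lie on exactly one of them); one then takes $n'$ equal to a small constant, with its parity chosen so that the $cd$-end of $G_{n',\emptyset}$ has the same orientation as in $H$ (and with the degenerate cases $a=b$ and $c=d$ handled by the analogous smaller structure), so that for every arc of $G_{n',\emptyset}$ there is an arc of $G_{n,I}$ with at least as large a set of surviving boundary connections -- using that the identifications recorded by $I$ only ever add connections, never remove them. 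All the remaining ingredients -- the excursion decomposition, the strong connectivity of ladders, and rerouting-then-shortening -- are routine given the earlier lemmas.
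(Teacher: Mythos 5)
Your proposal is correct and follows essentially the same route as the paper's own proof: the solution property is obtained via the excursion decomposition through $\{a,b,c,d\}$ together with strong connectivity of the replacement ladder, and minimality by transferring a deletable arc of $H'$ to a deletable arc of $H$, splitting on whether the arc lies in the replaced part. The pairing step you defer as ``the real work'' is exactly what the paper carries out concretely: by Lemma~\ref{lem:ladder_properties}, deleting any arc of $G_{n',\emptyset}$ destroys the $a$-to-$d$ or the $c$-to-$b$ connection, and one pairs it with an arbitrary arc of $A(P_1)\setminus A(P_2)$ (respectively $A(P_2)\setminus A(P_1)$) of the original ladder, rerouting entering/leaving pairs through the arcs $ab$, $cd$ and the surviving path $P_2$ (respectively $P_1$).
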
	
\begin{claimproof}
Observe, that $H'$ is a solution to $R$, because any $s$-$t$ path $P$ between two terminals $s$ and $t$ in $H$ either did not contain any vertex of $F$ and then it is present in $H'$ as well, or it somewhere entered $F$ for the first time from a vertex $x\in \{a,b,c,d\}$ and then it left $F$ for the last time to another vertex $y\in \{a,b,c,d\}$. But $H'[F'\cup \{a,b,c,d\}]$ is strongly connected, so there is a path from $x$ to $y$ and hence also from $s$ to $t$ in $H'$. 
	
To finish the proof of the claim we show that if $H'$ is not an inclusion-minimal solution to $R$, then also $H$ is not an inclusion-minimal solution to $R$. Let $uv$ be an arc in $H'$ such that $H'-uv$ is also a solution to $R$. We distinguish 2 cases:
	
\begin{enumerate}
	\item[1)] The arc $uv$ is in $H-F$ as well.
\end{enumerate}	
	 
We claim that in this case $H-uv$ is also a solution to $R$. To prove this claim, let $s$ and $t$ be two terminals and $P'$ a path from $s$ to $t$ in $H'-uv$. We show that there is a path from $s$ to $t$ in $H-uv$ as well. If $P'$ does not contain a vertex from $F'$, then whole $P'$ is in $H-uv$ as well and we are done. Otherwise, let us split $P'$ in three parts as follows. Let $x$ be the first vertex on $P'$ in $\{a,b,c,d\}$ and let $y$ be the last vertex on $P'$ in $\{a,b,c,d\}$. Since $H[F\cup \{a,b,c,d\}]$ is strongly connected, there is a path $Q$ from $x$ to $y$ in $H[F\cup \{a,b,c,d\}]$. Then the path $s[P']x\circ Q\circ y[P']t$ is in $H-uv$ and the case 1) follows. 
	 
\begin{itemize}
	\item[2)] The arc $uv$ is not in $H-F$.
\end{itemize}

Since $H'[F'\cup \{a,b,c,d\}]$ is a ladder, from Lemma~\ref{lem:ladder_properties} 
it follows that $H'[F'\cup \{a,b,c,d\}]$ is an inclusion-minimal solution connecting $a,b,c,d$ and in particular it is an union of two directed paths $P'_1$ from $a$ to $d$ and $P'_2$ from $c$ to $b$. Therefore, $H'[F'\cup \{a,b,c,d\}]-uv$ either does not connect $a$ to $d$ or $c$ to $b$.
First, if $H'[F'\cup \{a,b,c,d\}]-uv$ does not contain a directed path from $a$ to $d$, then we claim that for any edge $e\in A(P_1)\setminus A(P_2)$, $H-e$ is also a solution to $R$. 
Again, let $s$ and $t$ be two terminals and $P'$ a path from $s$ to $t$ in $H'-uv$. If $P'$ does not contain any vertex of $F'$, then $P'$ is in $H-e$ as well and the case follows. Otherwise $P'$ enters $H'[F\cup\{a,b,c,d\}]$ in either $a$ or $c$ (note that only in-neighbor of $b$ is $a$ and of $d$ is $c$) and leaves $H'[F\cup\{a,b,c,d\}]$ in $b$ or $d$. However, if it enters in $a$ and leaves in $b$ or enters in $c$ and leaves in $d$, then the whole portion of the path from $a$ to $b$ or from $c$ to $d$, respectively, can be replaced by an arc. The resulting path is then whole in $H'-F'$ and hence also in $H-e$. Otherwise, since $H'[F\cup\{a,b,c,d\}]$ does not contain a path from $a$ to $d$, $P'$ enters $H'[F\cup\{a,b,c,d\}]$ in $c$ and leaves in $b$. However, in this case the path $s[P']c\circ P_2\circ b[P']t$ is in $H-e$ and hence indeed $H-e$ is a solution to $R$, contradiction. Finally, the case  if $H'[F'\cup \{a,b,c,d\}]-uv$ does not contain a directed path from $c$ to $b$ is symmetric. 
\end{claimproof}

Now, if $\sym(H)$ does not contain a $k\times k$ grid for any $k\ge 10$, then we done. Otherwise, let $\ell$ be such that $\sym(H)$ contains $\ell\times\ell$ grid, but does not contain $(\ell+1)\times(\ell+1)$ grid as a minor. Let us denote by $K$ an $\ell\times\ell$ grid minor of $\sym(H)$ and let $f\colon V(K)\rightarrow 2^{H}$ be the minor model of $K$ in $\sym(H)$, that is, for all $u,v\in V(K)$ such that $u\neq v$ it holds that 
\begin{enumerate}
\item $f(u)\cap f(v)=\emptyset$,
\item if  $\{u,v\}\in E(K)$, then there are vertices $x,y$ such that $x\in f(u), y\in f(v), \{x,y\}\in E(\sym(H))$, and 
\item $\sym(H)[f(u)]$ is a connected graph. 
\end{enumerate}
It is well known that a minor model of $K$ in $\sym(H)$ exists if and only if $K$ is a minor of $\sym(H)$ (see~\cite[Proposition 1.7.2]{Diestel17}).

Observe that at most $4$ vertices of $K$ map by $f$ to one of vertices in $\{a,b,c,d\}$ and these vertices form a separator between the vertices of $K$ that map on subsets of $F$ and the vertices that maps on subsets of $V(H) \setminus F$. Note, that treewidth of $\sym(H)[F]$ is at most $2$ and hence $\sym(H)[F]$ does not contain $3\times 3$ grid as a minor. Note that for each separator of size at most $4$ in a $10 \times 10$ grid, one of the connected components obtained after removing the separator contains all but at most $4$ vertices and a large portion of the grid preserved. Hence, as $K$ is at least a $10 \times 10$ grid and we have a separator of size at most $4$, at most $4$ of vertices of $K$ can map by $f$ on a subset of $F$. 

Let us denote by $K_F$ the subgraph of $K$ such that $v\in V(K_F)$ if and only if $f(v)\cap (F\cup \{a,b,c,d\})\neq \emptyset$ and $\{u,v\} \in E(K)$ is in $E(K_F)$ if and only if there are vertices $x,y \in (F\cup \{a,b,c,d\})$ such that $x\in f(u), y\in f(v), \{x,y\}\in E(\sym(H))$. From the discussion above $|V(K_F)|\le 8$. Furthermore, each vertex of $K_F$ with a neighbor outside of $K_F$ in $K$ has to have in its map at least one of vertices $\{a,b,c,d\}$. It follows that $K_F$ is a minor of $\sym{H}[F\cup \{a,b,c,d\}]$, where $f$ with its image restricted to $F\cup \{a,b,c,d\}$ is the minor model.
Hence, if we obtain $H'$ by replacing $H[F]$ by any $H'[F']$ such that $\sym(H')[F'\cup \{a,b,c,d\}]$ has $K_F$ as a minor such that there is a minor model $g\colon K_F\rightarrow 2^{F'\cup \{a,b,c,d\}}$ with the same pre-images of subsets  containing $a$,$b$,$c$, and $d$ as $f$, then $\sym(H')$ contains $\ell\times\ell$ minor as well. 

Note that $\sym(H)[F\cup \{a,b,c,d\}]$ is a minor of the underlying undirected graph of some ladder $G_{n,\emptyset}$ and hence also $K_F$ is a minor of $\sym(G_{n,\emptyset})$. We claim there is a universal constant $n'$ such that $K_F$ is a minor of $\sym(G_{n',\emptyset})$. This follows from the following argument.
It is easy to express by an MSO formula the fact that a $t$-boundaried graph contains the boundaried graph $K_F$ as a minor (see, e.g., Preliminaries of Fomin et. al~\cite{FominLMS12}). Furthermore, it follows from Lemma~\ref{lem:outerplanar_ladder} that we can say that a boundaried graph $B$, with boundary vertices $a,b,c,d$ is isomorphic to some $\sym(G_{n',\emptyset})$ by just saying that it is outerplanar, $2$-connected, vertices $a,b,c,d$ have all degree precisely two and all other vertices have degree precisely 
three. Since all of these properties are MSO definable, it follows that we can express the fact that a graph $B$ is isomorphic with an underlying undirected graph of some ladder $G_{n',\emptyset}$ with a constant size MSO formula as well. Hence, there exists a constant size MSO formula $\phi$ that express the fact that a boundaried graph $G$ is a an underlying undirected graph of a ladder and has $K_F$ as a minor. Since we know that $\sym(G_{n,\emptyset})$ is a model of $\phi$ and $\phi$ has a constant size, it follows that a size of a minimum sized model is at most some constant that depends only on the length of the formula. The only thing we need to be careful about is that we do not want the genus of graph to grow after replacing $H[F]$ with the smallest model of $\phi$. This can be simply done by preserving the parity of the length of the ladder. It is easy to adjust the minor model to reflect that.  Hence, we might take a one longer ladder, if the orientation of $c$ and $d$ in $H$ does not match the one in the ladder, we found.  
This finishes the proof of the lemma.
\end{proof}
}

\begin{lemma}\appmark\label{lem:reducing_length}
	There exists a directed graph $H'$ such that 
	\begin{itemize}
		\item the genus of $\sym(H')$ is at most the genus of $\sym(H)$,
		\item $T\subseteq V(H')$, 
		\item For all $s,t\in T$, there is a directed path from $s$ to $t$ in $H - (T\setminus\{s,t\})$ if and only if there is a directed path from $s$ to $t$ in $H' - (T\setminus\{s,t\})$,
		\item $H'$ is an inclusion-minimal solution to $R$,
		\item $\tw(\sym(H)) \le 20^{4(4g+3)}\tw(\sym(H'))$, and
		\item for any arc $st\in A(R)$ there is a path $P$ in $H'$ with length at most $\OhOp{|I_P|}$.
	\end{itemize}
\end{lemma}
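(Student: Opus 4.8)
The plan is to build $H'$ from $H$ by repeatedly applying the protrusion replacement of Lemma~\ref{lem:protrusion_replacement} to the ``long'' ladder segments of terminal-to-terminal paths, until none remains.

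First I fix, in the current host (initially $H$), for every arc $st \in A(R)$ a $T$-avoiding directed $s$-$t$ path $P$; such a path exists because $R = R'$ and, by Lemma~\ref{lem:protrusion_replacement}(2) and~(5) together with the strong connectivity of ladders (Lemma~\ref{lem:ladder_properties}), a single replacement preserves the property that $st \in A(R)$ if and only if there is a $T$-avoiding $s$-$t$ path in the host. For such a $P$ call the vertices of $Q_P \cup I_P$ together with $s$ and $t$ the \emph{breakpoints of $P$}; by Lemma~\ref{lem:important_vertices_on_P} we have $|I_P| \le 2q-2$, and since $|Q_P| \le 4|I_P|$ there are $\OhOp{|I_P| + 1}$ of them. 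Fix a constant $c_0 \ge 5$ exceeding the constant that bounds $|F'|$ in Lemma~\ref{lem:protrusion_replacement}(3). The procedure repeats the following: while there is an arc $st$, a $T$-avoiding $s$-$t$ path $P$ in the current host, and two consecutive breakpoints $p_a <_P p_b$ of $P$ with $b - a \ge c_0$, invoke Lemma~\ref{lem:ladder_structure_between_important_points} to see that the segment $p_a[P]p_b$ lies inside a ladder $H[C \cup \{p_{a+1},p_{a+2},p_{b-2},p_{b-1}\}]$ whose only vertices with neighbours outside it are $p_{a+1},p_{a+2},p_{b-2},p_{b-1}$, and then apply Lemma~\ref{lem:protrusion_replacement} to this ladder, replacing the component $C$ by a set $F'$ of constant size. (When one of the two consecutive breakpoints is $s$ or $t$ --- in particular when $P$ has no important vertex at all and the whole path is a single segment --- Lemma~\ref{lem:ladder_structure_between_important_points} does not apply literally, but its case analysis goes through with $s$, resp.\ $t$, in the role of a vertex $p_k^4$, resp.\ $p_\ell^1$, and still identifies the segment as a ladder whose boundary avoids all terminals; Lemma~\ref{lem:protrusion_replacement} then applies.)

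Because $b - a \ge c_0$, the replaced component $C$ has more vertices than $F'$, so each step strictly decreases the number of vertices of the host and the procedure terminates; let $H'$ be the final host. Iterating Lemma~\ref{lem:protrusion_replacement}(1),(2),(5),(6) shows that $H'$ has genus at most that of $H$, has the same $T$-avoiding terminal-to-terminal connectivity as $H$ (and contains $T$, since $T \subseteq V(H)\setminus F = V(H')\setminus F'$ at every step), is an inclusion-minimal solution to $R$, and contains every $k \times k$ grid minor of $\sym(H)$ with $k \ge 10$; if a cleanup is needed to restore that every non-terminal has at least three neighbours we reapply Lemma~\ref{lem:degreeTwoVertices}, which preserves all of these and does not lengthen shortest realizations. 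At termination, for every arc $st$ and every $T$-avoiding $s$-$t$ path $P$ in $H'$, consecutive breakpoints of $P$ are at distance less than $c_0$, so $P$ has length at most $c_0 \cdot (|Q_P \cup I_P| + 1) = \OhOp{|I_P|}$, which is the required path-length bound. For the treewidth bound, $\sym(H)$ has genus at most $g$, so by Proposition~\ref{pro:genus-minor} it has no $K_{3,4g+3}$-minor; by Proposition~\ref{thm:MinorFreeTW} with $k = 4g+3$, if $\tw(\sym(H)) \ge 20^{4(4g+3)} r$ then $\sym(H)$ has an $r \times r$ grid minor, which for $r \ge 10$ survives into $\sym(H')$ and forces $\tw(\sym(H')) \ge r$; combined with the fact that when $\sym(H)$ has no $10 \times 10$ grid minor its treewidth is already a constant depending only on $g$, this yields $\tw(\sym(H)) \le 20^{4(4g+3)} \tw(\sym(H'))$.

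I expect the main obstacle to be that the representative paths for different request arcs may overlap inside a common ladder, so the replacements cannot be performed simultaneously; this forces the iterative construction and the monotone termination measure (number of vertices), and makes it essential that each of the six target properties is an invariant of one application of Lemma~\ref{lem:protrusion_replacement}. The secondary technical point is the handling of the path segments incident to $s$ and $t$ (and of paths without important vertices), which are not literally covered by Lemma~\ref{lem:ladder_structure_between_important_points} and require re-running its argument with an endpoint in place of a marked vertex; one also has to keep the bookkeeping straight so that the final $H'$ is again an inclusion-minimal solution in which every non-terminal has at least three neighbours, ensuring the structural lemmas apply to $H'$ itself when reading off the path-length bound.
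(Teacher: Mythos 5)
Your proposal is correct and matches the paper's own argument: $H'$ is obtained by iterating the protrusion replacement of Lemma~\ref{lem:protrusion_replacement} on the ladders supplied by Lemma~\ref{lem:ladder_structure_between_important_points}, termination follows from the strict decrease in vertex count, the first five properties are inherited from each replacement step, the treewidth bound comes from grid-minor preservation combined with Propositions~\ref{pro:genus-minor} and~\ref{thm:MinorFreeTW}, and the length bound follows because once no replacement applies every segment between consecutive vertices of $I_P\cup Q_P$ has constant size. The only (immaterial) difference is that you drive the iteration by long segments of representative paths while the paper simply shortens ladders until none remains and reads off the segment bound afterwards; your explicit remarks on the endpoint segments and on restoring minimum degree three are no less careful than the paper's treatment.
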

\sv{
\begin{proof}[Proof sketch]
We obtain $H'$ by recursively applying Lemma~\ref{lem:protrusion_replacement} until there is no ladder with the boundary of size at most $4$ that can be shortened by applying Lemma~\ref{lem:protrusion_replacement}.
It follows from Lemma~\ref{lem:ladder_structure_between_important_points} that distance between any two consecutive $p_i,p_j \in Q_P \cup I_P$ is constant.
Since the genus of $\sym(H)$ is at most $g$, it follows from Proposition~\ref{pro:genus-minor} that $\sym(H)$ is $K_{3,4g+3}$-minor-free.
Therefore, due to Proposition~\ref{thm:MinorFreeTW}, the treewidth of $\sym(H)$ is at most $20^{4(4g+3)}\ell$, where $\ell$ is the size of the largest grid minor of $\sym(H)$.
\end{proof}
}
\toappendix{
\begin{proof}[Proof of Lemma~\ref{lem:reducing_length}]
	We obtain $H'$ by recursively applying Lemma~\ref{lem:protrusion_replacement} until there is no ladder with the boundary of size at most $4$ that can be shortened by applying Lemma~\ref{lem:protrusion_replacement}. 
	 The graph $H'$ always exists and is well defined as each successful application of Lemma~\ref{lem:protrusion_replacement} decrease the size of the graph. From the same lemma it follows that for every $k\ge 10$, if $\sym(H)$ contains $k\times k$ grid as a minor, then $\sym(H')$ contains $k\times k$ grid as a minor. Now let $\ell$ be the maximum positive integer such that $\sym(H)$ contains an $\ell\times\ell$ grid as a minor. 
	 Since the genus of $\sym(H)$ is at most $g$, it follows from Proposition~\ref{pro:genus-minor} that $\sym(H)$ is $K_{3,4g+3}$-minor-free. Therefore, due to Proposition~\ref{thm:MinorFreeTW}, the treewidth of $\sym(H)$ is at most $20^{4(4g+3)}\ell$. If $\ell\le 10$, then clearly $\tw(\sym(H)) = O(\tw(\sym(H')))$. Otherwise, $\sym(H')$ also contains an $\ell\times\ell$ grid as a minor. However, this means that the treewidth of $\sym(H')$ is at least $\ell$ and hence $\tw(\sym(H)) = O(\tw(\sym(H')))$. The inclusion-minimality and the fact that $H'$ connects precisely the same set as $H$ follows from Lemma~\ref{lem:protrusion_replacement} and the fact that we were replacing only subgraphs that did not contain any terminal and we preserved all the connection between vertices on the boundary. 
	 
	Now let $s$ and $t$ be two terminals such that $st\in A(R)$ and let $P$ be a directed $s$-$t$ path in $H'$. Clearly, $|I_P\cup Q_P|\le 5|I_P|$.	We will show that between any two consecutive vertices in $I_P\cup Q_P$ on the path $P$ there are at most constantly many other vertices.  	
	Let $p_i, p_j\in V(P)$, $1\le i<j\le r$ 
	be two vertices such that $\{p_i,p_j\}\in Q_P\cup I_P$ and for all $k$ such that $i<k<j$ it holds that $p_k\notin Q_P\cup I_P$. If $j-i\le 5$, then we are fine and there are only constantly many vertices between $p_i$ and $p_j$. Otherwise, let $F= \{p_{i+1},\ldots, p_{j-1}\}$. 
	It follows from Lemma~\ref{lem:ladder_structure_between_important_points} that 
	if $C$ is the connected component of $H-\{p_{i+1},p_{i+2}, p_{j-2}, p_{j-1}\}$ containing $p_{i+3}$, then $H[C\cup \{p_{i+1},p_{i+2}, p_{j-2}, p_{j-1}\}]$ is a ladder. However, $F\subseteq C$ and since we cannot apply  Lemma~\ref{lem:protrusion_replacement} it follows that it has a constant size. Therefore, there are at most constantly many vertices between $p_i$ and $p_j$ for any two consecutive vertices in $I_P\cup Q_P$. It follows that length of $P$ is at most $\OhOp{|I_P\cup Q_P|}=\OhOp{|I_P|}$ and the lemma follows. 
\end{proof}
}

\begin{proof}[Proof of Theorem~\ref{thm:H_has_bounded_tw}]
By Lemma~\ref{lem:reducing_length}, there exists $H'$ connecting the same set of terminals as $H$ such that $\tw(\sym(H)) \le 20^{4(4g+3)}{\tw(\sym(H'))}$ and for each arc $st\in A(R)$, there is a directed path from $s$ to $t$ of length at most $\OhOp{|I_P|}$ in $H'$. Furthemore, all the vertices of $H'$ are on some path of length at most $\OhOp{|I_P|}$ between two terminals in $H'$. %, since none of arcs or vertices in $H'$.\oscom{This sentence has no.} 
By Lemma~\ref{lem:small_distances}, it follows that there is a path of length at most $\OhOp{q}$ between each pair of terminals in $\sym(H')$ and hence the diameter of $\sym(H')$ is also at most $\OhOp{q}$. Finally by Proposition~\ref{prop:genus_tw}, it follows that $\sym(H')$ has treewidth $\OhOp{gq}$, where $g$ is the genus of $\sym(H')$. Since the genus of $\sym(H')$ is at most genus of $\sym(H)$, which is constant, the lemma follows. 
\end{proof}

\section{Improved ETH-based Lower Bound for General Graphs}\label{sec:DSN_lowerbounds}
\sv{\toappendix{
\section{Additions to Section~\ref{sec:DSN_lowerbounds}}
}}
Our proof is based on a reduction from (a special case of) the following problem:
\prob{\textsc{Partitioned Subgraph Isomorphism} (PSI)}
{Two unoriented graphs $G$ and $H$ with $|V(H)| \le |V(G)|$ ($H$ is \emph{smaller}) and a mapping $\psi\colon V(G) \to V(H)$.}
{Is $H$ isomorphic to a subgraph of $G$? I.e., is there an injective mapping $\phi\colon V(H) \to V(G)$ such that $\{\phi(u),\phi(v)\} \in E(G)$ for each $\{u,v\} \in E(H)$ and $\psi \circ\phi$ is the identity?}

\lv{We use the following theorem.}

\begin{theorem}[{Marx~\cite[Corollary 6.1]{Marx10}}]
If there is a recursively enumerable class $\mathcal{H}$ of graphs with
unbounded treewidth, an algorithm $\mathbb{A}$, and an arbitrary
function $f$ such that $\mathbb{A}$ correctly decides every instance
of \textsc{Partitioned Subgraph Isomorphism} with the smaller graph $H$ in
$\mathcal{H}$ in time $f(H)n^{o(\textup{tw}(H)/\log \textup{tw}(H))}$,
then ETH fails.
\end{theorem}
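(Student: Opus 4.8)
The statement is a theorem of Marx~\cite{Marx10}, so the plan is to recall the structure of its proof. The first step is to observe that \textsc{Partitioned Subgraph Isomorphism} with pattern $H$ is nothing but the \emph{binary constraint satisfaction problem} (\textsc{CSP}) whose primal (constraint) graph is $H$: given $(G,H,\psi)$, introduce one variable $x_u$ for each $u\in V(H)$ with domain $D_u=\psi^{-1}(u)\subseteq V(G)$, and for each $\{u,v\}\in E(H)$ a binary constraint permitting exactly the pairs $(x_u,x_v)$ with $\{x_u,x_v\}\in E(G)$. The domains are pairwise disjoint, so every satisfying assignment is automatically an injective $\phi$ with $\psi\circ\phi=\mathrm{id}$; the primal graph of this \textsc{CSP} is exactly $H$, its treewidth is $\tw(H)$, and its maximum domain size is at most $n=|V(G)|$. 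Since \textsc{CSP} with primal graph $H$ is solvable in $n^{\tw(H)+O(1)}$ by dynamic programming over a tree decomposition of $H$, it suffices to prove the lower bound in this \textsc{CSP} formulation: assuming ETH, no algorithm decides every binary \textsc{CSP} instance whose primal graph lies in a fixed recursively enumerable class $\mathcal{H}$ of unbounded treewidth in time $f(H)\,n^{o(\tw(H)/\log\tw(H))}$, with $n$ the domain size.

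For the hardness one reduces from $3$-\textsc{Colouring} on graphs with a linear number of edges, which under ETH (via the sparsification lemma and the standard linear reduction from $3$-\textsc{Sat}) admits no $2^{o(N)}$-time algorithm on $N$-vertex instances. Fix a hypothetical algorithm $\mathbb{A}$ with its function $f$ and its sublinear exponent. Given a $3$-\textsc{Colouring} instance $F$ with $|V(F)|=N$ and $|E(F)|=O(N)$, use recursive enumerability of $\mathcal{H}$ to select a member $H\in\mathcal{H}$ whose treewidth $k:=\tw(H)$ tends to infinity with $N$, but slowly enough (depending on $f$) that $\log f(H)=o(N)$. Partition $V(F)$ into $|V(H)|$ blocks, padding with isolated vertices, and assign one block to each vertex of $H$. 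The \textsc{CSP} has one variable per vertex $h\in V(H)$ whose value encodes a $3$-colouring of $h$'s block together with the bookkeeping needed to transport the constraints of those edges of $F$ whose endpoints lie in different blocks: each such edge $\{a,b\}$ of $F$ is routed along a path in $H$ from the vertex hosting $a$'s block to the vertex hosting $b$'s block, the inequality of the colours of $a$ and $b$ is verified at the two ends of this path, and the colour of $a$ is propagated consistently along the path through the domains of the intermediate variables. If $c$ denotes the \emph{congestion} — the maximum number of routing paths through a single vertex of $H$ — and $q=\lceil N/|V(H)|\rceil$ the block size, the maximum domain size of the produced \textsc{CSP} is $n=2^{O(q+c)}$, so $\log n=O(q+c)$, and its primal graph is $H\in\mathcal{H}$.

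The heart of the matter, and the step I expect to be the true obstacle, is an \textbf{embedding lemma}: because $\mathcal{H}$ has unbounded treewidth, for every $F$ one may choose $H\in\mathcal{H}$ admitting an embedding of $F$ (vertices of $F$ to pairwise-disjoint connected subgraphs of $H$, edges of $F$ to paths between them) of congestion $c=O\!\bigl(|E(F)|\cdot\log k/k\bigr)$. This is obtained by using the Excluded Grid Theorem to extract from $H$ a wall/grid-like subgraph of side polynomial in $k$, together with a linkage joining its interface vertices, and then routing the edges of $F$ through this structure via a fractional multicommodity flow that is rounded to integral paths; the extra $O(\log k)$ factor in $c$ is exactly the flow--cut gap incurred in the rounding, and it is the source of the $\log\tw(H)$ in the statement. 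Since $q\le N/|V(H)|\le N/k$ is dominated by $c$, we obtain $\log n=O\bigl(N\log k/k\bigr)$, hence $n^{\tw(H)/\log\tw(H)}=n^{k/\log k}=2^{O(N)}$.

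Putting it together: the reduction runs in time polynomial in its output, whose size is $2^{O(N\log k/k)}=2^{o(N)}$ because $k\to\infty$; and if $\mathbb{A}$ ran in $f(H)\,n^{o(\tw(H)/\log\tw(H))}$ it would, on the produced instance, take time $f(H)\cdot 2^{o(N)}=2^{o(N)}$ by the choice of $H$, so composing the reduction with $\mathbb{A}$ would decide $3$-\textsc{Colouring} on $N$ vertices in $2^{o(N)}$ time, contradicting ETH. The bookkeeping to be careful with is twofold: keeping the construction computable and of size $2^{o(N)}$ (this is why $\mathcal{H}$ must be recursively enumerable and why $k$ is allowed to grow only slowly relative to $N$), and verifying soundness and completeness of the colour-propagation gadget — that the routed \textsc{CSP} is satisfiable if and only if $F$ is $3$-colourable — which amounts to checking that the partial assignments travelling along different routing paths through a shared vertex of $H$ do not interfere, handled by giving each path its own coordinate in that vertex's domain, which is precisely what the term $c$ in $\log n=O(q+c)$ accounts for.
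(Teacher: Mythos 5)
First, note that the paper you are working from does not prove this statement at all: it is imported verbatim from Marx~\cite{Marx10} (Corollary 6.1) and used as a black box, so there is no in-paper proof to compare against. Judged as a reconstruction of Marx's argument, your proposal gets the overall architecture right --- the identification of \textsc{Partitioned Subgraph Isomorphism} with pattern $H$ with binary \textsc{CSP} whose primal graph is $H$ and whose domains are the colour classes $\psi^{-1}(u)$ (for the lower bound you in fact need the direction \textsc{CSP} $\le$ \textsc{PSI}, i.e.\ realizing an arbitrary binary \textsc{CSP} instance as a \textsc{PSI} instance by taking $V(G)=\bigcup_u \{u\}\times D_u$, but both directions are immediate), the reduction from a sparsified ETH-hard problem, the choice of $H\in\mathcal{H}$ with slowly growing treewidth via recursive enumerability, and above all the embedding lemma with congestion $O(|E(F)|\log k/k)$ whose $\log k$ comes from a flow--cut gap. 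You are also right that this embedding lemma is the heart of the matter.

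The genuine gap is in how you propose to prove that embedding lemma. You derive it ``by using the Excluded Grid Theorem to extract from $H$ a wall/grid-like subgraph of side polynomial in $k$'' and routing through it. This cannot work, and it is precisely the obstacle Marx's paper is built to circumvent. A grid of side $s$ has $s^2$ vertices, so a treewidth-$k$ graph on few vertices (e.g.\ a bounded-degree expander on $\Theta(k)$ vertices, which has treewidth $\Theta(k)$) has no grid minor of side $\omega(\sqrt{k})$; at the time of~\cite{Marx10} the known grid-theorem bounds were far weaker still. Routing $m$ demands through a grid of side $s$ gives congestion $\Omega(m/s)$, so even an optimal grid extraction yields $c=\Omega(m/\sqrt{k})$, which leads to $\log n = \Omega(N/\sqrt{k})$ and hence only an $n^{o(\sqrt{k})}$-type lower bound --- not $n^{o(k/\log k)}$. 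Marx instead works directly with the bramble/well-linked-set characterization of treewidth: a graph of treewidth $k$ contains a well-linked (highly connected) set $W$ of size $\Omega(k)$, well-linkedness certifies a concurrent multicommodity flow of value $\Omega(1/(|W|\log k))$ per pair by the $O(\log k)$ flow--cut gap, and rounding that flow yields the routing paths with congestion $O(m\log k/k)$. So your ``flow rounded to integral paths, $\log k$ from the flow--cut gap'' is the right mechanism, but it must be run on a well-linked set of size linear in $k$, not on a grid minor; with the grid as the host structure the key quantitative step fails and the theorem does not follow.
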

  
It is known that there are infinitely many 3-regular graphs such that each such graph $H$ has treewidth $\Theta(|V(H)|)$ (cf.~\cite[Proposition~1, Theorem~5]{GroheM09}).
% the number of edges are $\Theta(|V(G)|)$: for example, explicit
% constructions of bounded-degree expanders give such classes
% (cf.~\cite{GroheM09}).
Using the class of 3-regular graphs as $\mathcal{H}$ in the above theorem, we arrive at the following corollary.
\begin{corollary}\label{cor:psi_hard}
If there is an algorithm $\mathbb{A}$ and an arbitrary
function $f$ such that $\mathbb{A}$ correctly decides every instance
of \textsc{Partitioned Subgraph Isomorphism} with the smaller graph $H$ being 3-regular in time $f(H)n^{o(V(H)/\log V(H))}$, then ETH fails.
\end{corollary}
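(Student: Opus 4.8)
The plan is to obtain the corollary as an essentially immediate consequence of Marx's theorem stated above, the only work being to pin down the right class $\mathcal{H}$ of smaller graphs and to translate the running-time bound from the parameter $|V(H)|$ to the parameter $\tw(H)$.

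First I would fix $\mathcal{H}$ to be a \emph{recursively enumerable} family of $3$-regular graphs with $\tw(H) = \Theta(|V(H)|)$ for every $H \in \mathcal{H}$; such a family exists by~\cite{GroheM09} (one may, for instance, take explicit cubic expanders, which are straightforward to enumerate). Here one has to be careful: one cannot take $\mathcal{H}$ to be the class of \emph{all} cubic graphs, since infinitely many of them (circular ladders, say) have treewidth $3$, so ``unbounded treewidth over the class'' would be achieved only in a useless way; working inside a \emph{linear}-treewidth subfamily is the single point of the argument that requires attention. Since $|V(H)| \to \infty$ on $\mathcal{H}$ and $\tw(H) = \Theta(|V(H)|)$ there, $\mathcal{H}$ is a recursively enumerable class of unbounded treewidth, as required by Marx's theorem.

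The key step is then a change of parameter in the exponent. For every $H \in \mathcal{H}$ we have $|V(H)| = \Theta(\tw(H))$, hence $|V(H)|/\log|V(H)| = \Theta(\tw(H)/\log \tw(H))$, and therefore any running time of the form $f(H)\,n^{o(|V(H)|/\log|V(H)|)}$ on PSI instances whose smaller graph lies in $\mathcal{H}$ is also of the form $f(H)\,n^{o(\tw(H)/\log \tw(H))}$. Now suppose, towards a contradiction, that an algorithm $\mathbb{A}$ and a function $f$ as in the statement exist. Restricting $\mathbb{A}$ to those PSI instances whose smaller graph lies in $\mathcal{H} \subseteq \{\text{$3$-regular graphs}\}$ yields an algorithm that decides every such instance in time $f(H)\,n^{o(\tw(H)/\log \tw(H))}$; by Marx's theorem this implies that ETH fails, which is exactly the conclusion of the corollary.

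There is essentially no obstacle beyond this bookkeeping: the proof is a direct specialization of Marx's theorem. The only thing that genuinely needs to be stated --- and the only place a careless argument could slip --- is the choice of $\mathcal{H}$ as a recursively enumerable, linear-treewidth subfamily of cubic graphs (rather than all cubic graphs), after which substituting $\Theta(|V(H)|)$ for $\tw(H)$ in the exponent completes the reduction.
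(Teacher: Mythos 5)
Your proof is correct and takes essentially the same route as the paper: it specializes Marx's theorem to a recursively enumerable family of $3$-regular graphs with $\tw(H)=\Theta(|V(H)|)$ (whose existence the paper cites from Grohe--Marx) and translates the exponent via $|V(H)|/\log|V(H)|=\Theta(\tw(H)/\log\tw(H))$. You are in fact slightly more careful than the paper's one-line derivation, which loosely says to ``use the class of 3-regular graphs'' as $\mathcal{H}$; your remark that one must work inside the linear-treewidth subfamily (rather than all cubic graphs, infinitely many of which have bounded treewidth) is exactly the intended reading.
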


Our plan is to use this corollary. To this end we transform the (special) instances of PSI to instances of DSN.

\begin{constr}\label{con:hardness}
Let $(G,H, \psi)$ be an instance of PSI with $H$ 3-regular and denote \mbox{$k=|V(H)|$}. Note that then $|E(H)|=O(k)$.
We let $r=\left\lceil\sqrt{k}\right\rceil$.
We first compute labellings $\alpha\colon V(H) \to X$, $\beta\colon V(H) \to Y$, and $\gamma\colon E(H) \to Z$, 
where ${X=\{x_1, \ldots, x_{\max}\}}, {Y=\{y_1, \ldots, y_{\max}\}}$, and $Z=\{z_1, \ldots, z_{\max}\}$ are three new sets.
We want to keep the sets $X,Y,Z$ of size $O(r)$ while fulfilling the following constraints:
\begin{enumerate}[(i)]
 \item $\forall u,v \in V(H): (\alpha(u) \neq \alpha(v)) \vee (\beta(u) \neq \beta(v))$ \label{cond:each_vert_diff}
 \item $\forall \{u,v\} \in E(H): (\alpha(u) \neq \alpha(v)) \wedge (\beta(u) \neq \beta(v))$ \label{cond:each_adj_diff}
 \item $\forall e,f \in E(H), \forall u,v\in V(H): ((u \in e) \wedge (v \in f) \wedge (\alpha(u)=\alpha(v))) \implies (\gamma(e) \neq  \gamma(f))$.\label{cond:each_edge_diff}
\end{enumerate}
In other words, the pair $(\alpha(u),\beta(u))$ uniquely identifies the vertex $u$, adjacent vertices share no labels and both pairs $(\alpha(u),\gamma(\{u,v\}))$ and $(\alpha(v),\gamma(\{u,v\}))$ uniquely identify the edge $\{u,v\}$.

To obtain such labelling, first colour the vertices of $H$ greedily with colours $1, \ldots, 4$, denote $\eta$ the colouring and $A_1, \ldots, A_4$ the set of vertices of colour $1, \ldots, 4$, respectively. For every $i \in \{1, \ldots,4\}$, we split the set $A_i$ into sets $A_{i,1},\ldots, A_{i,a_i}$ such that for every $j \in \{1,\ldots, a_i-1\}$ the set $A_{i,j}$ is of size $r$ and the set $A_{i,a_i}$ is of size at most~$r$. Since $r=\left\lceil\sqrt{k}\right\rceil$ we know that there will at most $r$ sets of size $r$ and, thus, at most $r+4$ sets in total.
We assign to each nonempty set $A_{i,j}$ a unique label $x_\ell$ and let $\alpha(u)=x_\ell$ for every $u \in A_{i,j}$. Note that $|X| \le r+4$.

Next construct a graph $H'$ from $H$ by turning each $A_{i,j}$ into a clique. Since the degree of each vertex in $H$ is 3 and the size of each $A_{i,j}$ is at most $r$, the degree of each vertex in $H'$ is at most $r+2$. Hence we can colour the vertices of $H'$ greedily with colours $y_1, \ldots, y_{r+3}$ and let $\beta$ be the colouring.

Finally, we construct a multigraph $H''$ from $H'$ by contracting each clique $A_{i,j}$ to a single vertex. We keep multiple edges between two vertices if they are a result of the contraction, but we remove all loops. Note that the edges preserved are exactly the edges of $H$. Since the size of each $A_{i,j}$ is at most $r$ and $H$ is 3-regular, the maximum degree (counting the multiplicities of the edges) is at most $3r$. Therefore, the maximum degree in the line graph $L(H'')$ of $H''$ is at most $6r-2$. Thus, we can colour the edges of $H''$ greedily with colours $z_1, \ldots, z_{6r-1}$ and let $\gamma$ be the colouring.

Let us check that the labellings fulfill the constraints. First, if $\alpha(u)=\alpha(v)$, then $\{u,v\} \in E(H')$ and, thus, $\beta(u) \neq \beta(v)$. If $\{u,v\} \in E(H)$, then $\{u,v\} \subseteq A_{i,j}$ would imply that $u$ and $v$ are colored by the same colour by $\eta$ --- a contradiction. Hence, $\alpha(u) \neq \alpha(v)$ and, since $E(H) \subseteq E(H')$, we also have $\beta(u) \neq \beta(v)$. Finally, if $e=\{u,v'\}$, $f =\{u',v\}$, and $\alpha(u) = \alpha(v)$, then the edges $e$ and $f$ share a vertex in $H''$ and, thus, $\gamma(e) \neq  \gamma(f)$.

Note also that the labellings can be obtained in $O(|V(H)|^2)$ time.

Having the labellings at hand, we construct the instance $(G',R)$ of DSN as follows (refer to Figure~\ref{fig:reductionSchema} for an overview of the construction). We let $V(G')=V \cup W \cup X \cup Y \cup Z$, where $V=V(G)$, $W=\{w_{uv} \mid \{u,v\} \in E(G)\}$, and $X,Y,Z$ are the images of $\alpha, \beta, \gamma$ as defined previously. We let $T=V(R)=X \cup Y \cup Z$. Note that $q=|T|=O(r)=O(\sqrt{k})$.
We let $A(G')=A_V \cup A_W$, where $A_V=\left\{ \left( \alpha(\psi(u)),u),(u,\beta(\psi(u)) \right) \mid u \in V \right\}$ and $A_W=\{(u,w_{uv}),(v,w_{uv}),(w_{uv}, \gamma(\{\psi(u),\psi(v)\}))\mid \{u,v\} \in E(G)\}$. We assign unit weights to all arcs of $G'$.
Finally let $A(R)=A_Y\cup A_Z$, where $A_Y=\{(\alpha(u),\beta(u)) \mid u \in V(H)\}$ and $A_Z=\{(\alpha(u), \gamma(\{u,v\})), (\alpha(v),\gamma(\{u,v\})) \mid \{u,v\} \in E(H)\}$.

Let us stop here to discuss the size of $A(R)$. By Condition~(\ref{cond:each_vert_diff}) on the labellings we have $|A_Y|=|V(H)|$. By Condition~(\ref{cond:each_adj_diff}) we have $(\alpha(u), \gamma(\{u,v\})) \neq (\alpha(v),\gamma(\{u,v\}))$ for any $\{u,v\} \in E(H)$. Hence, by Condition~(\ref{cond:each_edge_diff}) the size of $A_Z$ is exactly $2|E(H)|$.
\end{constr}

\begin{figure}[bt]
\usetikzlibrary{calc}

\begin{tikzpicture}[node distance=1.4cm]
\tikzstyle{vertex}=[circle, fill=black,draw, inner sep=2pt]
\tikzstyle{terminal}=[fill=black, inner sep=4pt]
\tikzstyle{nonTerminal}=[circle, draw, inner sep=3pt]
\tikzstyle{edgeR}=[dashed, ->, thick]
\tikzstyle{edgeG}=[->,thick]
\tikzstyle{edgeH}=[thick]

\begin{scope}[local bounding box=H]
\node[label={180:$\alpha(u'), \beta(u')$}, label={90:$u'$}, vertex] (u) at (0,-.7) {};
\node[label={180:$\alpha(v'), \beta(v')$},label={270:$v'$},vertex, below of=u] (v) {};

\draw[edgeH] (u) to node[midway,right] {$\gamma(\{u',v'\})$}(v);
\end{scope}
\node at ($(H) - (0,3.2)$) (HH) {$H$};

\begin{scope}[local bounding box=G1, xshift=3cm]
\node[label={90:$u$}, vertex] (u) at (0,-.7) {};
\node[label={270:$v$},vertex, below of=u] (v) {};

\draw[edgeH] (u) to (v);
\end{scope}
\node at ($(HH -|G1)$) (G11) {$G$};

\begin{scope}[local bounding box=G, xshift=6cm]
%% u nodes
\node[label={180:$\alpha(\psi(u))$}, terminal] (uUP) {};
\node[label={180:$u$}, nonTerminal, below of=uUP] (uMID) {};
\node[label={180:$\beta(\psi(u))$}, terminal, below of=uMID] (uDOWN) {};

\draw[edgeG] (uUP) to (uMID);
\draw[edgeG] (uMID) to (uDOWN);
\draw[edgeR] (uUP) to[out=220, in=130] (uDOWN);

%% v nodes
\node[label={0:$\alpha(\psi(v))$}, terminal] (vUP)  at ($(uUP) + (3,0)$) {};
\node[label={0:$v$}, nonTerminal, below of=vUP] (vMID) {};
\node[label={0:$\beta(\psi(v))$}, terminal, below of=vMID] (vDOWN) {};

\draw[edgeG] (vUP) to (vMID);
\draw[edgeG] (vMID) to (vDOWN);
\draw[edgeR] (vUP) to[out=-40, in=50] (vDOWN);

%% uv nodes
\node[label={$w_{uv}$}, nonTerminal] at ($(uMID)!.5!(vMID) - (0,1)$) (uv) {};
\node[label={0:$\gamma(\{\psi(u),\psi(v)\})$}, terminal, below of=uv] (uvDOWN) {};
\draw[edgeG] (uv) to (uvDOWN);

\draw[edgeG] (uMID) to (uv);
\draw[edgeG] (vMID) to (uv);

\draw[edgeR] (uUP) to[out=-40] (uvDOWN);
\draw[edgeR] (vUP) to[out=220, in=40] (uvDOWN);
\end{scope}
 \node at ($(HH -|G)$) {$G'$ and $R$};
\end{tikzpicture}
\caption{\label{fig:reductionSchema}
An ilustration of Construction \ref{con:hardness}. Left is a pattern graph $H$, middle a host graph $G$ and right the produced graphs $G'$ and $R$ combined. We assume $\psi(u)=u'$ and $\psi(v)=v'$ here. On the right the terminals are depicted by full sqares and non-terminals by empty circles. Arcs in $G'$ are drawn solid, while the arcs of $R$ are dashed.
}
\end{figure}
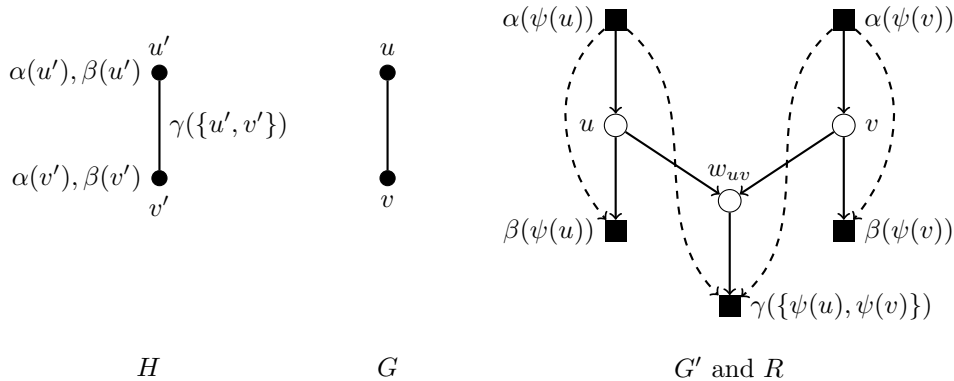

Next we show that the construction transform yes-instances of PSI to instances of DSN with bounded value of the optimum.

\begin{lemma}\label{lem:cor:psi_to_dsn}
If there is $\phi$ forming a solution to the instance $(G,H, \psi)$ of PSI, then there is subgraph $P$ of $G'$ forming a solution to the instance $(G',R)$ of DSN with cost $|A(P)| \le 2|V(H)|+3|E(H)|$.
\end{lemma}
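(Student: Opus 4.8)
## Proof plan for Lemma~\ref{lem:cor:psi_to_dsn}

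The plan is to turn the subgraph-isomorphism witness $\phi$ into an explicit subgraph $P$ of $G'$ by selecting, for every vertex and every edge of $H$, the corresponding ``gadget arcs'' of $G'$ that $\phi$ points at, and then to check that these arcs realize every request in $A(R)=A_Y\cup A_Z$.

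First I would define the arc set of $P$. For each $u\in V(H)$, the image $\phi(u)\in V(G)=V$ satisfies $\psi(\phi(u))=u$, so $G'$ contains the two arcs $(\alpha(u),\phi(u))$ and $(\phi(u),\beta(u))$ from $A_V$; put both into $P$. For each edge $\{u,v\}\in E(H)$ we have $\{\phi(u),\phi(v)\}\in E(G)$, hence the vertex $w_{\phi(u)\phi(v)}\in W$ exists and $G'$ contains the three arcs $(\phi(u),w_{\phi(u)\phi(v)})$, $(\phi(v),w_{\phi(u)\phi(v)})$ and $(w_{\phi(u)\phi(v)},\gamma(\{\psi(\phi(u)),\psi(\phi(v))\}))=(w_{\phi(u)\phi(v)},\gamma(\{u,v\}))$ from $A_W$; put all three into $P$. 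Let $P$ be the subgraph of $G'$ formed by exactly these arcs (and their endpoints).

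Next I would verify that $P$ is a solution, i.e.\ it contains a directed path for every arc of $R$. For a request $(\alpha(u),\beta(u))\in A_Y$ the two-arc path $\alpha(u)\to\phi(u)\to\beta(u)$ lies in $P$ by construction. For a request $(\alpha(u),\gamma(\{u,v\}))\in A_Z$ coming from an edge $\{u,v\}\in E(H)$, the three-arc path $\alpha(u)\to\phi(u)\to w_{\phi(u)\phi(v)}\to\gamma(\{u,v\})$ lies in $P$; the symmetric request $(\alpha(v),\gamma(\{u,v\}))$ is handled by $\alpha(v)\to\phi(v)\to w_{\phi(u)\phi(v)}\to\gamma(\{u,v\})$. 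This covers all of $A(R)$.

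Finally I would bound the cost. All arc weights are one, so $|A(P)|$ is just the number of distinct arcs selected. The $A_V$-arcs are determined by $\phi(u)$, and since $\phi$ is injective the vertices $\phi(u)$ are pairwise distinct, so the selection contributes at most $2|V(H)|$ arcs. For the $A_W$-part, distinct edges $\{u,v\}$ of $H$ yield distinct pairs $\{\phi(u),\phi(v)\}$ (again by injectivity of $\phi$, as $H$ is a subgraph of $G$ under $\phi$), hence distinct vertices $w_{\phi(u)\phi(v)}$ and at most $3|E(H)|$ further arcs; note the two arcs $(\phi(u),w_{\cdot}),(\phi(v),w_{\cdot})$ entering a given $w$ are distinct since $\phi(u)\neq\phi(v)$ as $\{u,v\}\in E(H)$ and $H$ is simple. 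Altogether $|A(P)|\le 2|V(H)|+3|E(H)|$, as claimed. I do not anticipate a genuine obstacle here; the only point requiring a little care is making sure we count \emph{distinct} arcs, which is exactly where injectivity of $\phi$ is used, and that the labels behave as in Construction~\ref{con:hardness} (in particular $\psi\circ\phi=\mathrm{id}$, so $\gamma(\{\psi(\phi(u)),\psi(\phi(v))\})=\gamma(\{u,v\})$).
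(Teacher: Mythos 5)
Your proposal is correct and follows essentially the same route as the paper: the paper defines $P$ as the subgraph induced on the terminals together with $\{\phi(v)\mid v\in V(H)\}$ and $\{w_{\phi(u)\phi(v)}\mid \{u,v\}\in E(H)\}$, which contains exactly the arcs you select explicitly, and it verifies the same length-2 and length-3 request paths with the same arc count. The only cosmetic difference is that the paper bounds the number of arcs by incidence at the gadget vertices rather than by enumerating two arcs per vertex and three per edge of $H$.
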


\begin{proof}
Let $\phi$ be a solution to the instance $(G,H, \psi)$.
Since $\phi$ is a solution, we know that  $\{\phi(u),\phi(v)\} \in E(G)$ whenever $\{u,v\} \in E(H)$.
Consider the subgraph $P=G'[V_\phi]$ of $G'$ induced by $V_\phi=X \cup Y \cup Z \cup V' \cup W'$, where $V' = \{\phi(v) \mid v \in V(H)\}$ and $W'=\{w_{\phi(u)\phi(v)}\mid \{u,v\} \in E(H)\}$. Obviously $|V'|=|V(H)|$ and $|W'|=|E(H)|$. 

Since each arc in $A_W$ is incident to some vertex in $W$ and each vertex in $W$ is incident to exactly 3 such arcs, $P$ contains at most $3|E(H)|$ arcs from $A_W$. Similarly, since each arc in $A_V$ is incident to some vertex in $V$ and each vertex in $V$ is incident to exactly 2 such arcs, $P$ contains at most $2|V(H)|$ arcs from $A_V$. Thus, $P$ contains at most $2|V(H)|+3|E(H)|$ arcs in total.

We want to show for each $(s,t) \in A(R)$ that there is a directed path from $s$ to $t$ in $P$. Indeed, if $(x,y) \in A_Y$, then $x=\alpha(u)$ and $y=\beta(u)$ for some $u \in V(H)$ and $\alpha(u),\phi(u),\beta(u)=\alpha(\psi(\phi(u))),\phi(u),\beta(\psi(\phi(u)))$ is a path of length 2 from $x$ to $y$ in $P$. If $(x,z) \in A_Z$, then $x=\alpha(u)$ and $z=\gamma(\{u,v\})$ for some $\{u,v\} \in E(H)$ and $\alpha(u),\phi(u),w_{\phi(u)\phi(v)},\gamma(\{u,v\})$ is a path of length 3 from $x$ to $z$ in $P$.
This finishes the proof.
\end{proof}

Next we show that the value of the optimum of the instances of DSN produced by the construction can only be appropriately bounded if we started with a yes-instance of PSI.

\begin{lemma}\appmark\label{lem:cor:dsn_to_psi}
If there is subgraph $P$ of $G'$ forming a solution to the instance $(G',R)$ of DSN with cost $|A(P)| \le 2|V(H)|+3|E(H)|$, then there is $\phi$ forming a solution to the instance $(G,H, \psi)$ of PSI.
\end{lemma}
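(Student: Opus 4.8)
The plan is to establish the converse direction of the reduction: from a cheap DSN solution $P$ we must recover an injective map $\phi$ witnessing that $H$ is a subgraph of $G$. The starting observation is a counting argument on $|A(P)|$. Since $P$ must connect each $(x,y)\in A_Y$, it must contain at least one path from $x=\alpha(u)$ to $y=\beta(u)$; by the structure of $A_V$ the only arcs out of a vertex $x_\ell\in X$ go to some vertex $v\in V$ with $\alpha(\psi(v))=x_\ell$, and the only arcs into $y_\ell\in Y$ come from some $v\in V$ with $\beta(\psi(v))=y_\ell$. So any $x$-$y$ path in $G'$ that uses only $A_V$-arcs has length exactly $2$ and passes through a single vertex $v\in V$ with $\psi(v)=u$ (using Condition~(\ref{cond:each_vert_diff}) to see that the pair $(\alpha(u),\beta(u))$ pins $u$ down). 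Similarly, a path realizing $(\alpha(u),\gamma(\{u,v\}))\in A_Z$ of minimal type goes $\alpha(u)\to v'\to w_{v'v''}\to\gamma(\{u,v\})$ for some edge $\{v',v''\}\in E(G)$. The first real step is therefore to argue that $P$ cannot afford anything but these ``canonical'' short paths: one shows that the arcs incident to $X$, to $Y$, to $W$, and to $Z$ are essentially disjoint resources, so that the demands in $A_Y$ already force $2|V(H)|$ distinct $A_V$-arcs and the demands in $A_Z$ force $3|E(H)|$ distinct $A_W$-arcs, and the budget $2|V(H)|+3|E(H)|$ is then tight.

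Concretely, I would count as follows. Each terminal $y_\ell\in Y$ has in-degree (in $P$) at least $1$ because it is the head of some request in $A_Y$, and every in-arc of $y_\ell$ lies in $A_V$; moreover two different requests $(\alpha(u),\beta(u))$ and $(\alpha(u'),\beta(u'))$ with the same second coordinate $\beta(u)=\beta(u')=y_\ell$ have $\alpha(u)\ne\alpha(u')$, and the $\alpha$-label of the unique $V$-vertex feeding an in-arc of $y_\ell$ determines which request it serves — so $P$ needs at least one in-arc of $y_\ell$ per distinct request ending there, hence at least one $V$-vertex $v$ with $\psi(\phi)=u$, and also the matching arc $\alpha(u)\to v$. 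Summing over all of $Y$ this yields $2|V(H)|$ forced $A_V$-arcs, each going through a distinct $V$-vertex (distinctness because a single $v$ carries only the two arcs $\alpha(\psi(v))\to v\to\beta(\psi(v))$, serving the single request $(\alpha(\psi(v)),\beta(\psi(v)))$). An analogous argument on $Z$: each $z_\ell$ is the head of one or two requests, each served by a path through a distinct $W$-vertex $w_{v'v''}$ contributing the $3$ arcs $v'\to w_{v'v''}$, $v''\to w_{v'v''}$, $w_{v'v''}\to z_\ell$; here I use Condition~(\ref{cond:each_edge_diff}) to see that different edges of $H$ mapping to the same $z_\ell$ are separated and hence need different $W$-vertices, giving $3|E(H)|$ forced $A_W$-arcs. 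Since the two families of forced arcs are disjoint and already sum to the budget, $P$ contains exactly these arcs and nothing else.

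With $P$ pinned down, I define $\phi$: for each $u\in V(H)$ there is, by the above, a vertex $v_u\in V$ with $\psi(v_u)=u$ lying on the (unique, length-$2$) path realizing $(\alpha(u),\beta(u))$; set $\phi(u)=v_u$. Injectivity: if $\phi(u)=\phi(u')=v$, then $\psi(v)=u=u'$ since $\psi$ is a function. The map respects $\psi$ by construction. It remains to check $\{\phi(u),\phi(v)\}\in E(G)$ for every $\{u,v\}\in E(H)$. For such an edge, the requests $(\alpha(u),\gamma(\{u,v\}))$ and $(\alpha(v),\gamma(\{u,v\}))$ in $A_Z$ are realized in $P$ by paths through a common $W$-vertex $w_{v'v''}$ — one needs to check they use the \emph{same} $W$-vertex, which follows from the tight count (only one $W$-vertex is ``paid for'' for the edge $\{u,v\}$) together with the fact that the incoming arc from $X$ forces the $V$-endpoint: the path for $(\alpha(u),\cdot)$ must pass through a $V$-vertex with $\alpha$-label $\alpha(u)$ adjacent (in $G'$, hence via $A_W$) to $w_{v'v''}$, and this vertex must be $\phi(u)$ because $\{v',v''\}=\{\phi(u),\phi(v)\}$ by Condition~(\ref{cond:each_adj_diff}) (which guarantees $\alpha(u)\ne\alpha(v)$ so the two $V$-neighbors of $w_{v'v''}$ are distinguishable) and the uniqueness already established. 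Then $w_{v'v''}\in W$ exists only if $\{v',v''\}=\{\phi(u),\phi(v)\}\in E(G)$, which is exactly what we want.

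I expect the main obstacle to be the rigorous bookkeeping in the counting step — specifically, proving that the forced $A_V$-arcs and forced $A_W$-arcs are genuinely distinct across all requests and that no request can be served more cheaply by a path that ``reuses'' a vertex serving another request or that wanders through $Z$-, $Y$-terminals or other parts of $G'$. The labelling Conditions~(\ref{cond:each_vert_diff})--(\ref{cond:each_edge_diff}) are precisely what rules out such sharing, so the argument is really about carefully invoking each condition at the right place; once the count is tight and $P$ is exactly the canonical union of paths, extracting $\phi$ and verifying the edge condition is routine.
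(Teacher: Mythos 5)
Your overall strategy is the same as the paper's (count forced arcs in $A_V$ and $A_W$ separately, conclude the budget is tight, extract $\phi$ from the unique $V$-vertex per request in $A_Y$, and use a shared $W$-vertex per $H$-edge for the adjacency check), but the crucial counting step on the $W$-side is wrong as written, and this is exactly the step everything else hangs on. You claim that each request in $A_Z$ is ``served by a path through a distinct $W$-vertex contributing the $3$ arcs''. Neither half is correct: the two requests $(\alpha(u),\gamma(\{u,v\}))$ and $(\alpha(v),\gamma(\{u,v\}))$ belonging to one edge $\{u,v\}\in E(H)$ may be served by the \emph{same} $W$-vertex (that is the intended solution, and is why the budget is $3|E(H)|$ rather than $6|E(H)|$), and a $W$-vertex that serves only \emph{one} request forces only $2$ arcs of $A_W$ (its out-arc to $Z$ and one in-arc from $V$), not $3$. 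If you repair the premise to ``one $W$-vertex per $H$-edge'' you only get a lower bound of $2|E(H)|$, the budget is no longer tight, and then neither ``$P$ consists exactly of the canonical arcs'' nor ``only one $W$-vertex is paid for per edge'' follows — yet your final paragraph invokes precisely the latter to force the two paths for an $H$-edge through a common $W$-vertex. (A smaller slip with the same flavour: a label $z_\ell$ can be the head of many requests, since many pairwise non-adjacent edges of $H$ share a $\gamma$-label; it is the $W$-vertices, not the $Z$-terminals, that lie on at most two required paths.)

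The paper closes this gap with a short case analysis that you need: by Condition~(\ref{cond:each_edge_diff}) a vertex $w_{ab}\in W$ can lie only on the (at most two) required paths associated with the single edge $\{\psi(a),\psi(b)\}$ of $H$, and it contributes at least $2$ arcs of $A_W$ if it lies on one such path and exactly $3$ if it lies on two; hence $P$ contains at least $\tfrac32|A_Z|=3|E(H)|$ arcs of $A_W$ (serving an edge's two requests by two different $W$-vertices would cost $4$). Only with this accounting is the total at least $2|V(H)|+3|E(H)|$, so equality forces every used $W$-vertex to carry all three of its arcs and to serve both requests of its edge — which is the ``common $W$-vertex'' fact your adjacency argument needs — and forces every used $V$-vertex to carry both its arcs, giving the uniqueness of $\phi(u)$. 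With that substitution your extraction of $\phi$, the injectivity via $\psi\circ\phi=\mathrm{id}$, and the use of Conditions~(\ref{cond:each_vert_diff})--(\ref{cond:each_adj_diff}) to identify the $V$-endpoints of the two paths as $\phi(u)$ and $\phi(v)$ are correct and match the paper's proof.
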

\toappendix{
\begin{proof}[Proof of Lemma~\ref{lem:cor:dsn_to_psi}]
Let $P$ be a solution to the instance $(G',R)$ of DSN with cost $|A(P)| \le 2|V(H)|+3|E(H)|$. Without loss of generality let us assume that $P$ contains no isolated vertices.

Note that any path from a vertex in $X$ to a vertex in $Y$ in $G'$ is of length exactly 2 with the middle vertex in $V$. Moreover, each vertex $v$ in $V$ only lies on a path from $\alpha(\psi(v))$ to $\beta(\psi(v))$. Hence, for each $(x,y) \in A(R)$ with $x \in X$ and $y \in Y$ the vertex $v_{xy}$ on the $x$-$y$-path in $P$ is not on a $x'$-$y'$-path in $P$ for any $(x',y') \in A(R)$ with $x' \in X$, $y' \in Y$, and $(x,y)\neq (x',y')$. Therefore $P$ contains at least $2|A_Y|$ arcs from $A_V$, which is at least $2|V(H)|$. 

Similarly, any path from a vertex in $X$ to a vertex in $Z$ in $G'$ is of length exactly 3 with the second vertex in $V$ and the third in $W$. Moreover, each vertex $w_{uv}$ in $W$ only lies on  paths from $\alpha(\psi(u))$ to $\gamma(\{\psi(u),\psi(v)\})$ and from $\alpha(\psi(v))$ to $\gamma(\{\psi(u),\psi(v)\})$. Hence, each vertex in $W$ is on at most two paths in $P$ required by $R$ and we have to include in $P$ 2 arcs (3 arcs) from $A_W$ incident to it, if it is on 1 such path (2 such paths), respectively. Hence, $P$ contains at least $\frac32|A_Z|$ arcs from $A_W$, which is at least $3|E(H)|$.

Thus, the budget is tight, and $P$ includes exactly the minimum number of arcs from each group. In particular, if $P$ contains some vertex $w_{uv}$ from $W$, then it contains all 3 arcs incident to this vertex. Similarly, if $P$ contains a vertex $v \in V$, then $P$ must contain also some arc from $A_V$ incident to it and, thus, both arcs incident to it, as otherwise there would be too many arcs from $A_V$ in $P$. Then it is the only vertex in $\psi^{-1}(\psi(v))$ in $P$.

For each $v \in V(H)$ let $\phi(v)$ be the unique vertex in $\psi^{-1}(v)$ contained in $P$, that is the unique vertex on a path from $\alpha(v)$ to $\beta(v)$ in $P$. We claim that $\phi$ is a solution to the instance $(G,H, \psi)$ of PSI. By the way we constructed $\phi$ we get that $\psi \circ\phi$ is the identity and, hence, $\phi$ is necessarily injective.

Let now $\{u,v\} \in E(H)$. Since $A(R)$ contains the arcs $(\alpha(u), \gamma(\{u,v\})), (\alpha(v),\gamma(\{u,v\}))$, there must be a vertex $w_{u'v'}$ in $V(P)\cap W$ such that $P$ contains the path $\alpha(u),u',w_{u'v'}, \gamma(\{u,v\})$ and the path $\alpha(v),v',w_{u'v'}, \gamma(\{u,v\})$. It follows that $u' =\phi(u)$, $v'=\phi(v)$ and $\{\phi(u),\phi(v)\}=\{u',v'\}$ is an edge of $G$ as required.
This finishes the proof.
\end{proof}
}

\lv{
Now we are ready to give the proof of Theorem~\ref{thm:gen_hard}.
}

\begin{proof}[Proof of Theorem~\ref{thm:gen_hard}]
Let $\mathbb{A}$ be an algorithm that correctly solves DSN (on general graphs) in time $f(q)n^{o(q^2/\log q)}$ for some function $f$.
Let us construct an algorithm $\mathbb{B}$ for PSI with the smaller graph $H$ being 3-regular as follows:
Let $(G,H, \psi)$ be an instance of PSI with $H$ 3-regular. We use Construction~\ref{con:hardness} to construct the instance $(G',R)$ of DSN.
Then run $\mathbb{A}$ on $(G',R)$ and return yes if and only if the cost of the obtained solution $P$ is $|A(P)| \le 2|V(H)|+3|E(H)|$.
\lv{Otherwise return no.}
The answer of $\mathbb{B}$ is correct by Lemmata~\ref{lem:cor:psi_to_dsn} and~\ref{lem:cor:dsn_to_psi}.

Let us analyze the running time of $\mathbb{B}$. Let us denote $k=|V(H)|$ and $n=|V(G)|$. We may assume that $k \le n$ as otherwise we can immediately answer no. The labellings can be obtained in $O(k^2)$ time. Graph $G$ has at most $O(n^2)$ edges and the graphs $G'$ and $R$ can be constructed in linear time in the number of vertices and edges of the graphs $G$ and $H$, respectively. That is, Construction~\ref{con:hardness} can be performed in $O(n^2)$ time and in particular $G'$ has $O(n^2)$ vertices. However, by the construction, the number $q$ of vertices of graph $R$ is $O(\sqrt{k})$. Now, $\mathbb{A}$ runs on $(G',R)$ in time $f(q)|V(G')|^{o(q^2/\log q)}=f'(\sqrt{k})n^{o((\sqrt{k})^2/\log \sqrt{k})}=f''(k)n^{o(k/\log k)}$ for some functions $f,f'$, and $f''$. But then the whole $\mathbb{B}$ runs in $f''(k)n^{o(k/\log k)}$ time and ETH fails by Corollary~\ref{cor:psi_hard}, finishing the proof.
\end{proof}

\section{Conclusions}
Our results show that we can solve DSN in time $n^{\OhOp{q}}$ when the input directed graph is embeddable on a fixed genus surface\lv{; the constant hidden in the $O$-notation indeed depends on the genus}.
However, for general graphs it is unlikely to obtain even an algorithm running in time $n^{o(q^2/\log(q))}$.
It would be interesting to see what happens for the graph classes that are somewhere in between.
For example, it is not difficult to show that the graph $H'$ that we obtain in Section~\ref{sec:DSN_on_surface} has at most $\OhOp{q^3}$ vertices and hence the size of the largest grid minor of $H'$ is of size at most $\OhOp{q^{3/2}}\times \OhOp{q^{3/2}}$.
Therefore, with a careful modification of our approach, one can show that there is an $n^{\OhOp{q^{3/2}}}$ time algorithm for DSN when the input graph excludes a fixed minor. 
However, it remains open whether the running time $n^{\OhOp{q^{3/2}}}$ is asymptotically optimal or whether it is possible to design an $n^{\OhOp{q}}$ time algorithm for DSN in this case.

%%
%% Bibliography
%%

%% Please use bibtex, 

\bibliography{ref}

%\clearpage
%\appendix
%
%\section{Appendix}
%\appendixText
%

\end{document}